\numberwithin{equation}{section}
\newtheorem{prop}{Proposition}[section]
\newtheorem{theo}[prop]{Theorem}
\newtheorem{lemm}[prop]{Lemma}
\newtheorem{defi}[prop]{Definition}
\newcommand{\vv}{\vert\vert}
\title[Growing solutions for the Klein-Gordon equation on Kerr]{Exponentially growing finite energy solutions for the Klein-Gordon equation on sub-extremal Kerr spacetimes}
\author{Yakov Shlapentokh-Rothman}
\thanks{This work was partially supported by NSF grant DMS-0943787.}
\address{Department of Mathematics, MIT, Cambridge, MA 02139, USA}
\email{yakovsr@math.mit.edu}
\begin{document}
\begin{abstract}
For any sub-extremal Kerr spacetime with non-zero angular momentum, we find an open family of non-zero masses for which there exist smooth, finite energy, and exponentially growing solutions to the corresponding Klein-Gordon equation. If desired, for any non-zero integer $m$, an exponentially growing solution can be found with mass arbitrarily close to $\frac{\left|am\right|}{2Mr_+}$. In addition to its direct relevance for the stability of Kerr as a solution to the Einstein-Klein-Gordon system, our result provides the first rigorous construction of a superradiant instability. Finally, we note that this linear instability for the Klein-Gordon equation contrasts strongly with recent work establishing linear stability for the wave equation.
\end{abstract}
\maketitle

\tableofcontents
\section{Introduction}
The Kerr spacetime $(\mathcal{M},g_{a,M})$ is a two parameter family of asymptotically flat, stationary, and axisymmetric solutions to the vacuum Einstein equations Ric$(g) = 0$. As a precursor to establishing the conjectured non-linear stability of Kerr, there has been much study of the linear stability problem for various equations on a fixed Kerr background. In this paper we will study the Klein-Gordon equation:
\[\left(\Box_g - \mu^2\right)\psi = 0.\]
Here $\mu \geq 0$ is the mass of the scalar field $\psi$. In contrast to previous works on the wave equation ($\mu = 0$) showing linear stability, we will produce an open family of masses for which the Klein-Gordon equation exhibits linear \emph{instability}, i.e.~for these masses there exists smooth, finite energy solutions which grow exponentially in time.

Recall that in the domain of outer communication we can parameterize the Kerr spacetime with Boyer-Lindquist coordinates $(t,r,\theta,\phi) \in \mathbb{R} \times (M + \sqrt{M^2-a^2},\infty) \times \mathbb{S}^2$ where the metric takes the form
\[g_{a,M} = -\left(1-\frac{2Mr}{\rho^2}\right)dt^2 - \frac{4Mar\sin^2\theta}{\rho^2}dtd\phi + \frac{\rho^2}{\Delta}dr^2 + \rho^2 d\theta^2 + \sin^2\theta\frac{\Pi}{\rho^2}d\phi^2,\]
\[r_{\pm} := M \pm \sqrt{M^2-a^2},\]
\[\Delta := r^2 - 2Mr + a^2 = (r-r_+)(r-r_-),\]
\[\rho^2 := r^2 + a^2\cos^2\theta,\]
\[\Pi := (r^2+a^2)^2 - a^2\sin^2\theta\Delta.\]
Our main result is
\begin{theo}\label{mainResult}Fix a Kerr spacetime $(\mathcal{M},g_{a,M})$ with $M > 0$ and $0 < |a| < M$. Then there exists an open family of masses $\mu$ with $\epsilon_{\mu} > 0$ and a non-zero, smooth, and finite energy solution $\psi$ to the corresponding Klein-Gordon equation
\[\left(\Box_g-\mu^2\right)\psi = 0\]
such that for every $(t,r,\theta,\phi) \in \mathbb{R}\times(M+\sqrt{M^2-a^2},\infty)\times\mathbb{S}^2$
\begin{equation}\label{growth}
e^{\epsilon_{\mu}t}\left|\partial^{\alpha}\psi(0,r,\theta,\phi)\right| \lesssim_{\alpha} \left|\partial^{\alpha}\psi(t,r,\theta,\phi)\right| \text{ for all multi-indices }\alpha.
\end{equation}
These statements should be understood with respect to Boyer-Lindquist coordinates. For every non-zero integer $m$, $\mu$ can be chosen arbitrarily close to $\frac{|am|}{2Mr_+}$. In particular, $\mu$ can be made arbitrarily small as $a\to 0$.
\end{theo}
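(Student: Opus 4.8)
The plan is to construct the growing solution as a single separated mode
\[
\psi = e^{-i\omega t}\, e^{im\phi}\, S(\theta)\, R(r), \qquad \operatorname{Im}(\omega) > 0,
\]
and to recover all assertions of the theorem from the structure of such a mode. Since $e^{-i\omega t} = e^{-i\operatorname{Re}(\omega)t}\, e^{\operatorname{Im}(\omega)t}$, any nonzero mode with $\operatorname{Im}(\omega) = \epsilon_{\mu} > 0$ that is regular across the future horizon and decays exponentially as $r \to \infty$ automatically produces a smooth, finite-energy solution obeying the stated pointwise lower bound; the multi-index estimate is immediate from the $t$-dependence. First I would substitute this ansatz into $(\Box_g - \mu^2)\psi = 0$ in Boyer--Lindquist coordinates and separate variables. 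The angular factor solves the spheroidal eigenvalue problem with parameter $a^2(\mu^2 - \omega^2)$, furnishing a smooth eigenfunction $S$ and a separation constant $\lambda = \lambda(\omega,\mu)$ depending analytically on $\omega$; the radial factor then solves a second-order ODE on $(r_+,\infty)$ with a regular singular point at the horizon $r = r_+$ (where $\Delta = (r-r_+)(r-r_-)$ vanishes) and an irregular singular point at infinity.

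The mode condition is then a matching problem for this radial ODE. For $\operatorname{Im}(\omega) > 0$ and $\mu$ in the relevant range there is, up to scaling, a unique solution $R_{\mathrm{hor}}$ regular (ingoing) at $r = r_+$ and a unique solution $R_{\infty}$ decaying exponentially at infinity, and a mode exists precisely when their Wronskian $\mathcal{W}(\omega,\mu)$ vanishes. I would verify that $\mathcal{W}$ is holomorphic in $\omega$ on the relevant portion of the upper half-plane, so that the problem becomes one of locating a zero of a holomorphic function. The physical expectation pinpointing where to look is superradiance: amplification occurs in the regime $0 < \operatorname{Re}(\omega) < m\Omega_H$, where $\Omega_H = \tfrac{a}{2Mr_+}$ is the angular velocity of the horizon, while a genuine bound state requires the trapping condition $\operatorname{Re}(\omega^2) < \mu^2$. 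These two conditions are simultaneously compatible, and only marginally so, exactly when $\mu$ is close to $m\Omega_H = \tfrac{am}{2Mr_+}$, which explains the target mass and why it tends to $0$ with $a$.

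The heart of the argument is to produce the zero of $\mathcal{W}$ with $\operatorname{Im}(\omega) > 0$. I would first exhibit a threshold: a real frequency $\omega = m\Omega_H$ together with a real mass $\mu_0$ slightly exceeding $m\Omega_H$ for which $\mathcal{W}(m\Omega_H,\mu_0) = 0$, i.e.\ a real mode sitting exactly at the superradiant bound. Such a threshold mode is cleanest to locate in the limit $a \to 0$ with $\mu \to m\Omega_H$: after rescaling, the radial equation degenerates to the Coulomb/hydrogenic radial problem, whose bound-state eigenvalues are explicit, and a shooting or continuity argument in $\mu$ then yields $\mu_0$. I would then perturb off the threshold, solving $\mathcal{W}(\omega,\mu) = 0$ for $\omega$ as a function of $\mu$ by the implicit function theorem and computing the leading correction to $\operatorname{Im}(\omega)$. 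The sign of this correction is governed by the conserved Wronskian current of the radial equation: evaluating it at the horizon and at infinity yields the superradiance identity, and because $\operatorname{Re}(\omega)$ lies just below $m\Omega_H$ the horizon flux is amplifying, forcing $\operatorname{Im}(\omega) > 0$. Finally, since $\operatorname{Im}(\omega) > 0$ is an open condition and $\mathcal{W}$ is continuous in $\mu$, the zero persists with positive imaginary part for $\mu$ in an open neighborhood, delivering the open family of masses.

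I expect the main obstacle to be precisely this sign computation. The non-self-adjointness of the problem is essential here --- the horizon boundary condition is dissipative, so no variational or spectral-theorem argument gives the eigenvalue directly, and it is exactly this feature that permits complex $\omega$. The difficulty is that $\mathcal{W}$ depends on $\omega$ explicitly, through the spheroidal eigenvalue $\lambda(\omega,\mu)$, and through the $\omega$-dependent potential, so converting the heuristic superradiance argument into a rigorous, sign-definite perturbative statement requires controlling all of these dependencies with quantitative error bounds in the small-$a$ regime (or, alternatively, replacing the perturbative step by a robust winding-number/argument-principle count along a contour enclosing the expected zero). Ruling out the competing possibility that the threshold zero remains on the real axis, or migrates into the lower half-plane, is the crux. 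The remaining claims --- smoothness of $\psi$ and finiteness of its energy --- then follow from elliptic regularity for the separated ODEs, analyticity of the radial solution away from the singular points, regularity of $R_{\mathrm{hor}}$ in horizon-penetrating coordinates, and the exponential decay of $R_{\infty}$ at infinity.
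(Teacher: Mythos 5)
Your overall frame---separated modes, a holomorphic Wronskian/connection coefficient whose zeros are modes, a threshold real mode at the superradiant bound, then perturbation---matches the paper's strategy, but both of your key steps have genuine gaps that the paper's actual argument is specifically designed to avoid. First, your construction of the threshold mode via the $a\to 0$ hydrogenic limit plus shooting is (at best) restricted to small $a$, whereas the theorem is claimed for \emph{every} sub-extremal $a\neq 0$; moreover ``a shooting or continuity argument in $\mu$ then yields $\mu_0$'' is not an argument. The paper's real-mode construction is variational and works uniformly in $a$: the crucial structural observation---which your proposal misses---is that exactly at the threshold $am-2Mr_+\omega=0$ the indicial exponent $\xi$ at the horizon vanishes, so finiteness of the energy functional $\mathcal{L}_{\mu}(f)=\int_{r_+}^{\infty}\left(\Delta\left|f'\right|^2+\frac{V_{\mu}}{\Delta}\left|f\right|^2\right)dr$ \emph{automatically} enforces the correct horizon boundary condition; the problem becomes effectively self-adjoint there (contrary to your blanket claim that no variational argument is available), and one minimizes regularized functionals $\mathcal{L}_{\mu}^{(\epsilon)}$, obtains eigenvalues $\nu_{\mu}\leq 0$ increasing and continuous in $\mu^2$, and drives $\nu_{\mu}$ to zero at $\mu_0=\sup\mathscr{A}$.

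Second, and more seriously, your perturbation step parametrizes by $\mu$ and solves $\mathcal{W}(\omega,\mu)=0$ for $\omega(\mu)$, which forces you to compute the sign of $\operatorname{Im}\omega$---a step you yourself flag as ``the crux'' and then leave unresolved (deferring to unproven quantitative small-$a$ error bounds or an unspecified winding-number count). The paper circumvents this sign computation entirely by a different use of the implicit function theorem: treat $\epsilon=\operatorname{Im}\omega$ as the \emph{parameter} and solve the two real equations $A(\omega_R+i\epsilon,\mu)=0$ for the two real unknowns $(\omega_R,\mu)$. Then modes with $\operatorname{Im}\omega=\epsilon>0$ exist by construction, and no leading-order correction to $\operatorname{Im}\omega$ is ever needed. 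The required Jacobian nondegeneracy is extracted from the conserved current $Q_T=\operatorname{Im}\left(\Delta R'\overline{R}\right)$, which yields the identity $am-2Mr_+\omega_R=2\sqrt{\mu^2-\omega_R^2}\,\operatorname{Im}\left(A\overline{B}\right)$; differentiating at the threshold zero of $A$ reduces the determinant condition to $\frac{\partial A}{\partial\mu}\neq 0$, which is proved by contradiction using positivity of $\int\left(2\mu r^2+\frac{\partial\lambda}{\partial\mu}\right)\left|R\right|^2dr$ (with $\frac{\partial\lambda}{\partial\mu}>0$ from the angular problem). Superradiance of the unstable modes and the mass loss $\frac{\partial\mu}{\partial\epsilon}(0)<0$ then come out as \emph{consequences} (via current and eigenvalue-perturbation identities), not as inputs. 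So your proposal correctly identifies where the difficulty lies but does not supply the two ideas that actually close it: the threshold variational structure and the $(\omega_R,\mu)$-versus-$\epsilon$ formulation of the implicit function theorem.
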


Remark: For convenience, we have stated our theorem in Boyer-Lindquist coordinates; however, these coordinates break down on the future event horizon $\mathcal{H}^+$ (see section \ref{coord}). Nevertheless, it will be easy to see that along $\mathcal{H}^+$ the solutions constructed are also exponentially growing with respect to the regular $t^*$ coordinate; see the discussion in section \ref{boundaryConditions}.

Theorem \ref{mainResult} may suggest that the Kerr spacetime is non-linearly \emph{unstable} as a solution to the Einstein-Klein-Gordon system.\footnote{The Einstein-Klein-Gordon system for a spacetime $(\mathcal{M},g)$ and massive scalar field $\psi$ is \[\text{Ric}_{\alpha\beta}(g) - \frac{1}{2}\text{R}(g)g_{\alpha\beta} = 8\pi\mathbf{T}_{\alpha\beta}\left(g,\psi\right),\]
\[\left(\Box_g-\mu^2\right)\psi = 0.\]
Here $R(g)$ is the scalar curvature, and $\mathbf{T}_{\alpha\beta}$ is the energy-momentum tensor (\ref{energyMom}).} Additionally, our result provides the first rigorous construction of a superradiant instability. Informally put, superradiance can occur in a black hole spacetime when there does not exist a globally defined Killing vector field which is both timelike or null at infinity and timelike or null on the horizon. For such spacetimes ``energy'' may radiate out of the black hole and, depending on the particular dynamics under consideration, lead to a \emph{superradiant} instability.

To make these ideas more concrete, let's focus on the Klein-Gordon equation and begin by briefly recalling the energy-momentum tensor formalism (see \cite{n3} for a proper introduction). Let $g$ denote the (Lorentzian) metric on our spacetime and $\nabla$ denote covariant differentiation. For any function $\psi$ we define the energy-momentum tensor
\begin{equation}\label{energyMom}
\mathbf{T}_{\alpha\beta} := \text{Re}\left(\nabla_{\alpha}\psi\overline{\nabla_{\beta}\psi}\right) - \frac{1}{2}g_{\alpha\beta}\left(\left|\nabla\psi\right|^2 + \mu^2\left|\psi\right|^2\right).
\end{equation}
For any vector field $X$ we define a corresponding $1$-form, called a ``current,'' by
\begin{equation}\label{current}
\mathbf{J}^X_{\alpha} := \mathbf{T}_{\alpha\beta}X^{\beta}.
\end{equation}
The key identity is
\begin{equation}\label{divergence}
\nabla^{\alpha}\mathbf{J}^X_{\alpha} = \text{Re}\left(\left(\nabla^{\alpha}\nabla_{\alpha}\psi-\mu^2\psi\right)\overline{\left(X\psi\right)}\right) + \frac{1}{2}\mathbf{T}_{\alpha\beta}\pi^{\alpha\beta}.
\end{equation}
Here $\pi$ denotes the deformation tensor of $X$:
\[\pi^{\alpha\beta} := \nabla^{\alpha}X^{\beta} + \nabla^{\beta}X^{\alpha}.\]
This vanishes if and only if $X$ is Killing. In particular, if $\psi$ solves the Klein-Gordon equation and $X$ is Killing, we find that $\mathbf{J}^X_{\alpha}$ is divergence free. In this case, for any two homologous hypersurfaces $\Sigma_1$ and $\Sigma_2$, the divergence theorem gives a conservation law:
\begin{equation}\label{conserve}
\int_{\Sigma_1}\mathbf{J}^X_{\alpha}n^{\alpha}_{\Sigma_1} = \int_{\Sigma_2}\mathbf{J}^X_{\alpha}n^{\alpha}_{\Sigma_2}.
\end{equation}
Here $n_{\Sigma_i}$ denotes the (future oriented) normal to the hypersurface $\Sigma_i$, and the integrals are with respect to the natural volume forms (the ones that make the divergence theorem true). For the identity (\ref{conserve}) to be useful, we need some positivity of $\mathbf{J}^X_{\alpha}n^{\alpha}_{\Sigma_i}$. One may show (see \cite{n3}) that $\mathbf{J}^X_{\alpha}n^{\alpha}_{\Sigma_i}$ is a positive definite non-degenerate quadratic form in $\psi$ and its derivatives at a point $x_0$ if and only if $X$ and $n_{\Sigma_i}$ are timelike and future directed at $x_0$. If we allow the quadratic form to be degenerate, then we may allow $X$ and $n_{\Sigma_i}$ to also be null. The significance of superradiance for stability problems should now be clear.

Of course, the canonical example of a spacetime admitting superradiance is Kerr.\footnote{Note that any spacetime of the form $(\mathbb{R} \times M,-dt^2 + g_M)$, with $(M,g_M)$ a Riemannian manifold, admits a globally defined timelike Killing vector field corresponding to time translation. Thus, superradiance is a truly Lorentzian phenomenon.

After the Kerr spacetime, the next most interesting examples from the instability point of view are perhaps ``small'' Kerr-AdS black holes \cite{n20}, \cite{n21}.} The geometry of Kerr is reviewed in section \ref{kerr}; for now, let us simply recall that there is a unique (up to normalization) Killing vector field which is future directed and timelike at infinity and that this vector field is spacelike on (almost all of) the horizon. If we let $T$ denote this vector field and $\Phi$ denote the unique (up to normalization) future directed Killing vector field which vanishes along the axis of symmetry, then the null generator of the horizon $\mathcal{H}^+$ is
\[L := T + \omega_+\Phi\]
where $\omega_+ := \frac{a}{2Mr_+}$ is the ``angular velocity'' of the black hole. One then finds that the energy density along the horizon for a solution $\psi$ to the Klein-Gordon equation is
\begin{equation}\label{superHor}
\mathbf{J}^T_{\alpha}L^{\alpha} = \text{Re}\left(T\psi\overline{L\psi}\right) = \text{Re}\left(T\psi\overline{\left(T\psi + \frac{a}{2Mr_+}\Phi\psi\right)}\right).
\end{equation}
When the black hole possesses non-zero angular momentum ($a \neq 0$) it is clearly possible for this quantity to be negative, and thus, in principle, energy can radiate out of the black hole.

Our exponentially growing solutions will be superradiant bound states,\footnote{Recall that it only makes sense to consider bound state solutions when the mass of the scalar field is non-zero.} i.e.~the energy flux will be negative along the horizon, and the solution will spatially decay exponentially fast so that no energy is radiated away to infinity. For such solutions the energy coming out of the black hole cannot escape; this is what provides the mechanism for the exponential growth. For technical reasons we will first construct bound state solutions which have exactly zero energy flux on the horizon. As one expects, these solutions will neither grow nor decay. Then, a perturbation argument will produce the exponentially growing solutions.

\subsection{Linear Stability of The Wave Equation}
The wave equation is simply the Klein-Gordon equation with $\mu = 0$. It is instructive to contrast our instability result for the Klein-Gordon equation with previous work showing that the wave equation is linearly stable. Due to the difficulties of superradiance and the complicated trapping structure of Kerr, the most basic boundedness and decay statements for the wave equation on the Kerr spacetime remained unresolved until the quite recent \cite{n2} (we should also mention the previous \cite{n45} and \cite{n46}). This breakthrough followed a lot of earlier work restricted to Kerr spacetimes with $|a| \ll M$, e.g. \cite{n15}, \cite{n5}, \cite{n18}, \cite{n40}, \cite{n41}, \cite{n42}, \cite{n43}, and \cite{n44}.

The instability result of this paper serves to the emphasize the subtle effects superradiance can have on the linear stability problem and helps to ``explain'' why even establishing boundedness for the wave equation on Kerr is difficult. In particular, since the Klein-Gordon equation decays faster than the wave equation on Minkowski space, one may have expected that the Klein-Gordan equation would be easier to control. However, as $|a| \to 0$ (where superradiance is \emph{weaker} and one expects the problem to get \emph{easier}) we have produced exponentially growing and finite energy solutions with arbitrarily small mass. Thus, any argument used for the wave equation must break down for Klein-Gordon equations with arbitrarily small mass. On a more conceptual level, we see that as one passes into the relativistic world, new obstructions to \emph{boundedness}, not just decay, arise in the superradiant bounded-frequency regime.

Of course, the sub-extremal Kerr spacetime is far from the only background on which to study the wave or Klein-Gordon equations; in fact, it is quite interesting to explore how changing the black hole geometry affects the subtle interplay between trapping, superradiance, and the redshift. In the sequence of works \cite{n56}, \cite{n57}, and \cite{n58}, Holzegel and Holzegel-Smulevici established a logarithmic upper and \emph{lower} bound on the decay rate for the wave and Klein-Gordon equations on non-superradiant Schwarzschild/Kerr-AdS spacetimes.\footnote{Here non-superradiant means that there exists a global timelike Killing vector field. In particular, it is possible to immediately rule out solutions of the type constructed in this paper.} The slow decay rate is directly traceable to a stable trapping phenomenon. In a series of papers \cite{n59}, \cite{n60}, \cite{n61}, \cite{n62}, and \cite{n63} Aretakis has studied the wave equation on various \emph{extreme} black holes,\footnote{The extreme Kerr spacetime occurs when $|a| = M$.} where there is a loss of the redshift due to the vanishing surface gravity of the horizon. One of the most striking results obtained is that even within the context of \emph{axisymmetric} solutions to the wave equation on extremal Kerr, for which there is no superradiance, second derivatives of the solution blow up along the horizon.\footnote{Interestingly, it was also shown that the solutions itself decays in time.} Taken together with the results of this paper, these various ``instabilities'' serve to emphasize the miraculous properties of the wave equation on sub-extremal Kerr.

\subsection{The Geometry of Kerr: The Ergoregion and Superradiance}\label{kerr}
In this section we shall briefly review the relevant aspects of the geometry of Kerr. For a true introduction to the Kerr spacetime we recommend \cite{n4} and \cite{n13}. The lecture notes \cite{n5} provide a good introduction to the interaction between the geometry of Kerr and the behavior of linear waves.
\subsubsection{Coordinate Systems}\label{coord}
Outside the black hole, the Kerr metric in Boyer-Lindquist coordinates $(t,r,\theta,\phi) \in \mathbb{R} \times (r_+,\infty) \times \mathbb{S}^2$ is given by
\[g = -\left(1-\frac{2Mr}{\rho^2}\right)dt^2 - \frac{4Mar\sin^2\theta}{\rho^2}dtd\phi + \frac{\rho^2}{\Delta}dr^2 + \rho^2 d\theta^2 + \sin^2\theta\frac{\Pi}{\rho^2}d\phi^2,\]
\[r_{\pm} := M \pm \sqrt{M^2-a^2},\]
\[\Delta := r^2 - 2Mr + a^2 = (r-r_+)(r-r_-),\]
\[\rho^2 := r^2 + a^2\cos^2\theta,\]
\[\Pi := (r^2+a^2)^2 - a^2\sin^2\theta\Delta.\]
There are two free parameters $M$ and $a$. The first parameter $M$ is the mass of the black hole, and $aM$ is the angular momentum. The $1$-parameter family where $a = 0$ is known as the ``Schwarzschild'' spacetime. Some non-zero mass $M$ of the spacetime will be fixed throughout the paper, so the term ``mass'' will always refer to $\mu$. For various reasons relating to the global geometry of Kerr, physically relevant Kerr spacetimes must satisfy $\left|a\right| \leq M$ \cite{n4}. In this paper, we shall study sub-extremal Kerr, which corresponds to $0 < \left|a\right| < M$.\footnote{One expects that the instability result of this paper also holds on extremal Kerr where $|a| = M$; however, the relevant equations have a different structure at the horizon which precludes drawing any immediate conclusions from the sub-extremal case.} This assumption guarantees that $r_{\pm}$ both exist and are distinct.

Though Boyer-Lindquist coordinates are often convenient, they break down when $r = r_+$. We shall thus introduce another coordinate system. Let us define two functions $\overline{t}(r)$ and $\overline{\phi}(r)$ on $(r_+,\infty)$ up to a constant by
\[\frac{d\overline{t}}{dr} := \frac{r^2+a^2}{\Delta}\]
\[\frac{d\overline{\phi}}{dr} := \frac{a}{\Delta}.\]
Then we define Kerr-star coordinates $(t^*,r,\theta,\phi^*)$ by
\[t^*(t,r) := t + \overline{t}(r)\]
\[\phi^*(\phi,r) := \phi + \overline{\phi}(r).\]
In these coordinates the metric becomes
\[g = -\left(1-\frac{2Mr}{\rho^2}\right)(dt^*)^2 - \frac{4Mar\sin^2\theta}{\rho^2}dt^*d\phi^* + 2dt^*dr + \]
\[\rho^2 d\theta^2 + \sin^2\theta\frac{\Pi}{\rho^2}(d\phi^*)^2 -2a\sin^2\theta drd\phi^*.\]
Note that we can now allow $(t^*,r,\theta,\phi^*) \in \mathbb{R}\times (0,\infty)\times \mathbb{S}^2$. The future event horizon is defined to be the null hypersurface $\{r = r_+\}$. This is the boundary of the black hole. We call the region $\{r > r_+\}$ the ``domain of outer communication.'' Lastly, we note that in their common domain, $\partial_t$ in Boyer-Lindquist coordinates is equal to $\partial_{t^*}$ in Kerr-star coordinates. A similar statement applies to $\partial_{\phi}$ and $\partial_{\phi^*}$.
\subsubsection{The Ergoregion and Superradiance}\label{ergoSuper}
On Minkowski space, the Killing vector field $T := \partial_t$ is everywhere timelike. Combining this with the energy-momentum formalism \cite{n3} immediately implies that $\vv\psi(t)\vv_{\dot{H}_x^1}^2 + \left\vert\left\vert\dot{\psi}(t)\right\vert\right\vert_{L^2_x}^2+ \mu^2\vv\psi(t)\vv_{L_x^2}^2$ is constant in time. As in Minkowksi space, in Kerr we set $T := \partial_t$. One finds that $T$ is the unique (up to normaliation) Killing vector field which is future directed and timelike for all large $r$. Unfortunately, when
\[\Delta - a^2\sin^2\theta < 0\]
then $T$ is spacelike. This region is known as the ergoregion. Due to the ergoregion, the conserved quantity associated to $T$ has no definite sign. Even more disturbing than this loss of a ``free'' boundedness statement is the Penrose process \cite{n4},\cite{n24},\cite{n25}. This is a thought experiment where a test particle exploits the ergoregion to extract energy from the black hole. In the introduction we have already explained how to see this energy extraction at the level of the Klein-Gordon equation (\ref{superHor}).

It is important to keep in mind that superradiance cannot occur if either $a$ or $\Phi\psi$ vanishes. Lastly, we would like to emphasize that the problems of the ergoregion and superradiance occur for the wave equation. Hence, the presence of these two features alone certainly does not imply linear instability.
\subsection{Precise Statement of Results}
We will rigorously construct finite energy solutions to the Klein-Gordon equation
\[\left(\Box_g - \mu^2\right)\psi = 0\]
on sub-extremal Kerr which grow exponentially. These growing solutions will be ``mode solutions'' of the form
\begin{equation}\label{modeEqn}
\psi(t,r,\theta,\phi) := e^{-i\omega t}e^{i m\phi}S_{\kappa ml}(\theta)R(r)
\end{equation}
where $\omega \in \mathbb{C},\ m \in \mathbb{Z},\ l \in \mathbb{Z}_{\geq |m|},\ \text{ and }\kappa := a^2\left(\omega^2-\mu^2\right)$. Here $(t,r,\theta,\phi)$ are Boyer-Lindquist coordinates where, as is well known, the Klein-Gordon equation separates. The functions $S_{\kappa ml}$ and $R$ must satisfy appropriate ordinary differential equations and boundary conditions (see section \ref{modeStuff}) so that $\psi$ solves the Klein-Gordon equation, extends smoothly to the horizon (where Boyer-Lindquist coordinates break down), and has finite energy (and finite higher order energies) along asymptotically flat Cauchy hypersurfaces which intersect the future event horizon. For our mode solution to grow with time, we must have Im$(\omega) > 0$. Such solutions are called ``unstable modes.'' We say that these modes ``lie in the upper half-plane.'' Mode solutions with $\omega \in \mathbb{R}$ will be called ``real modes.'' We say that these modes ``lie on the real axis.'' It will be convenient to refer to the tuple $(\omega,m,l,\mu)$ as the ``parameters'' of the mode. Lastly, we observe that (\ref{superHor}) implies that a mode solution exhibits superradiance if and only if
\begin{equation}\label{sup}
am\text{Re}\left(\omega\right) - 2Mr_+\left|\omega\right|^2 > 0.
\end{equation}

We can now state our main result:
\begin{theo}\label{blackHoleBomb}Fix a sub-extremal Kerr spacetime with mass $M$ and angular momentum $aM$. Let $m \in \mathbb{Z}$ and $\omega_R(0) \in \mathbb{R}$ satisfy $am-2Mr_+\omega_R(0) = 0$ and $am \neq 0$. Then, for each $l \in \mathbb{Z}_{\geq |m|}$ and sufficiently small $\delta > 0$, there exists $\mu(0) > |\omega_R(0)|$, real analytic $\omega_R(\epsilon)$, and real analytic $\mu(\epsilon)$ such that for every $-\delta < \epsilon < \delta$, there exists a mode solution with parameters $(\omega_R(\epsilon) + i\epsilon,m,l,\mu(\epsilon))$. As $l \to \infty$, $\mu(0)$ will converge to $\omega_R(0)$. Lastly, these unstable modes must all be superradiant\footnote{One may easily check that (\ref{strongsuper}) is stronger than (\ref{sup}) via the inequality $\frac{x^2+y^2}{|x|} \geq \sqrt{x^2+y^2}$.}
\begin{equation}\label{strongsuper}
|am| - 2Mr_+\sqrt{\omega_R^2(\epsilon) + \epsilon^2} > 0
\end{equation}
and lose mass as they become unstable
\[\frac{\partial\mu}{\partial\epsilon}(0) < 0.\]
\end{theo}
Here is a picture of the values $\{\omega(\epsilon)\} \in \mathbb{C}$ traced out by the various $1$-parameter families of modes associated to a fixed $l$:

\begin{center}
\begin{tikzpicture}
    \draw (-5,0) -- (5,0);
    \draw (0,-1) -- (0,1);
    \node at (1.5,0) [circle, draw = black, fill = black, scale =  .1mm]{};
    \node at (1.1,-.25) [scale = .3mm]{$\frac{a}{2Mr_+}$};
    \draw plot [smooth] coordinates{(1.75,-.5) (1.5,0) (1.15,.5)};
    \node at (3,0) [circle, draw = black, fill = black, scale =  .1mm]{};
    \node at (2.6,-.25) [scale = .3mm]{$\frac{a}{Mr_+}$};
    \draw plot [smooth] coordinates{(3.25,-.5) (3,0) (2.65,.5)};
    \node at (4.5,0) [circle, draw = black, fill = black, scale =  .1mm]{};
    \node at (4.1,-.25) [scale = .3mm]{$\frac{3a}{2Mr_+}$};
    \draw plot [smooth] coordinates{(4.75,-.5) (4.5,0) (4.15,.5)};
    \node at (-1.5,0) [circle, draw = black, fill = black, scale =  .1mm]{};
    \node at (-1.1,-.25) [scale = .3mm]{$\frac{-a}{2Mr_+}$};
    \draw plot [smooth] coordinates{(-1.75,-.5) (-1.5,0) (-1.15,.5)};
    \node at (-3,0) [circle, draw = black, fill = black, scale =  .1mm]{};
    \node at (-2.6,-.25) [scale = .3mm]{$\frac{-a}{Mr_+}$};
    \draw plot [smooth] coordinates{(-3.25,-.5) (-3,0) (-2.65,.5)};
    \node at (-4.5,0) [circle, draw = black, fill = black, scale = .1mm]{};
    \node at (-4.1,-.25) [scale = .3mm]{$\frac{-3a}{2Mr_+}$};
    \draw plot [smooth] coordinates{(-4.75,-.5) (-4.5,0) (-4.15,.5)};
    \node at (5.5,0) []{$\cdots$};
    \node at (-5.45,0) []{$\cdots$};
\end{tikzpicture}
\end{center}
The reader should keep in mind that we have \emph{not} produced any estimates for the lengths of these curves.

For each choice of $m \in \mathbb{Z}\setminus\{0\}$, there is a countable family of \emph{intervals} of masses $\mu$ associated to growing solutions (indexed by $l$). These intervals will have an accumulation point at $\frac{|am|}{2Mr_+}$. The following picture may be useful for visualization:
\begin{center}
\begin{tikzpicture}
\draw [dashed, thick] (0,0) -- (10,0);
\node at (0,0) [circle, draw = black, fill = white, scale = .1mm]{};
\node at (-.4,-.25) [scale = .3mm]{$\frac{\left|am\right|}{2Mr_+}$};

\node at (0,0) [circle, draw = black, fill = white, scale =  .1mm]{};
\draw [very thick] (.05,0) -- (.4,0);
\node at (.4,0) [circle, draw = black, fill = white, scale = .1mm]{};
\node at (.7,0) [circle, draw = black, fill = white, scale =  .1mm]{};
\draw [very thick] (.75,0) -- (1.15,0);
\node at (1.15,0) [circle, draw = black, fill = white, scale = .1mm]{};
\node at (2,0) [circle, draw = black, fill = white, scale =  .1mm]{};
\draw [very thick] (2.05,0) -- (2.4,0);
\node at (2.4,0) [circle, draw = black, fill = white, scale = .1mm]{};
\node at (5,0) [circle, draw = black, fill = white, scale =  .1mm]{};
\draw [very thick] (5.05,0) -- (5.4,0);
\node at (5.4,0) [circle, draw = black, fill = white, scale = .1mm]{};
\node at (9.5,0) [circle, draw = black, fill = white, scale =  .1mm]{};
\draw [very thick] (9.55,0) -- (9.9,0);
\node at (9.9,0) [circle, draw = black, fill = white, scale = .1mm]{};
\end{tikzpicture}
\end{center}
Lest the reader be misled, we emphasize that we do \emph{not} have any estimates for how large these intervals are, and (despite the picture) we have \emph{not} proven that we can find $\epsilon > 0$ such that the interval $\left(\frac{\left|am\right|}{2Mr_+},\frac{\left|am\right|}{2Mr_+} + \epsilon\right)$ is entirely made up of unstable masses. However, in light of the arguments in section \ref{intoHalfPlane} we would certainly conjecture that this last statement is true.

The construction of the exponentially growing modes is achieved by perturbing modes corresponding to real $\omega$. Thus, before proving Theorem \ref{blackHoleBomb}, we will undertake an analysis of modes corresponding to real $\omega$. For these modes we have two main results. The first is an existence result (already contained in Theorem \ref{blackHoleBomb}). The second shows that the assumptions on the frequency parameters from Theorem \ref{blackHoleBomb} are necessary.
\begin{theo}\label{restrictions}Suppose there exists a mode solution with parameters $(\omega,m,l,\mu)$ such that $\omega \in \mathbb{R}$ and $\mu^2 > \omega^2$. Then the following statements are true.
\begin{enumerate}
    \item\label{part1} We have $am - 2Mr_+\omega = 0$.
    \item\label{part2} We have $am \neq 0$.
    \item\label{part3} There exists a function $C(\omega,m,l)$ such that $\omega^2 < \mu^2 < \omega^2 + C(\omega,m,l)$ and \[\lim_{l\to\infty}C(\omega,m,l) = 0.\]
\end{enumerate}
\end{theo}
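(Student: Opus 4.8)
The plan is to reduce everything to the separated radial ODE from Section \ref{modeStuff}, which for a mode solution has the form $\left(\Delta R'\right)' = U R$ on $(r_+,\infty)$, where $' = d/dr$ and
\[
U(r) = \mu^2 r^2 + \lambda_{ml} + a^2\omega^2 - 2am\omega - \frac{H^2}{\Delta}, \qquad H := (r^2+a^2)\omega - am.
\]
I will use the two defining boundary conditions: smoothness at the horizon, which forces $R$ onto the regular Frobenius branch $R \sim A\,(r-r_+)^{i\sigma}$ with $\sigma = (am-2Mr_+\omega)/(r_+-r_-)$ near $r=r_+$, and finite energy, which (because $\mu^2>\omega^2$) forces $R$ and $R'$ to decay exponentially as $r\to\infty$. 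The structural point driving part \ref{part1} is that for real $\omega$ the coefficient $U$ is \emph{real}. I would therefore introduce the Wronskian $W := \Delta\left(R\overline{R}' - \overline{R}R'\right)$; conjugating the ODE gives $(\Delta\overline{R}')' = U\overline{R}$, and then $W' = U R\overline{R} - U\overline{R}R = 0$, so $W$ is constant. Exponential decay makes $W\to 0$ at infinity, while substituting the regular branch and letting $r\to r_+$ gives $W = -2i\,(am-2Mr_+\omega)\,|A|^2$. A nontrivial smooth solution must have $A\neq 0$ (the competing branch $(r-r_+)^{-i\sigma}$ is not smooth unless $\sigma=0$), so equating the two evaluations of the constant $W$ forces $am-2Mr_+\omega=0$. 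This is exactly the statement that the horizon energy flux of a real bound state vanishes.

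Parts \ref{part2} and \ref{part3} both rest on one energy identity that becomes available once part \ref{part1} is known. At threshold $H(r_+)=0$, so $H^2/\Delta$ is bounded at the horizon and every boundary term below vanishes (at $r_+$ because $\Delta\to 0$ against the now-regular $R'\overline{R}$, at infinity by decay); multiplying the ODE by $\overline{R}$ and integrating by parts yields
\[
\int_{r_+}^\infty \Delta\,|R'|^2\,dr + \int_{r_+}^\infty U\,|R|^2\,dr = 0.
\]
Since $R\not\equiv 0$ the first integral is strictly positive, so $U$ must be strictly negative on a set of positive measure. For part \ref{part2}, if $am=0$ then part \ref{part1} gives $\omega=0$, whence $H\equiv 0$ and $U = \mu^2 r^2 + \lambda_{ml}$; at $\omega=0$ the angular eigenvalue satisfies $\lambda_{ml}\geq 0$ (its Rayleigh quotient has the favorable sign, since the $\cos^2\theta$ coupling equals $-a^2\mu^2\le 0$), so $U\geq \mu^2 r_+^2 > 0$ everywhere, a contradiction; hence $am\neq 0$.

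For part \ref{part3} the same sign constraint becomes quantitative. Using $H=\omega(r^2-r_+^2)$ at threshold I would write $U = (\mu^2-\omega^2)r^2 + \lambda_{ml} + P(r)$, where $P(r)$ is bounded below by an affine function of $r$ whose coefficients depend only on $(a,M,\omega)$ and not on $l$. Minimizing this affine-plus-quadratic lower bound over $r$ by completing the square shows $U \geq \lambda_{ml} - C_0 - M^2\omega^4/(\mu^2-\omega^2)$ for an $l$-independent constant $C_0$. Consequently, if $\mu^2-\omega^2 \geq M^2\omega^4/(\lambda_{ml}-C_0)=:C(m,l)$ then $U>0$ everywhere, contradicting $\int U|R|^2<0$; thus $\mu^2<\omega^2+C(m,l)$, and since $\lambda_{ml}\to\infty$ as $l\to\infty$ (a standard property of the angular operator) we obtain $C(m,l)\to 0$.

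The main obstacle I anticipate is the horizon analysis underpinning part \ref{part1}: one must pin down the Frobenius exponents precisely and argue rigorously that smoothness selects the branch with $A\neq 0$, so that the conserved Wronskian genuinely detects the superradiant quantity $am-2Mr_+\omega$ rather than degenerating to an uninformative identity. This same threshold condition is what renders $H^2/\Delta$ integrable and annihilates the horizon boundary term in the energy identity, so a clean proof of part \ref{part1} is precisely what unlocks the estimates of parts \ref{part2} and \ref{part3}; by comparison, the latter two reduce to sign-of-the-potential arguments combined with the $\lambda_{ml}\geq 0$ and $\lambda_{ml}\to\infty$ properties of the angular eigenvalues.
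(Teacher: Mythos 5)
Your proposal is correct and follows essentially the same route as the paper: your conserved Wronskian $W = \Delta\left(R\overline{R}' - \overline{R}R'\right)$ is exactly $-2i$ times the paper's energy current $Q_T = \text{Im}\left(\Delta \frac{dR}{dr}\overline{R}\right)$, evaluated at the horizon via the same Frobenius branch selection, and your $U$ is identically $V_{\mu}/\Delta$, so parts \ref{part2} and \ref{part3} reduce to the paper's integrated identity $\int_{r_+}^{\infty}\left(\Delta|R'|^2 + \frac{V_{\mu}}{\Delta}|R|^2\right)dr = 0$ together with positivity of the potential. Your completion-of-the-square bound for part \ref{part3} (using $H=\omega(r^2-r_+^2)$ at threshold) merely makes explicit the constant $C(m,l)$ that the paper asserts is ``immediately clear,'' with the same mechanism $\lambda_{ml}\to\infty$ forcing $C(m,l)\to 0$.
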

We will close the section with two remarks. First, we note that we can rephrase the condition $am-2Mr_+\omega = 0$ more geometrically. Let $L$ be a null generator of the horizon, e.g. $T + \frac{a}{2Mr_+}\Phi$.  Then
\[am-2Mr_+\omega = 0 \Leftrightarrow L\psi = 0 \Leftrightarrow \text{ No energy flux along the horizon}\]
Thus, our real mode solutions are simply solutions to the Klein-Gordon equation with exactly vanishing energy flux along the horizon.

Second, an appropriately modified version of Theorem \ref{restrictions} also holds if $\omega^2 > \mu^2$. Instead of requiring finite energy, one should enforce the outgoing condition
\[R \sim \frac{e^{i\sqrt{\omega^2-\mu^2}r^*}}{r}\text{ at }\infty.\]
Here $r^*$ is defined up to a constant by
\[\frac{dr^*}{dr} = \frac{r^2+a^2}{\Delta}.\]
Though we will not pursue this here, one can rule out such solutions by modifying the arguments of \cite{n23}.
\subsection{Previous Works on Mode Solutions}
To the best of the author's knowledge, there are no previous rigorous constructions of growing solutions for the Klein-Gordon equation. However, there are a few results which rule out growing modes in certain parameter ranges. In \cite{n14} Beyer showed that no unstable modes can exist if
\[\mu \geq \frac{\left|am\right|}{2Mr_+}\sqrt{1 + \frac{2M}{r_+}}.\]
In \cite{n5} it is noted that if $a$ is small, and $\mu$ is small relative to $a$ and $m$, then the techniques developed by Dafermos and Rodnianski in \cite{n15} can be used to show that no unstable modes exist. Finally, unstable modes for the wave equation were ruled out in the ground-breaking mode stability work of Whiting \cite{n10}.

Though they will not directly concern us here, it is worth mentioning that there is a large literature devoted to studying \emph{quasi-normal modes}. These modes have Im$(\omega) < 0$ and satisfy different boundary conditions than the ones considered in this paper. One expects these to contain a great deal of precise information about the decay of scalar fields. See \cite{n48} for a review of the role of quasi-normal modes in the physics literature. For a sample of the mathematical study of quasi-normal modes and corresponding applications (to black hole spacetimes), we recommend \cite{n49}, \cite{n50}, \cite{n51}, \cite{n52}, \cite{n53}, \cite{n54}, \cite{n55}, \cite{n56}, \cite{n64} and the references therein.
\subsection{Black-Hole Bombs and The Physics Literature}
Soon after the discovery of superradiant wave scattering \cite{n26}, the authors of \cite{n6} speculated about placing a mirror around a black hole which would reflect superradiant frequencies. They argued that this would create a positive feedback loop and result in a ``black-hole bomb.'' Naturally, one is led to wonder if this superradiant instability can arise in a more physically natural fashion. A key breakthrough came in 1976 when Damour, Deruelle, and Ruffini observed that a good candidate is the Klein-Gordon equation with non-zero mass \cite{n22}. A few years later, Zouros and Eardley \cite{n7} and Detweiler \cite{n8} developed more involved heuristics, all leading to the same conclusion. In particular, in \cite{n7} a connection was drawn between unstable modes for the Klein-Gordon equation and the existence of bound Keplerian orbits outside the ergoregion. Furthermore, they gave some approximations for the instability rates. Various extensions/refinements, numerical and otherwise, of these results continue to appear in the physics literature, e.g. \cite{n9}, \cite{n19}, and the references therein. Many of the studies of unstable modes in the physics literature rely on the WKB approximation (\cite{n8} is an exception). Even if these WKB arguments were made rigorous, they would only become accurate as $l \to \infty$. Since our techniques are variational, no large parameter is necessary, and we produce a much more complete picture. We also remark that it is expected that small Kerr-AdS black holes should exhibit superradiant instabilities \cite{n20}, \cite{n21}.

\section{Mode Solutions}\label{modeStuff}
Before discussing the proofs of the theorems, we will provide a brief review of mode solutions and the corresponding boundary conditions.
\subsection{The Radial and Angular ODE's}
As mentioned above, we work in Boyer-Lindquist coordinates and consider solutions of the form
\[\psi(t,r,\theta,\phi) := e^{-i\omega t}e^{im\phi}S_{\kappa ml}(\theta)R(r)\]
where $\omega \in \mathbb{C},\ m \in \mathbb{Z},\ l \in \mathbb{Z}_{\geq |m|},\ \text{ and }\kappa := a^2\left(\omega^2-\mu^2\right)$. The purpose of the $\kappa$ and $l$ index will soon become clear. Our potential solution $\psi$ will actually satisfy the Klein-Gordon equation if there exists $\lambda_{\kappa ml}\in \mathbb{C}$ such that $S_{\kappa ml}$ and $R$ satisfy the following equations:
\begin{equation}\label{sml}
\frac{1}{\sin\theta}\frac{d}{d\theta}\left(\sin\theta\frac{dS_{\kappa ml}}{d\theta}\right) - \left(\frac{m^2}{\sin^2\theta} - a^2\left(\omega^2-\mu^2\right)\cos^2\theta\right)S_{\kappa ml} + \lambda_{\kappa ml}S_{\kappa ml} = 0,
\end{equation}
\begin{equation}\label{eqn}
\Delta\frac{d}{dr}\left(\Delta\frac{dR}{dr}\right) - V_{\mu}R = 0,
\end{equation}
\[V_{\mu} := -(r^2+a^2)^2\omega^2 + 4Mamr\omega - a^2m^2 + \Delta\left(\lambda_{\kappa ml} + a^2\omega^2 + \mu^2r^2\right).\]
We will refer to (\ref{sml}) as the angular ODE and (\ref{eqn}) as the radial ODE, and we allow $S_{\kappa ml}$ and $R$ to be complex.

We would like to impose boundary conditions so that $\psi$ extends smoothly to the whole spacetime and has finite energy. First of all, we must have $m \in \mathbb{Z}$. Next, when $\omega$ is real, one can show that imposing the condition that $e^{im\phi}S_{\kappa ml}(\theta)$ extends to $\mathbb{S}^2$ leads to a regular Sturm-Liouville problem. Hence, for fixed $\kappa \in \mathbb{R}$ and $m \in \mathbb{Z}$, we find a countable family of solutions $S_{\kappa ml}$ with a corresponding countable collection of eigenvalues $\lambda_{\kappa ml}$ which accumulate at $\infty$. We index the $S_{\kappa ml}$ in such a way that in the $a = 0$ case, the $S_{\kappa ml}$ are simply spherical harmonics $S_{ml}$ with eigenvalues $\lambda_{\kappa ml} = \lambda_{ml} = l(l+1)$ ($l \geq |m|$). In fact, by comparison with spherical harmonics it is not difficult to see that $\kappa \in \mathbb{R}$ implies
\begin{equation}\label{lamIneq}
\lambda_{\kappa ml} + a^2\omega^2 \geq |m|\left(|m|+1\right).
\end{equation}

In order to construct unstable modes we will need to consider $\kappa \in \mathbb{C}\setminus\mathbb{R}$. Thus, in appendix \ref{angularEig} we will show that for fixed $m$ and $l$, the eigenvalue $\lambda_{\kappa_0 ml}$ can be embedded in a holomorphic curve of eigenvalues $\lambda_{\kappa ml}$ for $\kappa$ sufficiently close to $\kappa_0$. The corresponding $S_{\kappa ml}$ will also depend holomorphically on $\kappa$.

\subsection{Boundary Conditions}\label{boundaryConditions}
We are left with the question of boundary conditions for $R$. First we need to require that $R$ extends to the horizon. Since Boyer-Lindquist coordinates break down on the horizon, we will need to change to Kerr-star coordinates (section \ref{coord}). We now ask if
\[\psi(t^*,r,\phi^*,\theta) = e^{-i\omega\left(t^*-\overline{t}(r)\right)}e^{im\left(\phi^*-\overline{\phi}(r)\right)}S_{\kappa ml}(\theta)R(r)\]
extends smoothly to $r = r_+$. This will happen if we can write
\begin{equation}\label{smoothToHorizon}
R(r) = e^{-i\left(\omega \overline{t}(r) - m\overline{\phi}(r)\right)}f(r)
\end{equation}
where $f$ extends smoothly to $r_+$. Let's define
\[\xi := \frac{i(am-2Mr_+\omega)}{r_+-r_-}.\]
It is easy to see that (\ref{smoothToHorizon}) is equivalent to
\begin{equation}\label{boundaryConditionHorizon1}
R(r) = \left(r-r_+\right)^{\xi}\rho(r)
\end{equation}
for some function $\rho$ smooth at $r_+$. The local theory in appendix \ref{localAnalysisHorizon} will show that this gives a one dimensional space of solutions to the radial ODE. Recall that we defined an $r^*$ coordinate up to a constant by
\[\frac{dr^*}{dr} := \frac{r^2+a^2}{\Delta}.\]
Then, assuming $\omega$ is real and equation (\ref{boundaryConditionHorizon1}) holds, we will have
\[\frac{dR}{dr^*} = \frac{\xi(r_+-r_-)}{2Mr_+}R + O(r-r_+) \Rightarrow \]
\begin{equation}\label{boundaryConditionHorizon2}
\frac{dR}{dr^*} + i\left(\omega - \frac{am}{2Mr_+}\right)R = O(r-r_+).
\end{equation}
Lastly, we will also need $\psi$ to have finite energy along asymptotically flat hypersurfaces. Hence, for large $r$ we must require
\begin{equation}\label{finiteEnergy}
\int_{r_++1}^{\infty}\left(\left|R\right|^2 + \left|\frac{dR}{dr}\right|^2\right)r^2dr < \infty.
\end{equation}
Let's put this all together in a definition.
\begin{defi}We say that a function $f(r): (r_+,\infty) \to \mathbb{C}$ ``satisfies the boundary conditions of a mode solution'' if
\begin{enumerate}
    \item $f(r) = (r-r_+)^{\xi}\rho(r)\text{ for a function }\rho\text{ smooth at }r_+.$
    \item $\int_{r_++1}^{\infty}\left(\left|R\right|^2 + \left|\frac{dR}{dr}\right|^2\right)r^2dr < \infty.$
\end{enumerate}
\end{defi}
Remark: The discussion in appendix \ref{loc} will imply that when $\omega \in \mathbb{R}$ and $\omega^2 > \mu^2$, the condition (\ref{finiteEnergy}) will \emph{never} be satisfied by a solution $R$ of the radial ODE; in particular, this definition is clearly not a reasonable one for studying modes ``on the real axis'' for the wave equation. We will always work in the regime $\omega^2 < \mu^2$ where there \emph{do} exist solutions of the radial ODE satisfying (\ref{finiteEnergy}).
\section{Proof of Theorem \ref{restrictions}: Restrictions on Mode Solutions Corresponding to Real $\omega$}\label{proofTheoRestrictions}
We will start with the proof of Theorem \ref{restrictions} since it is simpler than and motivates the hypotheses of Theorem \ref{blackHoleBomb}. We have placed some more technical aspects of the argument in the appendix.
\subsection{Part \ref{part1}}
Let $R$ be a solution to the radial ODE with parameters $(\omega,m,l,\mu)$ such that $\omega \in \mathbb{R}\setminus\{0\}$, $\mu^2 > \omega^2$, and $R$ satisfies the boundary conditions of a mode. We wish to show that
\[am-2Mr_+\omega = 0.\]

As reviewed in appendix \ref{loc}, an asymptotic analysis of the radial ODE shows that all solutions are either exponentially growing or exponentially decaying at infinity. Since our mode solution must have finite energy (\ref{finiteEnergy}), we conclude that $R$ is exponentially decaying at infinity.

Next, let's define the energy current,
\[Q_T := \text{Im}\left(\Delta\frac{dR}{dr}\overline{R}\right).\]
The radial ODE implies that
\[\frac{dQ_T}{dr} = 0.\]
Since $R$ must decay exponentially at infinity, we have $Q_T(\infty) = 0$. Hence, using the horizon boundary condition (\ref{boundaryConditionHorizon2}), we get
\[0 = Q_T(r_+) = (2Mr_+)\text{Im}\left(\frac{dR}{dr^*}(r_+)\overline{R(r_+)}\right) = (am-2Mr_+\omega)\left|R(r_+)\right|^2.\]
Thus, either $am-2Mr_+\omega = 0$ or $R(r_+) = 0$. However, $R(r_+) = 0$ implies that $R$ is identically $0$ (appendix \ref{loc}). We conclude that $am-2Mr_+\omega = 0$.
\subsection{Part \ref{part2}}\label{sect2}
Again we let $R$ be a solution to the radial ODE with parameters $(\omega,m,l,\mu)$ such that $\omega \in \mathbb{R}\setminus\{0\}$, $\mu^2 > \omega^2$, and $R$ satisfies the boundary conditions of a mode. From the previous section we know that we must have
\[am-2Mr_+\omega = 0.\]
We now wish to show that
\[am \neq 0.\]
Using $am-2Mr_+\omega = 0$, we may write
\begin{equation}\label{form}
V_{\mu} = -(r^2+a^2)^2\omega^2 + 4M^2\omega^2r_+(2r-r_+) + \Delta\left(\lambda_{\kappa ml} + a^2\omega^2 + r^2\mu^2\right).
\end{equation}
We now argue by contradiction. If $2Mr_+\omega = am = 0$, then
\[V_{\mu} = \Delta\left(\lambda_{0 ml} + r^2\mu^2\right)  = \Delta\left(l\left(l+1\right) + r^2\mu^2\right)\geq 0.\]
Now consider the function
\[f(r) := \text{Re}\left(\Delta\frac{dR}{dr}\overline{R}\right).\]
Since our mode solution must be exponentially decreasing at infinity, we see that $f(\infty) = 0$. The boundary conditions at the horizon imply that $f(r_+) = 0$. Hence,
\[ 0 = \int_{r_+}^{\infty}\frac{df}{dr}dr = \int_{r_+}^{\infty}\left(\Delta\left|\frac{dR}{dr}\right|^2 + \frac{V_{\mu}}{\Delta}\left|R\right|^2\right)dr.\]
This contradicts the non-triviality of $R$.
\subsection{Part \ref{part3}}
We still let $R$ be a solution to the radial ODE with parameters $(\omega,m,l,\mu)$ such that $\omega \in \mathbb{R}\setminus\{0\}$, $\mu^2 > \omega^2$, and $R$ satisfies the boundary conditions of a mode. From the previous two sections we know that
\[am-2Mr_+\omega = 0,\]
\[am \neq 0.\]

We wish to show that there exists a function $C(\omega,m,l)$ such that
\[\omega^2 < \mu^2 < \omega^2 + C(\omega,m,l).\]

Starting from (\ref{form}), using $\omega^2 = \frac{a^2m^2}{4M^2r_+^2}$, and (\ref{lamIneq}), one finds
\[\frac{dV_{\mu}}{dr}\left(r_+\right) = -4r_+\left(r_+^2+a^2\right)\omega^2 + 8M^2\omega^2r_+ + \left(r_+-r_-\right)\left(\lambda_{\kappa ml} + a^2\omega^2+r_+^2\mu^2\right)\]
\[ = 8Mr_+\omega^2\left(M-r_+\right) + \left(r_+-r_-\right)\left(\lambda_{\kappa ml} + a^2\omega^2 + r_+^2\mu^2\right)\]
\[ = \left(r_+-r_-\right)\left(-\frac{a^2m^2}{Mr_+} + \lambda_{\kappa ml} + a^2\omega^2 + r_+^2\mu^2\right)\]
\[ \geq \left(r_+-r_-\right)\left(\left|m\right|\left(\left|m\right|+1\right) - \frac{a^2m^2}{Mr_+} + r_+^2\mu^2\right) > 0.\]
In the third equality we used that $2\left(M-r_+\right) = -\left(r_+-r_-\right)$, and in the last line we used that $a < M < r_+$. Away from $r_+$, increasing $\mu$ strictly increases $V_{\mu}$, and as long as $\mu^2 > \omega^2$ the potential converge to $\infty$ as $r\to\infty$; hence, we may conclude that there exists $C(\omega,m,l)$ such that
\[\mu^2 > \omega^2 + C(\omega,m,l) \Rightarrow V_{\mu} \geq 0.\]
Now the proof concludes exactly as in section \ref{sect2}.

In order to establish that
\[\lim_{l\to\infty}C(\omega,m,l) = 0,\]
we just need Proposition \ref{eigBound3} which states
\[\frac{\partial\lambda_{\kappa ml}}{\partial \mu} > 0,\]
and the Sturm-Liouville theory fact that
\[\lim_{l\to\infty}\lambda_{\kappa ml} = \infty.\]
\section{Proof of Theorem \ref{blackHoleBomb}: Construction of Mode Solutions}
Now we will prove Theorem \ref{blackHoleBomb}. As in the previous section, we have placed some more technical aspects of the argument in the appendix.
\subsection{Outline of Proof}
Before beginning the proof we will give a brief outline. As mentioned in the introduction, we start by constructing real mode solutions. The key technical insight is a variational interpretation of real mode solutions. The variational problem will posses a degeneracy, but this will turn out to be a minor technical problem. Next, we will perturb our real mode solution into the upper complex half-plane by slightly varying $\omega$ and $\mu$. This argument relies on observing that mode solutions are in a one to one correspondence with zeros of a certain holomorphic function of $\omega$ and $\mu$. Given this, an appropriate application of the implicit function will conclude the argument. Lastly, we analyze how a mode in the upper half-plane can cross the real axis. The upshot will be that as long as we are in a bound state regime $(\mu^2 > \omega^2)$, a mode must become superradiant (proposition \ref{becomeSuper}) and lose mass (proposition \ref{loseMass}) as it enters the upper half-plane. Putting everything together will conclude the proof of Theorem \ref{blackHoleBomb}.
\subsection{Existence of Real Mode Solutions}\label{constructRealModes}
We begin with construction of modes corresponding to real $\omega$. In light of Theorem \ref{restrictions} we shall fix a choice of $\omega$, $m$, and $l$ such that $am-2Mr_+\omega = 0$ and $\omega \neq 0$. In the rest of this section all constants may depend on $\omega$, $m$, and $l$.
\subsubsection{A Variational Interpretation of Real Mode Solutions}
First, we shall need to review the local theory for the radial ODE. As recalled in appendix \ref{loc}, when $am-2Mr_+\omega = 0$, a local basis around $r_+$ of solutions to the radial ODE is given by
\[\left\{\varphi_1,\log(r-r_+)\varphi_2 + \varphi_3\right\}\]
where the $\varphi_i$ are all analytic near $r_+$, $\varphi_1(r_+) = 1$, $\varphi_2(r_+) = 1$, and $\varphi_3(r_+) = 0$. Our to be constructed solution $R$ should be a non-zero multiple of $\varphi_1$. As we already observed during the proof of Theorem \ref{restrictions}, the finite energy requirement (\ref{finiteEnergy}) and the local theory from appendix \ref{loc} implies that near infinity $R$ must be exponentially decreasing. It will be useful to further observe that the local analysis shows that if a solution of the radial ODE is not exponentially decreasing, then it is exponentially increasing.

Next, we explore the graph of $\frac{V_{\mu}}{\Delta}$. Using the formula (\ref{form}) and the assumption $\omega^2 = \frac{a^2m^2}{4M^2r_+^2}$, one may derive
\begin{equation}\label{form2}
\frac{V_{\mu}}{\Delta} = -\frac{a^2m^2}{4M^2r_+^2}\left(\Delta+4Mr + \frac{4M^2(r-r_+)}{r-r_-}\right) + \lambda_{\kappa ml} + a^2\omega^2 + r^2\mu^2.
\end{equation}
In particular, combining this with (\ref{lamIneq}) and the inequality $a < M < r_+$ gives
\[\frac{V_{\mu}}{\Delta}(r_+) =-\frac{a^2m^2}{Mr_+} + \lambda_{\kappa ml} + a^2\omega^2 + r_+^2\mu^2 \geq \]
\[-\frac{a^2m^2}{Mr_+} + \left|m\right|\left(\left|m\right|+1\right) + r_+^2\mu^2 \geq m^2\left(1- \frac{a^2}{Mr_+}\right) + r_+^2\mu^2 > 0.\]
Furthermore, it is easy to see that
\[\frac{V_{\mu}}{\Delta} = r^2\left(\mu^2-\omega^2\right) + O(r)\text{ as }r\to\infty,\]

Thus, there exists $r_+ < r_A(\mu^2) < r_B(\mu^2) < \infty$ such that $V_{\mu}$ can only be non-negative on $(r_A,r_B)$.\footnote{Note that this structure is absent in a study of the wave equation $(\mu = 0)$.} Furthermore, we can take $r_A$ increasing in $\mu^2$ and $r_B$ decreasing in $\mu^2$. Below, in lemma \ref{notEmpty} we will see that for $\mu^2$ sufficiently close to $\omega^2$, $V_{\mu}$ does in fact get very negative in $(r_A,r_B)$. This suggests that we could look for bound states of the radial ODE by minimizing the functional
\[\mathcal{L}_{\mu}(f) := \int_{r_+}^{\infty}\left(\Delta\left|\frac{df}{dr}\right|^2 + \frac{V_{\mu}}{\Delta}\left|f\right|^2\right)dr\]
over functions of unit $L^2$ norm. Note that any solution $f$ of the radial ODE with $\mathcal{L}_{\mu}(f) < \infty$ will automatically satisfy the correct boundary conditions (at $r = r_+$ \emph{and} $r = \infty$). This is the crucial way that the $am-2Mr_+\omega = 0$ assumption enters the construction. The degeneration of the radial ODE at $r_+$ poses some difficulties for a direct variational analysis of $\mathcal{L}_{\mu}$. Nevertheless, we will be able to overcome this by working with regularized versions of $\mathcal{L}_{\mu}$. In section \ref{variational} we will prove the following two propositions.
\begin{prop}\label{boundStates}For every $\mu$ sufficiently close to but larger than $\omega$, there exists a non-zero $f_{\mu}$ satisfying the boundary conditions of a mode solution and a constant $\nu_{\mu}\leq 0$ such that
\[\Delta\frac{d}{dr}\left(\Delta\frac{df_{\mu}}{dr}\right) - V_{\mu}f_{\mu} + \nu_{\mu}\Delta f_{\mu} = 0.\]
Furthermore, $\nu_{\mu}$ can be taken to be increasing in $\mu^2$.
\end{prop}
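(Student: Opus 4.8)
The plan is to obtain $f_\mu$ as a constrained minimizer and $\nu_\mu$ as the associated Lagrange multiplier. I would minimize $\mathcal{L}_\mu(f)$ subject to $\int_{r_+}^\infty |f|^2\,dr = 1$; computing the first variation shows the Euler–Lagrange equation is exactly
\[\Delta\frac{d}{dr}\left(\Delta\frac{df_\mu}{dr}\right) - V_\mu f_\mu + \nu_\mu\Delta f_\mu = 0,\]
with $\nu_\mu = \mathcal{L}_\mu(f_\mu)$ the minimal value. The natural form domain is the weighted space $\mathcal{H}$ obtained by completing $C_c^\infty(r_+,\infty)$ in the norm $\|f\|_{\mathcal{H}}^2 := \mathcal{L}_\mu(f) + C\|f\|_{L^2}^2$ for $C$ large. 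This is legitimate because $am-2Mr_+\omega = 0$ forces $V_\mu(r_+) = 0$, so $V_\mu/\Delta$ stays bounded near $r_+$, while $V_\mu/\Delta \to +\infty$ as $r\to\infty$.

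For existence I would check the usual three points. Boundedness below: since $V_\mu/\Delta$ is bounded below, $\mathcal{L}_\mu(f)\ge -C\|f\|_{L^2}^2$, so $\nu_\mu > -\infty$. Compactness: the confining growth $V_\mu/\Delta\to\infty$ at infinity yields a compact embedding $\mathcal{H}\hookrightarrow L^2(dr)$, which prevents a minimizing sequence from leaking mass to infinity, and weak lower semicontinuity of $\mathcal{L}_\mu$ then delivers a minimizer $f_\mu$ with $\|f_\mu\|_{L^2}=1$. Interior ODE regularity makes $f_\mu$ smooth on $(r_+,\infty)$, and finiteness of $\mathcal{L}_\mu(f_\mu)$ pins down the boundary behavior: the divergence of $\int \Delta|f'|^2$ for the $\log(r-r_+)$ branch forces $f_\mu$ to be a multiple of the regular solution $\varphi_1$, while $\int r^2|f|^2<\infty$ forces exponential decay at infinity. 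These are precisely the mode-solution boundary conditions.

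To see $\nu_\mu\le 0$ I would exhibit a negative-energy test function. Writing $\mu^2=\omega^2+s$, the relation $am-2Mr_+\omega=0$ gives $V_\mu = W + s\Delta r^2$ with $W\sim -2M\omega^2 r^3$ as $r\to\infty$, so for small $s>0$ the outer turning point escapes like $r_B\sim 2M\omega^2/s$ and the well $(r_A,r_B)$ becomes wide and deep. Feeding a half-sine supported on $[r_A,r_B]$ into $\mathcal{L}_\mu$, the explicit form of $V_\mu$ shows the kinetic term grows only like $r_B\sim s^{-1}$ whereas the negative potential term grows like $-s^{-2}$; hence $\mathcal{L}_\mu<0$ for small $s$, giving $\nu_\mu<0$. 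Monotonicity is a Feynman–Hellmann argument: since $\partial_{\mu^2}(V_\mu/\Delta)=r^2$, for $\mu_1^2<\mu_2^2$ one has $\mathcal{L}_{\mu_2}(f)=\mathcal{L}_{\mu_1}(f)+(\mu_2^2-\mu_1^2)\int r^2|f|^2\,dr$; testing with $f_{\mu_2}$ (for which $0<\int r^2|f_{\mu_2}|^2<\infty$ by confinement) gives $\nu_{\mu_2}\ge\nu_{\mu_1}+(\mu_2^2-\mu_1^2)\int r^2|f_{\mu_2}|^2>\nu_{\mu_1}$, so $\nu_\mu$ is strictly increasing in $\mu^2$.

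The hard part is the degeneration of the kinetic coefficient $\Delta$ at $r_+$: the form domain has a degenerate endpoint, so one must rule out concentration of a minimizing sequence at $r_+$ and confirm that the minimizer realizes the regular ($\varphi_1$) boundary behavior rather than one imposed artificially by the completion. I would handle this with the regularization flagged in the text: first solve the nondegenerate problem on $[r_++\epsilon,\infty)$ with the natural (Neumann) condition at $r_++\epsilon$, where the clean compactness above applies and yields $f_\epsilon$ and $\nu_\epsilon$; then establish $\epsilon$-uniform $\mathcal{H}$-bounds — an upper bound on $\nu_\epsilon$ from the test function above (whose support avoids $r_+$) and a lower bound from coercivity — and finally pass to the limit $\epsilon\to 0$, verifying that no mass is lost into $r_+$ so that $f_\mu := \lim f_\epsilon$ is a nontrivial solution of the full degenerate problem with $\nu_\mu\le 0$ and the correct boundary conditions.
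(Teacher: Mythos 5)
Your proposal is correct and follows the same basic architecture as the paper: regularize to $\mathcal{L}_{\mu}^{(\epsilon)}$ on $(r_++\epsilon,\infty)$, solve the nondegenerate constrained minimization by the direct method using the coercivity of the confining potential (the paper's estimate (\ref{trivialBound})), pass to the limit $\epsilon\to 0$ with uniform bounds, and read off the mode boundary conditions from finiteness of $\int\Delta|f_{\mu}'|^2\,dr$ (the paper's (\ref{aBound})), which kills the $\log(r-r_+)$ branch at the horizon and the growing branch at infinity. The genuine differences are at the level of the supporting lemmas, and they are instructive. For negativity, the paper's Lemma \ref{notEmpty} tests at $\mu=\omega$ exactly, with $f\approx r^{-3/4}$ supported at large $r$ where $V_{\omega}/\Delta=-2M\omega^2r+O(1)$, then invokes continuity in $\mu$; your half-sine in the well $(r_A,r_B)$ with $r_B\sim s^{-1}$, kinetic cost $\sim s^{-1}$ versus potential gain $\sim -s^{-2}$, is a correct quantitative alternative (both arguments are where $\omega\neq 0$ enters). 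For the regularized endpoint, the paper uses $H^1_0$ (Dirichlet) rather than your Neumann condition; Dirichlet is the slicker choice because extension by zero makes $\nu_{\mu}^{(\epsilon)}$ monotone in $\epsilon$, so $\nu_{\mu}:=\lim_{\epsilon\to0}\nu_{\mu}^{(\epsilon)}$ exists without subsequence extraction, whereas with Neumann data you lose this monotonicity and should extract limits from boundedness alone.

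Two places where you set yourself a harder task than necessary. First, "verifying that no mass is lost into $r_+$" is stronger than what the proposition needs: the paper never establishes attainment of $\nu_{\mu}$ for the degenerate problem, and only proves \emph{nontriviality} of the limit, which follows directly from (\ref{trivialBound}) — for the unit-norm minimizers with $\nu_{\mu}^{(\epsilon)}<0$, a fixed fraction of the $L^2$ mass is trapped in a fixed compact interval $[B_0,B_1]$, uniformly in $\epsilon$, so the strong local $L^2$ limit is non-zero. (Your stronger claim is in fact plausible, but proving it requires an anchoring/Hardy-type estimate in the $r^*$ variable showing that carrying unit $L^2(dr)$ mass toward $r_+$ forces $|f|\sim(r-r_+)^{-1/2}$ growth and hence unbounded $\int\Delta|f'|^2$; your stated "compact embedding $\mathcal{H}\hookrightarrow L^2$" does not come for free at the degenerate endpoint.) Second, your Feynman--Hellmann argument as written tests with a unit-norm minimizer $f_{\mu_2}$ of the \emph{degenerate} problem satisfying $\mathcal{L}_{\mu_2}(f_{\mu_2})=\nu_{\mu_2}$ — exactly the attainment statement that is in question. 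The clean fix is to run the comparison at fixed $\epsilon$, where minimizers exist with unit norm; since $\int r^2|f|^2\,dr\geq r_+^2$ there, one gets $\nu_{\mu_2}^{(\epsilon)}\geq\nu_{\mu_1}^{(\epsilon)}+r_+^2\left(\mu_2^2-\mu_1^2\right)$ uniformly in $\epsilon$, and letting $\epsilon\to0$ even yields strict monotonicity with a uniform modulus. The paper itself needs only non-strict monotonicity, which it gets for free from the pointwise monotonicity of $V_{\mu}$ in $\mu^2$ at the level of the infima.
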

\begin{prop}\label{modeSolution}There exists $\mu_0$ and corresponding $f_{\mu_0}$ such that $\nu_{\mu_0} = 0$.
\end{prop}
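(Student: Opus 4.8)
The plan is to regard the eigenvalue $\nu_{\mu}$ furnished by Proposition \ref{boundStates} as a function of the mass parameter and to locate a value $\mu_0$ at which $\nu_{\mu_0}=0$. The point of reaching $\nu_{\mu_0}=0$ is that the equation in Proposition \ref{boundStates} then collapses to the radial ODE (\ref{eqn}); since $f_{\mu_0}$ is non-trivial and already satisfies the boundary conditions of a mode solution, it is then exactly the real mode solution we seek.

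First I would extract the dependence of $\mu^2\mapsto\nu_{\mu}$ on the mass. Imposing $am-2Mr_+\omega=0$ and keeping $\omega$ fixed, $V_{\mu}$ depends on $\mu^2$ only through the angular eigenvalue $\lambda_{ml}$---a real-analytic function of $\mu^2$ via $\kappa=a^2(\omega^2-\mu^2)$ (see appendix \ref{angularEig})---and through the explicit term $\Delta\mu^2r^2$. Hence, on functions of unit $L^2$ norm,
\[\mathcal{L}_{\mu}(f)=\lambda_{ml}(\mu^2)+\mathcal{L}_{\star}(f)+\mu^2\int_{r_+}^{\infty}r^2\left|f\right|^2\,dr,\]
where $\mathcal{L}_{\star}$ is independent of $\mu^2$. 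The infimum over such $f$ of the last two terms is concave in $\mu^2$, being an infimum of affine functions, so $\nu_{\mu}$ is the sum of a concave function and the real-analytic term $\lambda_{ml}(\mu^2)$; in particular it is continuous in $\mu^2$. By Proposition \ref{boundStates} it is moreover non-decreasing in $\mu^2$ and satisfies $\nu_{\mu}\leq 0$.

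For the complementary sign I would look at the closing of the potential well. By Theorem \ref{restrictions}, part \ref{part3}, there is $C(m,l)>0$ with $V_{\mu}\geq 0$ on $(r_+,\infty)$ once $\mu^2\geq\omega^2+C(m,l)$; equivalently there is a critical mass $\mu_0$, with $\mu_0^2>\omega^2$, at which the well $(r_A,r_B)$ degenerates and $V_{\mu_0}\geq 0$. For such a mass both terms of $\mathcal{L}_{\mu_0}$ are pointwise non-negative, so $\mathcal{L}_{\mu_0}(f)\geq 0$ for every admissible $f$ and therefore $\nu_{\mu_0}\geq 0$. Combined with $\nu_{\mu_0}\leq 0$ this forces $\nu_{\mu_0}=0$; more robustly, since $\nu_{\mu}\leq 0$ for $\mu$ near $\omega$ while $\nu_{\mu}\geq 0$ at the critical mass, continuity and the intermediate value theorem produce a $\mu_0$ with $\nu_{\mu_0}=0$. (If $\nu_{\mu}$ were already to vanish for some $\mu$ close to $\omega$, there would be nothing to prove.) Taking $f_{\mu_0}$ to be the associated minimizer of Proposition \ref{boundStates} completes the construction.

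The step I expect to be the main obstacle is ensuring that this critical mass is genuinely reached within the scope of Proposition \ref{boundStates}, i.e.\ that a minimizer $f_{\mu_0}$ persists at the mass where $\nu_{\mu}$ is driven up to $0$. Precisely there the negative well of $V_{\mu}$ is closing and the eigenvalue $\nu_{\mu}\uparrow 0$ is meeting the bottom of the essential spectrum, so one must exclude the bound state spreading out and escaping to infinity as the well disappears. The monotonicity in $\mu^2$, the concavity established above, and the regularized functionals of Proposition \ref{boundStates}---which already absorb the degeneracy of the radial ODE at $r_+$---are the tools I would use both to pass to the limit and to secure the existence of the limiting minimizer $f_{\mu_0}$ together with $\nu_{\mu_0}=0$.
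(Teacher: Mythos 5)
Your continuity-plus-intermediate-value scheme for locating $\mu_0$ is sound and in fact matches the paper's skeleton: the paper proves continuity of $\nu_{\mu}$ by a direct Lipschitz estimate $\left|\nu_{\mu_1}^{(\epsilon)}-\nu_{\mu_2}^{(\epsilon)}\right| \leq C\left|\mu_1^2-\mu_2^2\right|$ uniform in $\epsilon$ (your observation that $\nu_{\mu}$ is $\lambda_{ml}(\mu^2)$ plus an infimum of functions affine in $\mu^2$, hence concave, is a legitimate and arguably cleaner substitute, granted the uniform lower bound from (\ref{trivialBound})); it then sets $\mu_0 := \sup\mathscr{A}$, notes $\mu_0 < \infty$ because $V_{\mu} \geq 0$ for $\mu^2 \geq \omega^2 + C(m,l)$ is incompatible with $\mathcal{L}_{\mu}(f) < 0$, and concludes $\nu_{\mu_0} = 0$ from monotonicity, continuity, and maximality --- exactly your argument, modulo the cosmetic point that $V_{\mu}\geq 0$ for large $\mu^2$ comes from the \emph{proof} of part \ref{part3} of Theorem \ref{restrictions} (it uses $am-2Mr_+\omega=0$, which holds in this setting), not from its statement.

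The gap is the step you yourself flag as ``the main obstacle'' and then leave unproven: the existence of $f_{\mu_0}$. Listing monotonicity, concavity, and the regularized functionals as ``the tools I would use'' is not an argument, and a direct variational limit genuinely breaks here: the proof of Proposition \ref{boundStates} extracts a \emph{non-zero} limit of the $f_{\mu}^{(\epsilon)}$ precisely because $\nu_{\mu} < 0$ prevents the minimizing mass from draining into the degenerate end $r \to r_+$, where $\int\Delta\left|\frac{df}{dr}\right|^2$ is cheap; at $\nu_{\mu_0} = 0$ that mechanism is unavailable and the weak limit could vanish. (Incidentally, your diagnosis that $\nu_{\mu}\uparrow 0$ ``meets the bottom of the essential spectrum'' at infinity is off: since $V_{\mu}/\Delta \sim (\mu^2-\omega^2)r^2 \to \infty$, the form is coercive at infinity by (\ref{trivialBound}); the real threat is concentration at the horizon.) The paper sidesteps compactness entirely with an ODE argument you would need to supply: for each $(\mu,\nu)$ take the unique solution $\tilde{R}(r,\mu,\nu)$ of the modified radial ODE with $\tilde{R}(r_+,\mu,\nu)=1$, write $\tilde{R} = \tilde{A}(\mu,\nu)\tilde{\rho}_1 + \tilde{B}(\mu,\nu)\tilde{\rho}_2$ with $\tilde{\rho}_1$ exponentially growing, $\tilde{\rho}_2$ exponentially decaying, and $\tilde{A},\tilde{B}$ analytic (appendix \ref{loc}). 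The eigenfunctions of Proposition \ref{boundStates} force $\tilde{A}(\mu_i,\nu_{\mu_i}) = 0$ along a sequence $\mu_i \uparrow \mu_0$, so continuity of $\tilde{A}$ and of $\nu_{\mu}$ give $\tilde{A}(\mu_0,0) = 0$, and one sets $f_{\mu_0} := \tilde{R}(\cdot,\mu_0,0)$. Zeros of the analytic connection coefficient survive the limit even when minimizers might not; without this (or a genuinely new compactness input at the threshold), your proof is incomplete exactly at its decisive step.
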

The $f_{\mu_0}$ is the solution we seek.

We will close the section with a preparatory lemma. Recall that we have fixed $\omega$, $m$, and $l$ which are assumed to satisfy $am-2Mr_+\omega = 0$ and $\omega \in \mathbb{R}\setminus\{0\}$. Define
\[\mathscr{A} := \left\{\mu > 0: \mu^2 > \omega^2 \text{ and } \exists f \in C_0^{\infty}\text{ with }\mathcal{L}_{\mu}(f) < 0\right\}.\]
\begin{lemm}\label{notEmpty}Let $\mu$ be sufficiently close to but larger than $\omega$. Then we will have
\[\mu \in \mathscr{A}.\]
\end{lemm}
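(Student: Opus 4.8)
The plan is to produce a single compactly supported test function whose energy is already negative in the limiting case $\mu=\omega$, and then pass to $\mu>\omega$ by continuity. Everything hinges on the behavior of the potential at infinity when $\mu=\omega$. Setting $\mu=\omega$ in (\ref{form}) and using $am-2Mr_+\omega=0$, the two quartic contributions $-(r^2+a^2)^2\omega^2$ and $\Delta\mu^2r^2$ cancel at top order; since $V_{\mu}(r_+)=0$ forces a factor of $(r-r_+)$, a direct expansion yields
\[
V_{\omega}(r)=-2M\omega^2(r-r_+)\bigl(r^2+r_+r-2Mr_+\bigr)+\Delta\,\lambda_{ml}.
\]
Because $r^2+r_+r-2Mr_+$ equals $2r_+(r_+-M)>0$ at $r=r_+$ and is increasing, it is positive on $(r_+,\infty)$; hence the first term is a negative cubic while $\Delta\lambda_{ml}$ is only quadratic, so $V_{\omega}(r)\sim -2M\omega^2 r^3$ and $V_{\omega}/\Delta\sim -2M\omega^2 r\to-\infty$ as $r\to\infty$.

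Given this, I would build the test function by concentrating mass at large radius to feed off the negative tail. Fix $\phi\in C_0^{\infty}(1,2)$ with $\phi\not\equiv0$ and set $f_L(r):=\phi(r/L)$, so $f_L$ is supported in $[L,2L]$. Rescaling by $s=r/L$ and using $\Delta\sim r^2$, the kinetic term $\int\Delta|f_L'|^2$ grows like $L$, whereas $\int(V_{\omega}/\Delta)|f_L|^2$ grows like $-L^2$. Thus for $L$ sufficiently large the negative potential contribution dominates and $\mathcal{L}_{\omega}(f_L)<0$; fix one such radius $L_0$.

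To conclude I would transfer this to $\mu>\omega$. Since $f_{L_0}$ is supported away from $r_+$, there is no issue with the degeneration of the weight, and the only $\mu$-dependence of the functional is through the single term $\Delta\mu^2r^2$ in $V_\mu$; hence
\[
\mathcal{L}_{\mu}(f_{L_0})=\mathcal{L}_{\omega}(f_{L_0})+(\mu^2-\omega^2)\int_{r_+}^{\infty}r^2|f_{L_0}|^2\,dr,
\]
which is affine (so continuous) in $\mu^2$ and whose correction term tends to $0$ as $\mu\to\omega^+$. As $\mathcal{L}_{\omega}(f_{L_0})<0$ strictly, we get $\mathcal{L}_{\mu}(f_{L_0})<0$ for all $\mu$ close enough to but larger than $\omega$; such $\mu$ satisfy $\mu^2>\omega^2$, so $\mu\in\mathscr{A}$.

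The main obstacle is the first step: recognizing that the superradiant identity $am-2Mr_+\omega=0$ is exactly what makes the quartic part of $V_\mu$ cancel at $\mu=\omega$ and leaves a \emph{negative} cubic tail, so that the potential well $(r_A,r_B)$ opens out to infinity as $\mu\to\omega^+$. Once the sign and growth of $V_{\omega}/\Delta$ at infinity are established, the scaling balance (kinetic $\sim L$ against potential $\sim -L^2$) and the continuity transfer are routine.
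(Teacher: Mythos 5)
Your proposal is correct and takes essentially the same route as the paper: the paper likewise observes that $am-2Mr_+\omega=0$ makes $V_{\omega}/\Delta = -2M\omega^2 r + O(1)$ near infinity (your exact factorization, which checks out using $r_+r_-=a^2$, just makes this explicit), tests with a compactly supported function concentrated at large radius (the paper takes $f=r^{-3/4}$ on a large compact set where you take a rescaled bump $\phi(r/L)$ --- an immaterial difference), and then transfers to $\mu>\omega$ by continuity of $\mu\mapsto\mathcal{L}_{\mu}(f)$ for fixed $f$. One minor inaccuracy: the $\mu$-dependence of $\mathcal{L}_{\mu}$ is not only through $\Delta\mu^2 r^2$, since $\lambda_{ml}$ depends on $\mu$ via $\kappa=a^2(\omega^2-\mu^2)$, so your displayed identity omits a term $\int\left(\lambda_{ml}(\mu)-\lambda_{ml}(\omega)\right)\left|f_{L_0}\right|^2 dr$; but $\lambda_{ml}$ is continuous (indeed analytic) in $\kappa$ by Proposition \ref{eigCurves}, so the continuity transfer --- which is all the paper itself invokes --- goes through unchanged.
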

\begin{proof}
For every fixed $f$, $\mathcal{L}_{\mu}(f)$ is continuous in $\mu$. Thus, it is sufficient to find a smooth $f$ with compact support such that
\[\mathcal{L}_{\omega}(f) < 0.\]
First, we note that near infinity
\[\frac{V_{\omega}}{\Delta} = -2M\omega^2r + O(1).\]
Hence, for $f$ supported in $(A,\infty)$ with $A$ large, we write
\[\mathcal{L}_{\omega}(f) = \int_A^{\infty}\left(\left(r^2+O(r)\right)\left|\frac{df}{dr}\right|^2 - \left(2M\omega^2r + O(1)\right)\left|f\right|^2\right)dr.\]
Since $\omega \neq 0$, if we set $f$ to be equal to $r^{-3/4}$ on a sufficiently large compact set $K$ and $0$ outside a slight enlargement of $K$, it is clear that we can make $\mathcal{L}_{\omega}(f)$ as negative as we please.
\end{proof}
We remark that this lemma is the only place where we shall use the $\omega \neq 0$ hypothesis.
\subsubsection{Analysis of the Variational Problem}\label{variational}
It will be useful to consider the following regularization of $\mathcal{L}_{\mu}$:
\[\mathcal{L}_{\mu}^{(\epsilon)}(f) := \int_{r_++\epsilon}^{\infty}\left(\Delta\left|\frac{df}{dr}\right|^2 + \frac{V_{\mu}}{\Delta}\left|f\right|^2\right)dr.\]
\begin{lemm}\label{regularizedBoundState}If $\mu^2 > \omega^2$, then there exists $f_{\mu}^{(\epsilon)} \in H_0^1(r_++\epsilon,\infty)$ with unit $L^2(r_++\epsilon,\infty)$ norm such that $\mathcal{L}_{\mu}^{(\epsilon)}$ achieves its infimum over $H_0^1(r_++\epsilon,\infty)$ functions of unit $L^2(r_++\epsilon,\infty)$ norm on $f_{\mu}^{(\epsilon)}$.
\end{lemm}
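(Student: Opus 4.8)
The plan is to apply the direct method of the calculus of variations. Two structural features make this work. First, restricting to $(r_++\epsilon,\infty)$ removes the degeneration of the radial ODE at the horizon: there $\Delta = (r-r_+)(r-r_-) \geq \epsilon(r_++\epsilon-r_-) > 0$ is bounded below by a positive constant, so the Dirichlet term $\int \Delta|f'|^2$ controls the full $H^1$ semi-norm. Second, using the explicit form of $V_\mu$ together with $\Delta \sim r^2$ at infinity, one checks that
\[\frac{V_\mu}{\Delta} = (\mu^2-\omega^2)r^2 + O(r)\text{ as }r\to\infty.\]
Since by hypothesis $\mu^2 > \omega^2$, this is a \emph{confining} potential growing quadratically at infinity, while on the bounded part of the domain it is merely bounded below (indeed, by the earlier graph analysis it is negative only on the compact interval $(r_A,r_B)$). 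Hence $V_\mu/\Delta \geq -C_0$ everywhere and $\mathcal{L}_\mu^{(\epsilon)}$ is bounded below by $-C_0$ on the unit-$L^2$ sphere.

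First I would fix a minimizing sequence $f_n \in H_0^1(r_++\epsilon,\infty)$ with unit $L^2$ norm and $\mathcal{L}_\mu^{(\epsilon)}(f_n) \to \inf$. For large $n$, combining $\mathcal{L}_\mu^{(\epsilon)}(f_n) \leq \inf + 1$ with $\Delta \geq c > 0$ and $V_\mu/\Delta \geq -C_0$ gives a uniform bound on $\int|f_n'|^2$; thus $(f_n)$ is bounded in $H^1$ and, after passing to a subsequence, converges weakly to some $f$ in the (weakly closed) subspace $H_0^1(r_++\epsilon,\infty)$. The quadratic growth of the potential provides a second, crucial bound: the same energy inequality forces a uniform weighted estimate $\int r^2|f_n|^2 \leq C$.

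The hard part is compactness: on the unbounded domain the embedding $H^1 \hookrightarrow L^2$ is not compact, so a priori mass could escape to infinity and the weak limit could have $\vv f\vv_{L^2} < 1$, making it inadmissible. This is exactly where the confining potential is decisive. On any bounded interval $(r_++\epsilon,R)$ the Rellich--Kondrachov theorem gives strong $L^2$ convergence $f_n \to f$, while for the tail the weighted bound gives, for any $\eta > 0$ and $R$ large enough,
\[\int_R^\infty|f_n|^2 \leq \frac{1}{R^2}\int_R^\infty r^2|f_n|^2 \leq \frac{C}{R^2} < \eta\]
uniformly in $n$. Local strong convergence together with uniformly small tails upgrades the convergence to strong in $L^2(r_++\epsilon,\infty)$, so $\vv f\vv_{L^2} = 1$ and $f$ is admissible.

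Finally I would verify weak lower semicontinuity: the convex Dirichlet term is weakly lower semicontinuous, the negative part of the potential term lives on the compact interval $(r_A,r_B)$ and passes to the limit by strong $L^2$ convergence, and the remaining nonnegative tail contribution is handled by Fatou's lemma. This yields $\mathcal{L}_\mu^{(\epsilon)}(f) \leq \liminf_n \mathcal{L}_\mu^{(\epsilon)}(f_n) = \inf$, and since $f$ is admissible the infimum is attained at $f_\mu^{(\epsilon)} := f$.
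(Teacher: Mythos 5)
Your proposal is correct and follows essentially the same route as the paper: both run the direct method, using the coercivity coming from the quadratic growth $(\mu^2-\omega^2)r^2$ of $V_{\mu}/\Delta$ at infinity (packaged in the paper as the single estimate (\ref{trivialBound})) to get the uniform $H^1$ and weighted $L^2$ bounds, then Rellich compactness on compact subsets plus the uniformly small tails to rule out mass loss, and finally weak lower semicontinuity to conclude the infimum is attained. The only difference is cosmetic: you split the coercivity into a pointwise lower bound plus a separate weighted tail estimate, where the paper combines them into one inequality.
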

\begin{proof}If omitted, all integration ranges are over $(r_++\epsilon,\infty)$. Recall that in section \ref{variational} we showed that $\frac{V_{\mu}}{\Delta}$ is increasing in $\mu^2$, is non-negative near $r_+$, and goes to infinity as $r \to \infty$. More specifically, we established
\[\frac{V_{\mu}}{\Delta}(r_+) \gtrsim \mu^2,\]
\[\frac{V_{\mu}}{\Delta} = r^2\left(\mu^2-\omega^2\right) + O(r) \text{ as }r\to\infty.\]
Hence, we can find $r_+ < B_0 < B_1$, $C_0 > 0$, and $C_1 > 0$ only depending\footnote{Remember that we have fixed $\omega$, $m$, and $l$ and that all constants in this section may depend on these.} on an lower bound for $\mu^2$ such that
\begin{equation}\label{trivialBound}
\int\left(\Delta\left|\frac{df}{dr}\right|^2 + C_0r^21_{[B_0,B_1]^c}\left(\mu^2-\omega^2\right)\left|f\right|^2\right)dr \leq C_1\int_{B_0}^{B_1}\left|f\right|^2dr + \mathcal{L}_{\mu}^{(\epsilon)}(f).
\end{equation}
From this it is clear that
\[\nu_{\mu}^{(\epsilon)} := \inf\left\{\mathcal{L}_{\mu}^{(\epsilon)}(f) : f\in C_c^{\infty}\text{ and }\vv f\vv_{L^2}=1\right\} > -\infty.\]

Let $\left\{f_{n,\mu}^{(\epsilon)}\right\}_{n=1}^{\infty}$ be a sequence of smooth functions, compactly supported in $(r_++\epsilon,\infty)$, with $\left\vert\left\vert f_{n,\mu}^{(\epsilon)}\right\vert\right\vert_{L^2} = 1$, such that
\[\mathcal{L}_{\mu}^{(\epsilon)}\left(f_{n,\mu}^{(\epsilon)}\right) \to \nu_{\mu}^{(\epsilon)}.\]
The bound (\ref{trivialBound}) implies that $\left\vert\left\vert f_{n,\mu}^{(\epsilon)}\right\vert\right\vert_{H^1}$ is uniformly bounded. We now apply Rellich compactness to produce a $f_{\mu}^{(\epsilon)} \in H_0^1$ such that a re-labeled subsequence of $\left\{f_{n,\mu}^{(\epsilon)}\right\}$ converges to $f_{\mu}^{(\epsilon)}$ weakly in $H^1$ and strongly in $L^2$ on compact subsets of $(r_+,\infty)$.

We claim that no mass is lost in the limit, i.e.~$\vv f_{\mu}^{(\epsilon)}\vv_{L^2} = 1$. Suppose not. Then, for any compact set $K$, there will exist infinitely many of the $f_{n,\mu}^{(\epsilon)}$'s such that
\[\left\vert\left\vert f_{n,\mu}^{(\epsilon)}\right\vert\right\vert_{L^2\left([r_+,\infty)\setminus K\right)} \geq \alpha > 0.\]
It is easy to see from (\ref{trivialBound}) that this will give a contradiction if $K$ is sufficiently large.

Using the boundedness of weak limits and the strong $L^2$ convergence, we then get
\[\nu_{\mu}^{(\epsilon)} \leq \mathcal{L}_{\mu}^{(\epsilon)}\left(f_{\mu}^{(\epsilon)}\right) \leq \liminf_{n\to\infty}\mathcal{L}_{\mu}^{(\epsilon)}\left(f_{n,\mu}^{(\epsilon)}\right) = \nu_{\mu}^{(\epsilon)}.\]
This implies that $\mathcal{L}_{\mu}^{(\epsilon)}$ achieves its minimum on $f_{\mu}^{(\epsilon)}$.
\end{proof}
Now we are ready to prove proposition \ref{boundStates}.
\begin{proof}
First we observe that $\left\{\nu_{\mu}^{(\epsilon)}\right\}_{\epsilon > 0}$ is bounded and decreasing in $\epsilon$. Set $\nu_{\mu} = \lim_{\epsilon\to 0}\nu_{\mu}^{(\epsilon)}$. Lemma \ref{notEmpty} implies that $\mu \in \mathscr{A}$ which in turn implies that $\nu_{\mu} < 0$. For any interval $K = \left(r_+ + \frac{1}{n},n\right)$ with $n$ large, (\ref{trivialBound}) implies that
\[\sup_{\epsilon > 0}\left\vert\left\vert f_{\mu}^{(\epsilon)}\right\vert\right\vert_{H^1(K)} < \infty,\]
\[\inf_{\epsilon > 0}\left\vert\left\vert f_{\mu}^{(\epsilon)}\right\vert\right\vert_{L^2(K)} > 0.\]
After an application of Rellich compactness and passing to a subsequence, we may find a non-zero $f_{\mu} \in H^1$ that is a weak $H^1$ and strong $L^2_{\text{loc}}$ limit of $f_{\mu}^{(\epsilon)}$.

Using the Euler-Lagrange equations associated to $\mathcal{L}_{\mu}^{(\epsilon)}$, we find
\[\Delta\frac{d}{dr}\left(\Delta\frac{df_{\mu}}{dr}\right) - V_{\mu}f_{\mu} + \nu_{\mu}\Delta f_{\mu} = 0.\]
On any compact subset $K$ of $(r_+,\infty)$, boundedness of weak limits and the $L^2_{\text{loc}}$ convergence of the $f_{\mu}^{(\epsilon)}$ imply that
\[\int_K\left(\Delta\left|\frac{df_{\mu}}{dr}\right|^2 + \frac{V_{\mu}}{\Delta}\left|f_{\mu}\right|^2\right) \leq \nu_{\mu}.\]
Hence,
\begin{equation}\label{aBound}
\int_{r_+}^{\infty}\Delta\left|\frac{df_{\mu}}{dr}\right|^2dr < \infty.
\end{equation}
Near $r_+$ the local theory from appendix \ref{loc} implies that
\[f_{\mu} = A\varphi_1 + B\left(\log(r-r_+)\varphi_2 + \varphi_3\right)\]
for some constants $A$ and $B$ and non-zero analytic functions $\varphi_i$. However, if $B \neq 0$, then
\[\int_{r_+}^{\infty}\Delta\left|\frac{df_{\mu}}{dr}\right|^2dr = \infty.\]
Hence, $B = 0$. Near infinity the local theory from appendix \ref{loc} implies that that $f_{\mu}$ is asymptotic to a linear combination of an exponentially growing solution and an exponentially decaying solution. The bound (\ref{aBound}) clearly implies that $f_{\mu}$ is in fact exponentially decaying. Thus, $f_{\mu}$ satisfies the boundary conditions of a mode solution.
\end{proof}
Finally, we can prove proposition \ref{modeSolution}.
\begin{proof}First we will show that $\nu_{\mu}$ is continuous for $\mu \in \mathscr{A}$. Let us normalize each $f_{\mu}^{(\epsilon)}$ so that $\left\vert\left\vert f_{\mu}^{(\epsilon)}\right\vert\right\vert_{L^2} = 1$. We have
\[\nu_{\mu_1}^{(\epsilon)} = \mathcal{L}_{\mu_1}^{(\epsilon)}\left(f_{\mu_1}^{(\epsilon)}\right) = \mathcal{L}_{\mu_2}^{(\epsilon)}\left(f_{\mu_1}^{(\epsilon)}\right) + \int_{r_++\epsilon}^{\infty}\frac{V_{\mu_1}-V_{\mu_2}}{\Delta}\left|f_{\mu_1}^{(\epsilon)}\right|^2dr \geq \] \[\nu_{\mu_2}^{(\epsilon)} - \left|\mu_1^2 - \mu_2^2\right|\int_{r_++\epsilon}^{\infty}r^2\left|f_{\mu_1}^{(\epsilon)}\right|^2dr.\]
Reversing the roles of $\mu_1$ and $\mu_2$ gives
\[\left|\nu_{\mu_1}^{(\epsilon)}-\nu_{\mu_2}^{(\epsilon)}\right| \leq \left|\mu_1^2-\mu_2^2\right|\int_{r_++\epsilon}^{\infty}r^2\left(\left|f_{\mu_1}^{(\epsilon)}\right|^2 + \left|f_{\mu_2}^{(\epsilon)}\right|^2\right)dr \leq\]
\[\left|\mu_1^2-\mu_2^2\right|\left(C + \int_{r_++\epsilon}^{\infty}r^21_{[B_0,B_1]^c}\left(\left|f_{\mu_1}^{(\epsilon)}\right|^2 + \left|f_{\mu_2}^{(\epsilon)}\right|^2\right)dr\right) \leq\]
\[C'\left|\mu_1^2-\mu_2^2\right|.\]
In these inequalities we have used (\ref{trivialBound}), $\left\vert\left\vert f_{\mu}^{(\epsilon)}\right\vert\right\vert_{L^2} = 1$,  and the fact that the $\nu_{\mu_i}^{(\epsilon)}$ are negative. Since the constant $C'$ is independent of $\epsilon$, we may take $\epsilon$ to $0$.

By lemma \ref{notEmpty} $\mathscr{A} \neq \emptyset$. Hence, we may set
\[\mu_0 := \sup\mathscr{A}.\]
It is clear that for any $\mu \in \mathscr{A}$, we cannot have $V_{\mu}$ strictly positive on $(r_+,\infty)$. Thus $\mu_0 < \infty$. Since $\nu_{\mu}$ is increasing in $\mu$, we may extend $\nu_{\mu}$ continuously so that $\nu_{\mu_0}$ exists. We will of course have $\nu_{\mu_0} \leq 0$. Suppose that $\nu_{\mu_0} < 0$. Then, one may easily show that $\mu_0 \in \mathscr{A}$, and hence we can run the existence argument above to construct a corresponding $f_{\nu_{\mu_0}}$. Next, by continuity we could slightly increase $\mu_0$ to $\mu'_0 \in \mathscr{A}$, run the existence argument again, and conclude that $\nu_{\mu'_0} < 0$. This of course contradicts the definition of $\mu_0$. We conclude that $\nu_{\mu_0} = 0$.

It remains to show that there exists a corresponding $f_{\mu_0}$. From the local theory in appendix \ref{loc}, for every $\mu$ and $\nu$, we have a unique solution $\tilde{R}(r,\mu,\nu)$ to
\[\Delta\frac{d}{dr}\left(\Delta\frac{d\tilde{R}}{dr}\right) - V_{\mu}\tilde{R} + \nu\Delta\tilde{R} = 0\]
which satisfies $\tilde{R}(r_+,\mu,\nu) = 1$. At infinity there will be a local basis of solutions spanned by $\tilde{\rho}_1(r,\mu,\nu)$ and $\tilde{\rho}_2(r,\mu,\nu)$ where $\tilde{\rho_1}$ is exponentially increasing, $\tilde{\rho}_2$ is exponentially decreasing, and both depend analytically on $r$, $\mu$, and $\nu$. Lastly, we have analytic reflection and transmission coefficients $\tilde{A}(\mu,\nu)$ and $\tilde{B}(\mu,\nu)$ defined by
\[\tilde{R}(r,\mu,\nu) = \tilde{A}(\mu,\nu)\tilde{\rho}_1(r,\mu,\nu) + \tilde{B}(\mu,\nu)\tilde{\rho}_2(r,\mu,\nu).\]
As $\mu \uparrow \mu_0$ we have $\tilde{A}(\mu,\nu_{\mu}) = 0$. It follows that $\tilde{A}(\mu_0,0) = 0$. We may then set
\[f_{\mu_0}(r) := \tilde{R}(r,\mu_0,0).\]
\end{proof}
\subsection{Construction of the Exponentially Growing Modes}\label{growModes}
In this section our goal is to perturb the real modes into the complex upper half-plane with an appropriate application of the implicit function theorem. Using the previous section we may start with a solution $R$ to the radial ODE satisfying the boundary conditions of a mode and with frequency parameters $(\omega_R(0),m,l,\mu(0))$ such that $\omega_R(0) \in \mathbb{R}$ and $\mu^2(0) > \omega_R^2(0)$. For any $\omega = \omega_R + i\omega_I$ and $\mu$ sufficiently close to $\omega_R(0)$ and $\mu(0)$ respectively, the local theory from appendix \ref{loc} will give us two linearly independent solutions to the radial ODE $\rho_1(r,\omega,\mu)$ and $\rho_2(r,\omega,\mu)$ such that $\rho_1$ is exponentially increasing at infinity, $\rho_2$ is exponentially decreasing at infinity, and both depend holomorphically on $\omega$ and analytically on $\mu$. Furthermore, the local theory around $r_+$ tells us that, up to normalizing properly, for each $\omega = \omega_R + i\omega_I$ and $\mu$ we have a unique local solution $R(r,\omega,\mu)$ around $r_+$ satisfying the correct boundary condition. We have
\begin{equation}\label{connect}
R(r,\omega,\mu) =  A(\omega,\mu)\rho_1(r,\omega,\mu) + B(\omega,\mu)\rho_2(r,\omega,\mu).
\end{equation}
As shown in appendix \ref{loc}, $A$ and $B$ are holomorphic in $\omega$ and $\mu$. Finding a mode solution is equivalent to finding a zero of $A$. We have picked our parameters so that $A\left(\omega_R(0),\mu(0)\right)) = 0$. Let's write $A = A_R + iA_I$. Next, we note that an application of the implicit function theorem will produce our unstable modes if we can establish
\[
\det\begin{pmatrix}
    \frac{\partial A_R}{\partial \omega_R} & \frac{\partial A_R}{\partial \mu} \\
    \frac{\partial A_I}{\partial \omega_R} & \frac{\partial A_I}{\partial \mu}
\end{pmatrix}\left(\omega_R(0),\mu(0)\right) \neq 0.
\]
In order to do this, we shall return to the energy current
\[Q_T = \text{Im}\left(\Delta\frac{dR}{dr}\overline{R}\right).\]
Recall that in section \ref{proofTheoRestrictions} we saw
\[\frac{dQ_T}{dr} = 0,\]
\[Q_T(r_+) = am-2Mr_+\omega_R.\]
We have used the normalization $\left|R(r_+)\right|^2 = 1$ in the second statement. Next, let's write $Q_T$ in terms of $\rho_1$ and $\rho_2$.
\[Q_T = \left|A\right|^2\Delta\text{Im}\left(\frac{d\rho_1}{dr}\overline{\rho_1}\right) + \Delta\text{Im}\left(A\frac{d\rho_1}{dr}\overline{B\rho_2}\right) +\] \[\Delta\text{Im}\left(B\frac{d\rho_2}{dr}\overline{A\rho_1}\right) + \left|B\right|^2\Delta\text{Im}\left(\frac{d\rho_2}{dr}\overline{\rho_2}\right).\]
Before examining this at infinity, let us note the precise asymptotics of the $\rho_i$ as recalled in appendix \ref{loc}.
\[\rho_1 \sim e^{\sqrt{\mu^2-\omega_R^2}r}r^{-1 + \frac{M(2\omega_R^2-\mu^2)}{\sqrt{\mu^2-\omega_R^2}}},\]
\[\rho_2 \sim e^{-\sqrt{\mu^2-\omega_R^2}r}r^{-1 - \frac{M(2\omega_R^2-\mu^2)}{\sqrt{\mu^2-\omega_R^2}}}.\]
Furthermore, it's easy to see from the construction of the $\rho_i$ that they are both real valued. Now let's compute $Q_T(\infty)$. Since the $\rho_i$ are real, the first and last terms clearly vanish. The exponential powers cancel in the middle terms, and we find
\[Q_T(\infty) = \sqrt{\mu^2-\omega_R^2}\text{Im}\left(A\overline{B}\right) -\sqrt{\mu^2-\omega_R^2}\text{Im}\left(B\overline{A}\right) = 2\sqrt{\mu^2-\omega_R^2}\text{Im}\left(A\overline{B}\right).\]
We conclude that
\begin{equation}\label{eqn2}
am-2Mr_+\omega_R = 2\sqrt{\mu^2-\omega_R^2}\text{Im}\left(A\overline{B}\right).
\end{equation}
Since $A(\omega_R(0),\mu(0)) = 0$, taking derivatives of (\ref{eqn2}) implies that
\[-2Mr_+ = 2\sqrt{\mu^2(0)-\omega_R^2(0)}\text{Im}\left(\frac{\partial A}{\partial \omega_R}\left(\omega_R(0),\mu(0)\right)\overline{B}\left(\omega_R(0),\mu(0)\right)\right),\]
\[0 = 2\sqrt{\mu^2(0)-\omega_R^2(0)}\text{Im}\left(\frac{\partial A}{\partial \mu}\left(\omega_R(0),\mu(0)\right)\overline{B}\left(\omega_R(0),\mu(0)\right)\right).\]
Since $B\left(\omega_R(0),\mu(0)\right) \neq 0$, these two equations imply that the vectors $\left(\frac{\partial A_R}{\partial \omega_R},\frac{\partial A_I}{\partial \omega_R}\right)$ and $\left(\frac{\partial A_R}{\partial \mu},\frac{\partial A_I}{\partial\mu}\right)$ are linearly independent at $\left(\omega_R(0),\mu(0)\right)$ if and only if $\frac{\partial A}{\partial \mu}\left(\omega_R(0),\mu(0)\right) \neq 0$, i.e.
\[
\det\begin{pmatrix}
    \frac{\partial A_R}{\partial \omega_R} & \frac{\partial A_R}{\partial \mu} \\
    \frac{\partial A_I}{\partial \omega_R} & \frac{\partial A_I}{\partial \mu}
\end{pmatrix}\left(\omega_R(0),\mu(0)\right) \neq 0 \Leftrightarrow \frac{\partial A}{\partial \mu}\left(\omega_R(0),\mu(0)\right) \neq 0.
\]
It remains to establish
\begin{lemm}\label{nonDegen}
\[\frac{\partial A}{\partial \mu} \neq 0.\]
\end{lemm}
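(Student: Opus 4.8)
The plan is to reduce the claim to the strict positivity of an explicit weighted integral. Throughout I fix $\omega = \omega_R(0)$ (real, with $am - 2Mr_+\omega = 0$) and vary only $\mu$, evaluating everything at the base point $\mu = \mu(0)$, where $A = 0$ and hence $R = B\rho_2$ with $B \neq 0$. For solutions $u,v$ of the radial ODE (\ref{eqn}) the weighted Wronskian $W(u,v) := \Delta\left(u'v - uv'\right)$ is independent of $r$, and $W(\rho_1,\rho_2) \neq 0$ by independence of $\rho_1,\rho_2$. Differentiating the decomposition (\ref{connect}) in $\mu$ at the base point and pairing against $\rho_2$, the terms proportional to $\rho_2$ drop out of the Wronskian and the $A\,\partial_\mu\rho_1$ term vanishes because $A = 0$, so that
\[\lim_{r\to\infty}W(\partial_\mu R,\rho_2) = \partial_\mu A\cdot W(\rho_1,\rho_2),\]
using that the remaining cross term $B\,W(\partial_\mu\rho_2,\rho_2)$ decays exponentially as $r\to\infty$ (both $\rho_2$ and $\partial_\mu\rho_2$ are exponentially decaying). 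Thus it suffices to compute $W(\partial_\mu R,\rho_2)$ and show the left-hand side is nonzero.

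First I would produce an integral formula for $W(\partial_\mu R,\rho_2)$. Writing $LR := \Delta\frac{d}{dr}\left(\Delta\frac{dR}{dr}\right) - V_\mu R$, differentiation of $LR = 0$ in $\mu$ gives the inhomogeneous equation $L(\partial_\mu R) = (\partial_\mu V_\mu)R$, while $L\rho_2 = 0$. The identity $\frac{d}{dr}W(u,v) = \frac{1}{\Delta}\left(vLu - uLv\right)$ then yields
\[\frac{d}{dr}W(\partial_\mu R,\rho_2) = \frac{1}{\Delta}\rho_2\,(\partial_\mu V_\mu)\,R.\]
Integrating from $r_+$ to $\infty$, the horizon boundary term vanishes: the normalization $R(r_+) = 1$ is $\mu$-independent (we are at $\xi = 0$), so $\partial_\mu R(r_+) = 0$, and the prefactor $\Delta$ vanishes at $r_+$ while the local theory of appendix \ref{loc} controls the logarithmic solution strongly enough that the factor $\Delta\to 0$ kills it. Combining with the previous paragraph and using $R = B\rho_2$ at the base point gives
\[\partial_\mu A\cdot W(\rho_1,\rho_2) = B\int_{r_+}^{\infty}\frac{1}{\Delta}(\partial_\mu V_\mu)\,\rho_2^2\,dr.\]
I expect the verification of the horizon boundary term to be the most delicate technical step, since $\rho_2$ generically carries the logarithmic behavior at $r_+$.

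It remains to show the integrand is of one sign. From the explicit form of $V_\mu$ one has $\frac{1}{\Delta}\partial_\mu V_\mu = 2\mu\left(r^2 + \partial_{\mu^2}\lambda_{ml}\right)$, so the sign hinges on how the angular eigenvalue depends on the mass. Since $\lambda_{ml}$ enters the angular ODE (\ref{sml}) only through $\kappa := a^2(\omega^2 - \mu^2)$, via the term $a^2(\omega^2-\mu^2)\cos^2\theta$, a Feynman--Hellmann computation for the real, regular Sturm--Liouville problem gives $\frac{d\lambda_{ml}}{d\kappa} = -\langle\cos^2\theta\rangle$, where $\langle\cdot\rangle$ denotes the average against $|S_{ml}|^2\sin\theta\,d\theta$; hence $\partial_{\mu^2}\lambda_{ml} = a^2\langle\cos^2\theta\rangle > 0$ and
\[\frac{1}{\Delta}\partial_\mu V_\mu = 2\mu\left(r^2 + a^2\langle\cos^2\theta\rangle\right) > 0 \quad\text{for all } r > r_+.\]
This is precisely the angular average of the volume factor $\rho^2 = r^2 + a^2\cos^2\theta$ through which the mass couples to the field, which makes the positivity structural rather than accidental. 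The integral is therefore strictly positive, and since $\mu > 0$, $B \neq 0$, and $W(\rho_1,\rho_2) \neq 0$, we conclude $\partial_\mu A \neq 0$. Equivalently, this positivity is the assertion that the variational eigenvalue $\nu_\mu$ of Proposition \ref{boundStates} is \emph{strictly} increasing in $\mu^2$, so the conceptual heart of the argument is monotonicity of the mass term; the main obstacle is thus extracting the hidden $\mu$-dependence of $\lambda_{ml}$ and its sign rather than any hard estimate.
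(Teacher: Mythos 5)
Your proof is correct, but it packages the paper's argument differently rather than following it verbatim. The paper argues by contradiction: assuming $\frac{\partial A}{\partial\mu}(\omega_R(0),\mu(0)) = 0$, the differentiated decomposition (\ref{connect}) makes $\frac{\partial R}{\partial\mu}$ exponentially decaying at infinity, so pairing the $\mu$-differentiated radial ODE against $\overline{R}$ and integrating by parts twice (no boundary terms) yields $\int_{r_+}^{\infty}\left(2\mu r^2 + \frac{\partial\lambda}{\partial\mu}\right)|R|^2\,dr = 0$, contradicting $R \not\equiv 0$ via Proposition \ref{eigBound3}. You run the same Green's-type pairing directly, against $\rho_2$ instead of $\overline{R}$, and obtain the explicit identity $\frac{\partial A}{\partial\mu}\cdot W(\rho_1,\rho_2) = B\int_{r_+}^{\infty}\Delta^{-1}\left(\partial_\mu V_\mu\right)\rho_2^2\,dr$, whose integrand is positive by exactly the same input: your Feynman--Hellmann computation $\partial_{\mu^2}\lambda = a^2\langle\cos^2\theta\rangle > 0$ is precisely the paper's proof of Proposition \ref{eigBound3}. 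So the analytic core coincides; what your version buys is a quantitative formula that determines the sign of $\frac{\partial A}{\partial\mu}$ relative to $B$ and $W(\rho_1,\rho_2)$, not merely its non-vanishing, at the price of justifying the Wronskian limits at both ends. On that score, the horizon boundary term you flag as the delicate step is in fact trivial at the base point: since $A(\omega_R(0),\mu(0)) = 0$, the decaying solution $\rho_2$ is proportional to $R$, which at $\xi = 0$ is the regular branch $\varphi_1$, so $\rho_2$ carries no logarithm there; combined with $\frac{\partial R}{\partial\mu}(r_+) = 0$ (the normalization $R(r_+) = 1$ is $\mu$-independent) and $\Delta(r_+) = 0$, the boundary term vanishes with no estimate needed. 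The one technical point you should make explicit is that $\partial_\mu\rho_2$ inherits exponential decay (so that $W(\partial_\mu\rho_2,\rho_2) \to 0$ at infinity despite the factor $\Delta \sim r^2$); this follows from Cauchy estimates in $\mu$ applied to the locally uniform exponential bounds in the construction of $\rho_2$ in appendix \ref{loc}. Your closing remark that the lemma is ``equivalent'' to strict monotonicity of $\nu_\mu$ in $\mu^2$ is a reasonable heuristic but is stronger than what you prove and is not needed; I would soften or drop it.
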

\begin{proof}
For the sake of contradiction, suppose that $\frac{\partial A}{\partial \mu}\left(\omega_R(0),\mu(0)\right) = 0$. Differentiating (\ref{connect}) gives
\[\frac{\partial R}{\partial \mu}(r,\omega_R(0),\mu(0)) =  \frac{\partial B}{\partial \mu}(\omega_R(0),\mu(0))\rho_2(r,\omega_R(0),\mu(0)) + \] \[B(\omega_R(0),\mu(0))\frac{\partial\rho_2}{\partial\mu}(r,\omega_R(0),\mu(0)).\]
This implies that $\frac{\partial R}{\partial \mu}$ is exponentially decreasing at infinity. The analysis from appendices \ref{localAnalysis} and \ref{loc} implies that $\frac{\partial R}{\partial \mu}$ is smooth at $r_+$.\footnote{Recall that, as discussed in appendix \ref{loc}, $R$ is smooth at $r_+$ when $am-2Mr_+ = 0$.} Differentiating the radial ODE with respect to $\mu$, multiplying by $\overline{R}$, and integrating gives
\begin{equation}\label{something}
\int_{r_+}^{\infty}\left(\frac{\partial}{\partial r}\left(\Delta\frac{\partial^2R}{\partial r\partial \mu}\right) - \frac{V_{\mu}}{\Delta}\frac{\partial R}{\partial\mu}\right)\overline{R}dr = \int_{r_+}^{\infty}\left(2\mu r^2 + \frac{\partial\lambda_{\kappa ml}}{\partial\mu}\right)\left|R\right|^2dr.
\end{equation}
Integrating by parts twice on the left hand side will produce no boundary terms since both $R$ and $\frac{\partial R}{\partial \mu}$ are exponentially decreasing at infinity and $\Delta\left(r_+\right) = 0$. Thus we have
\[\int_{r_+}^{\infty}\left(\frac{\partial}{\partial r}\left(\Delta\frac{\partial^2R}{\partial r\partial \mu}\right) - \frac{V_{\mu}}{\Delta}\frac{\partial R}{\partial\mu}\right)\overline{R}dr = \]
\[\int_{r_+}^{\infty}\frac{\partial R}{\partial\mu}\left(\frac{\partial}{\partial r}\left(\Delta\frac{\partial \overline{R}}{\partial r}\right) - \frac{V_{\mu}}{\Delta}\overline{R}\right)dr = 0.\]
We have used that $\overline{R}$ is a solution of the radial ODE in the last equality. Plugging this into (\ref{something}) then gives
\[\int_{r_+}^{\infty}\left(2\mu r^2 + \frac{\partial\lambda_{\kappa ml}}{\partial\mu}\right)\left|R\right|^2dr = 0.\]
Since proposition \ref{eigBound3} from the appendix says that
\[\frac{\partial\lambda_{\kappa ml}}{\partial\mu} > 0,\]
we conclude that $R$ vanishes, a contradiction.
\end{proof}
\subsection{Modes Crossing the Real Axis}\label{modeCross}
In this section we shall investigate how a mode can ``cross'' the real axis. Let's introduce a little more notation. From the analysis of the previous section we have a family of solutions $R(r,\epsilon)$ to the radial ODE satisfying the boundary conditions of a mode with parameters $(\omega(\epsilon),m,l,\mu(\epsilon))$ where $\omega(\epsilon) = \omega_R(\epsilon) + i\epsilon$. Implicitly we have also been using the existence of a family $\lambda_{\kappa ml}$ of eigenvalues to the angular ODE (see proposition \ref{eigCurves}). These functions are all defined for $\left|\epsilon\right| \ll 1$. In what follows we will often omit the $\epsilon$'s and we shall assume $0 < \omega_R(\epsilon) < \mu(\epsilon)$. Using the symmetry of the equations under $(\omega,m) \mapsto (-\omega,-m)$ one may check that this assumption implies no loss of generality. The function $R$ will satisfy
\[\frac{\partial}{\partial r}\left(\Delta\frac{\partial R}{\partial r}\right) - \frac{V_{\mu}}{\Delta}R = 0,\]
\[R \sim (r-r_+)^{\frac{i(am-2Mr_+\omega)}{r_+-r_-}}\text{ at }r_+,\]
\[R \sim e^{-r\sqrt{\mu^2-\omega^2}}r^{-1 - \frac{M(2\omega^2-\mu^2)}{\sqrt{\mu^2-\omega^2}}}\text{ at }r = \infty,\]
\[V_{\mu} := -(r^2+a^2)^2\omega^2 + 4Mamr\omega - a^2m^2 + \Delta\left(\lambda + a^2\omega^2 + \mu^2r^2\right).\]
We also have
\[\frac{1}{\sin\theta}\frac{\partial}{\partial\theta}\left(\sin\theta\frac{\partial S}{\partial\theta}\right) - \left(\frac{m^2}{\sin^2\theta} - a^2\left(\omega^2-\mu^2\right)\cos^2\theta\right)S + \lambda S = 0\]
where $S(\cdot,\epsilon): \theta \in (0,\pi) \to \mathbb{C}$ is given boundary conditions which make it regular at $\theta = 0,\pi$ (see appendix \ref{angularEig}). Note that we have suppressed the $\kappa$, $m$, and $l$ indices from $S_{\kappa ml}$ and $\lambda_{\kappa ml}$.

From Theorem \ref{restrictions} we know that
\begin{equation}\label{thresholdSuper}
\omega_R(0) = \frac{am}{2Mr_+}.
\end{equation}
We wish to investigate the signs of $\frac{\partial \omega_R}{\partial\epsilon}(0)$ and $\frac{\partial \mu}{\partial\epsilon}(0)$. The condition \ref{thresholdSuper} corresponds to our mode solution being exactly on the threshold of superradiance. This makes sense because when $\epsilon = 0$ the solution neither grows nor decays with time. For $\epsilon > 0$ the mode solution will grow with time. Hence, we expect the mode to become superradiant (\ref{sup}).

This leads us to
\begin{prop}\label{becomeSuper}If $\epsilon > 0$ we must have
\[\omega^2_R(\epsilon) + \epsilon^2 < \left(\frac{am}{2Mr_+}\right)^2.\]
In particular
\[\frac{\partial\omega_R}{\partial\epsilon}\left(0\right) \leq 0.\]
\end{prop}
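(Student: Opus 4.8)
The plan is to extract the superradiance bound from two energy identities for the radial ODE, combined into a single multiplier whose value at the horizon is precisely the quantity we wish to sign. Throughout I use the normalization $0 < \omega_R < \mu$ (permissible by the $(\omega,m)\mapsto(-\omega,-m)$ symmetry), so that $am>0$ and $\omega_R(0)=\frac{am}{2Mr_+}$ by Theorem \ref{restrictions}. For $\epsilon>0$ the horizon boundary condition reads $R\sim(r-r_+)^{\xi}$ with $\text{Re}(\xi)=\frac{2Mr_+\epsilon}{r_+-r_-}>0$, so $R$, $R'$, and $\Delta R'\overline{R}$ all vanish at $r_+$; together with exponential decay at infinity this kills every boundary term below. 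First I would record the two identities coming from the currents $Q_T=\text{Im}\!\left(\Delta R'\overline{R}\right)$ and $f=\text{Re}\!\left(\Delta R'\overline{R}\right)$ already used in Section \ref{proofTheoRestrictions}. Since $\frac{d}{dr}(\Delta R')=\frac{V_\mu}{\Delta}R$, one gets $Q_T'=\frac{|R|^2}{\Delta}\text{Im}(V_\mu)$ and $f'=\Delta|R'|^2+\frac{\text{Re}(V_\mu)}{\Delta}|R|^2$. Integrating over $(r_+,\infty)$ and using the vanishing boundary terms yields, with $w:=|R|^2/\Delta>0$,
\[ \int_{r_+}^\infty w\,\text{Im}(V_\mu)\,dr=0,\qquad \int_{r_+}^\infty w\,\text{Re}(V_\mu)\,dr=-\int_{r_+}^\infty\Delta|R'|^2\,dr<0. \]

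Next I would feed in the angular ODE. Pairing it with $\overline{S}\sin\theta$ and integrating, the imaginary part gives $\text{Im}(\lambda)=-2a^2\omega_R\epsilon\langle\cos^2\theta\rangle$, whence $\text{Im}(V_\mu)=2\epsilon\,h(r)$ with $h(r)=-\omega_R(r^2+a^2)^2+2Mamr+a^2\omega_R\langle\sin^2\theta\rangle\Delta$; the real part gives $\text{Re}(\lambda)=\int\sin\theta|S'|^2+m^2\langle\sin^{-2}\theta\rangle+a^2(\mu^2-\omega_R^2+\epsilon^2)\langle\cos^2\theta\rangle\geq0$, using $\mu^2>\omega_R^2$. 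The crucial move is to form the combination $\Psi:=2\omega_R h-\text{Re}(V_\mu)=\frac{1}{\epsilon}\text{Im}(\overline{\omega}V_\mu)$. Since the first identity is equivalent to $\int w\,h\,dr=0$, multiplying by $2\omega_R$ and subtracting the second gives $\int_{r_+}^\infty w\,\Psi\,dr=\int_{r_+}^\infty\Delta|R'|^2\,dr>0$. A direct computation, using $r_+^2+a^2=2Mr_+$, shows that all terms linear in $r$ and the $\text{Re}(\omega^2)$–contributions collapse so that $\Psi(r)=-(r^2+a^2)^2|\omega|^2+a^2m^2+\Delta\,B(r)$ for an explicit coefficient $B(r)=a^2\omega_R^2(2\langle\sin^2\theta\rangle-1)+a^2\epsilon^2-\text{Re}(\lambda)-\mu^2 r^2$; in particular $\Psi(r_+)=(am)^2-(2Mr_+)^2|\omega|^2$ is exactly the target quantity.

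The main obstacle, and the heart of the argument, is the pointwise domination $\Psi(r)\leq\Psi(r_+)$ for $r>r_+$: given it, $0<\int w\Psi\leq\Psi(r_+)\int w$ forces $\Psi(r_+)>0$. Writing $(r^2+a^2)^2-(2Mr_+)^2=(r-r_+)(r+r_+)(r^2+r_+^2+2a^2)$ and $\Delta=(r-r_+)(r-r_-)$, one factors $\Psi(r)-\Psi(r_+)=(r-r_+)\big[-(r+r_+)(r^2+r_+^2+2a^2)|\omega|^2+(r-r_-)B(r)\big]$. The first bracketed term is negative; for the second it suffices that $B(r)<0$. Since $\langle\sin^2\theta\rangle\leq1$ one has $B(r)\leq a^2|\omega|^2-\text{Re}(\lambda)-\mu^2 r^2$, which is negative because $\text{Re}(\lambda)\geq0$ and $\mu^2 r^2\geq\mu^2 r_+^2>|\omega|^2 r_+^2>a^2|\omega|^2$; here I use $\mu^2>|\omega|^2$ (valid for small $\epsilon$ since $\mu^2(0)-\omega_R^2(0)>0$ is fixed while $\epsilon^2\to0$) and $r_+^2-a^2=2(Mr_+-a^2)>0$. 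This yields $\Psi(r)<\Psi(r_+)$ and hence $\omega_R^2(\epsilon)+\epsilon^2<\left(\frac{am}{2Mr_+}\right)^2$.

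Finally, the derivative statement is immediate. Setting $\beta(\epsilon):=\left(\frac{am}{2Mr_+}\right)^2-\omega_R^2(\epsilon)-\epsilon^2$, we have $\beta(0)=0$ and $\beta(\epsilon)>0$ for $\epsilon\in(0,\delta)$, so the one-sided derivative satisfies $\beta'(0)\geq0$; since $\beta'(0)=-2\omega_R(0)\omega_R'(0)$ and $\omega_R(0)=\frac{am}{2Mr_+}>0$, this gives $\frac{\partial\omega_R}{\partial\epsilon}(0)\leq0$. I expect the delicate point in writing this up to be verifying the explicit collapse to $\Psi(r)=-(r^2+a^2)^2|\omega|^2+a^2m^2+\Delta B(r)$ and confirming $\text{Re}(\lambda)\geq0$ and $\mu^2>|\omega|^2$ in the precise $\epsilon$–range where the perturbed family is defined.
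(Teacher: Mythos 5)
Your argument is correct, and its engine coincides with the paper's: your combination $\epsilon\Psi=\text{Im}\left(\overline{\omega}V_{\mu}\right)$, obtained by pairing the two currents $Q_T=\text{Im}\left(\Delta R'\overline{R}\right)$ and $\text{Re}\left(\Delta R'\overline{R}\right)$, is exactly the paper's single current $\tilde{Q}_T=\text{Im}\left(\Delta\frac{\partial R}{\partial r}\overline{\omega R}\right)$, and both proofs reduce to the identity $\int_{r_+}^{\infty}\left(\epsilon\Delta\left|\partial_rR\right|^2-\frac{1}{\Delta}\text{Im}\left(V_{\mu}\overline{\omega}\right)\left|R\right|^2\right)dr=0$ followed by a comparison of $\text{Im}\left(V_{\mu}\overline{\omega}\right)$ with its horizon value $\epsilon\left(a^2m^2-(2Mr_+)^2|\omega|^2\right)$. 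Where you genuinely diverge is the pointwise step. The paper proves that $\text{Im}\left(-V_{\mu}\overline{\omega}\right)$ is \emph{increasing} in $r$, signing the angular contribution wholesale via Proposition \ref{eigBound1}; you instead prove only the (sufficient) horizon domination $\Psi(r)<\Psi(r_+)$ through the explicit factorization $\Psi(r)-\Psi(r_+)=(r-r_+)\left[-(r+r_+)(r^2+r_+^2+2a^2)|\omega|^2+(r-r_-)B(r)\right]$, signing $B<0$ with the real and imaginary parts of the angular pairing taken \emph{separately} ($\text{Re}\,\lambda\geq 0$ and $\text{Im}\,\lambda=-2a^2\omega_R\epsilon\langle\cos^2\theta\rangle$, both of which I have checked) together with $\mu^2>|\omega|^2$ and $r_+>|a|$. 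These extra hypotheses cost nothing: $\mu^2(\epsilon)-\omega_R^2(\epsilon)-\epsilon^2\to\mu^2(0)-\omega_R^2(0)>0$ so $\mu^2>|\omega|^2$ holds after shrinking $\delta$, and sub-extremality gives $r_+\geq M>|a|$. What your route buys is robustness: a direct computation of the identity behind Proposition \ref{eigBound1} produces the angular term with coefficient $-a^2|\omega|^2\sin^2\theta$ rather than the displayed $+a^2|\omega|^2\sin^2\theta$, so the positivity asserted there requires the additional (true in the regime at hand) observation $m^2\geq a^2|\omega|^2$; your split of the angular identity supplies exactly the needed signs without invoking that proposition at all, at the price of a slightly weaker conclusion (domination instead of monotonicity), which suffices since $w=|R|^2/\Delta$ is integrable for $\epsilon>0$. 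One small slip to fix in writing it up: for small $\epsilon$ one has $\text{Re}\,\xi=\frac{2Mr_+\epsilon}{r_+-r_-}\in(0,1)$, so $R'\sim\xi(r-r_+)^{\xi-1}$ does \emph{not} vanish at $r_+$ as you assert; what is true, and all you actually use, is that $\Delta R'\overline{R}\sim(r-r_+)^{2\text{Re}\,\xi}\to 0$ and that $\Delta|R'|^2\sim(r-r_+)^{2\text{Re}\,\xi-1}$ is integrable. Your concluding one-sided-derivative argument for $\frac{\partial\omega_R}{\partial\epsilon}(0)\leq 0$ is exactly the intended content of the paper's ``in particular.''
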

\begin{proof}We now introduce a variant of the microlocal energy current $Q_T$:
\[\tilde{Q}_T := \text{Im}\left(\Delta\frac{\partial R}{\partial r}\overline{\omega R}\right).\]
For $\epsilon > 0$ we have
\[\tilde{Q}_T(\infty) = \tilde{Q}_T(r_+) = 0\]
\[\frac{\partial \tilde{Q}_T}{\partial r} = -\epsilon\Delta\left|\frac{\partial R}{\partial r}\right|^2 + \text{Im}\left(\frac{V_{\mu}\overline{\omega}}{\Delta}\right)\left|R\right|^2 \Rightarrow \]
\[\int_{r_+}^{\infty}\left(\epsilon\Delta\left|\frac{\partial R}{\partial r}\right|^2 - \text{Im}\left(\frac{V_{\mu}\overline{\omega}}{\Delta}\right)\left|R\right|^2\right)dr = 0.\]
We have
\[\text{Im}\left(-V_{\mu}\overline{\omega}\right) = \epsilon\left((r^2+a^2)^2\left|\omega\right|^2 - a^2m^2 + \Delta r^2\mu^2\right) - \Delta\text{Im}\left(\left(\lambda+a^2\omega^2\right)\overline{\omega}\right).\]
Proposition \ref{eigBound1} from the appendix gives
\[-\text{Im}\left(\left(\lambda+a^2\omega^2\right)\overline{\omega}\right) > 0.\]
Furthermore, $\text{Im}\left(-V_{\mu}\overline{\omega}\right)$ is increasing in $r$. Thus, $R \neq 0$ implies that
\[\text{Im}\left(-V_{\mu}\overline{\omega}\right)(r_+) < 0 \Leftrightarrow \]
\[\omega^2_R(\epsilon) + \epsilon^2 < \left(\frac{am}{2Mr_+}\right)^2.\]
\end{proof}
\begin{prop}\label{eigProp}
\[\frac{\partial\mu}{\partial\epsilon}(0) \geq 0 \Rightarrow \text{Re}\left(\frac{\partial\lambda}{\partial\epsilon}\right)(0) \geq 0.\]
\end{prop}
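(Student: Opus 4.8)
The plan is to factor the entire $\epsilon$-dependence of the angular eigenvalue through the single complex parameter $\kappa := a^2(\omega^2-\mu^2)$ and then track real parts. Recall from appendix \ref{angularEig} (proposition \ref{eigCurves}) that the eigenvalue is a holomorphic function $\lambda = \lambda_{ml}(\kappa)$ of $\kappa$, so along our family $\lambda(\epsilon) = \lambda_{ml}(\kappa(\epsilon))$ with $\kappa(\epsilon) = a^2(\omega(\epsilon)^2-\mu(\epsilon)^2)$ and $\omega(\epsilon)=\omega_R(\epsilon)+i\epsilon$. The chain rule then gives
\[\frac{\partial\lambda}{\partial\epsilon}(0) = \frac{d\lambda}{d\kappa}(\kappa(0))\,\frac{\partial\kappa}{\partial\epsilon}(0).\]
The structural point I would exploit is that at $\epsilon=0$ both $\omega(0)=\omega_R(0)$ and $\mu(0)$ are real, so $\kappa(0)=a^2(\omega_R^2(0)-\mu^2(0))$ is real (and negative, since $\mu^2>\omega^2$ in the bound-state regime).

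First I would compute $\frac{\partial\kappa}{\partial\epsilon}(0)$. Differentiating $\omega^2=(\omega_R+i\epsilon)^2$ and the real function $\mu^2$ and evaluating at $\epsilon=0$ yields
\[\frac{\partial\kappa}{\partial\epsilon}(0) = 2a^2\left(\omega_R(0)\frac{\partial\omega_R}{\partial\epsilon}(0) - \mu(0)\frac{\partial\mu}{\partial\epsilon}(0)\right) + 2ia^2\omega_R(0),\]
so that
\[\text{Re}\left(\frac{\partial\kappa}{\partial\epsilon}(0)\right) = 2a^2\left(\omega_R(0)\frac{\partial\omega_R}{\partial\epsilon}(0) - \mu(0)\frac{\partial\mu}{\partial\epsilon}(0)\right).\]
The imaginary contribution $2ia^2\omega_R(0)$, forced by the $i\epsilon$ in $\omega$, is exactly what I expect to cause trouble at first glance; it will drop out of the conclusion only because $\frac{d\lambda}{d\kappa}(\kappa(0))$ will turn out to be real.

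Next I would establish that $\frac{d\lambda}{d\kappa}(\kappa(0))$ is real and strictly negative. Reality is immediate: $\kappa(0)$ is real, and for real $\kappa$ the angular ODE is a self-adjoint regular Sturm-Liouville problem with real (simple) eigenvalue, so $\lambda_{ml}$ is real on a real neighborhood of $\kappa(0)$ and its holomorphic derivative there is real. For the sign I would invoke proposition \ref{eigBound3}, which gives $\frac{\partial\lambda}{\partial\mu}>0$ at fixed real $\omega$; since $\frac{\partial\kappa}{\partial\mu}=-2a^2\mu<0$, the chain rule forces $\frac{d\lambda}{d\kappa}(\kappa(0))<0$. (Equivalently, a first-order perturbation computation identifies $\frac{d\lambda}{d\kappa}=-\int_0^\pi\cos^2\theta\,|S|^2\sin\theta\,d\theta<0$.)

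Finally I would assemble the signs. Because $\frac{d\lambda}{d\kappa}(\kappa(0))$ is real, taking real parts in the chain rule gives
\[\text{Re}\left(\frac{\partial\lambda}{\partial\epsilon}(0)\right) = \frac{d\lambda}{d\kappa}(\kappa(0))\cdot 2a^2\left(\omega_R(0)\frac{\partial\omega_R}{\partial\epsilon}(0) - \mu(0)\frac{\partial\mu}{\partial\epsilon}(0)\right),\]
and in particular the troublesome imaginary part has disappeared. Under the running normalization $0<\omega_R<\mu$ we have $\omega_R(0),\mu(0)>0$; proposition \ref{becomeSuper} supplies $\frac{\partial\omega_R}{\partial\epsilon}(0)\leq 0$, and the hypothesis supplies $\frac{\partial\mu}{\partial\epsilon}(0)\geq 0$. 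Hence the bracketed factor is $\leq 0$, and multiplying by the negative quantity $\frac{d\lambda}{d\kappa}(\kappa(0))$ yields $\text{Re}\left(\frac{\partial\lambda}{\partial\epsilon}(0)\right)\geq 0$, as desired. The only genuinely analytic input is the negativity of $\frac{d\lambda}{d\kappa}$, which is precisely the content of the cited proposition \ref{eigBound3}; the conceptual crux is instead the observation that the realness of $\frac{d\lambda}{d\kappa}$ at the real base point $\kappa(0)$ annihilates the imaginary part of $\frac{\partial\kappa}{\partial\epsilon}(0)$, after which the result is pure sign bookkeeping.
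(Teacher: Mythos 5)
Your proof is correct, but it packages the argument differently from the paper's. The paper works directly at the level of the eigenfunction: it differentiates the angular ODE in $\epsilon$ at $\epsilon=0$, multiplies the resulting equation for $S_{\epsilon} := \frac{\partial S}{\partial\epsilon}$ by $\overline{S}$, integrates by parts (after checking that $S_{\epsilon}$ is regular at $\theta = 0,\pi$), and arrives at the identity $\int_0^{\pi}\left(2a^2\cos^2\theta\left(\omega_R\left(\frac{\partial\omega_R}{\partial\epsilon}+i\right)-\mu\frac{\partial\mu}{\partial\epsilon}\right)+\frac{\partial\lambda}{\partial\epsilon}\right)\left|S\right|^2\sin\theta\, d\theta = 0$; taking real parts discards the $i$ term, and Proposition \ref{becomeSuper} together with the hypothesis gives the sign. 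You instead factor all the $\epsilon$-dependence through $\kappa = a^2(\omega^2-\mu^2)$, invoke holomorphy of $\lambda(\kappa)$ from Proposition \ref{eigCurves}, observe that $\frac{d\lambda}{d\kappa}(\kappa(0))$ is real because $\kappa(0)$ is real, and obtain its strict sign by recycling Proposition \ref{eigBound3} through the chain rule. These are the same first-order perturbation computation in different clothing: your parenthetical formula $\frac{d\lambda}{d\kappa} = -\int_0^{\pi}\cos^2\theta\left|S\right|^2\sin\theta\, d\theta$ (for normalized $S$) is precisely what the paper's integration by parts produces, and the paper's taking of real parts is your observation that the imaginary part $2ia^2\omega_R$ of $\frac{\partial\kappa}{\partial\epsilon}(0)$ is annihilated by the reality of $\frac{d\lambda}{d\kappa}$. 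What your route buys is that you never re-derive the perturbation identity nor verify the endpoint regularity of $S_{\epsilon}$, and the structural mechanism ($\lambda$ sees $\epsilon$ only through the single parameter $\kappa$, with real derivative at the real base point) is made explicit; what the paper's route buys is self-containedness, using only the angular ODE itself. One step you should spell out: the reality of $\lambda(\kappa)$ on a real neighborhood of $\kappa(0)$ requires knowing that the holomorphic branch of Proposition \ref{eigCurves} coincides with the real Sturm--Liouville branch there; this follows, for instance, from the uniqueness assertion of that proposition applied to $\overline{\lambda(\overline{\kappa})}$, which is another holomorphic curve of eigenvalues through $(\kappa_0,\lambda_0)$. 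This is a standard point rather than a gap, but as written your appeal to self-adjointness glosses over it.
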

\begin{proof}
Let's set
\[S_{\epsilon} := \frac{\partial S}{\partial \epsilon}.\]
At $\epsilon = 0$ we have
\[\frac{1}{\sin\theta}\frac{\partial}{\partial\theta}\left(\sin\theta\frac{\partial S_{\epsilon}}{\partial\theta}\right) - \left(\frac{m^2}{\sin^2\theta} - a^2\left(\omega^2-\mu^2\right)\cos^2\theta\right)S_{\epsilon} + \lambda S_{\epsilon} = \]
\[-\left(2a^2\cos^2\theta\left(\omega_R\left(\frac{\partial\omega_R}{\partial\epsilon} + i\right) - \mu\frac{\partial\mu}{\partial\epsilon}\right) + \frac{\partial\lambda}{\partial\epsilon}\right)S.\]
Using appendix \ref{localAnalysis} one may check that $S_{\epsilon}$ is regular at $\theta = 0,\pi$. Multiplying by $\overline{S}$ and integrating by parts implies
\[\int_0^{\pi}\left(2a^2\cos^2\theta\left(\omega_R\left(\frac{\partial\omega_R}{\partial\epsilon}+i\right)- \mu\frac{\partial\mu}{\partial\epsilon}\right)+\frac{\partial\lambda}{\partial\epsilon}\right)\left|S\right|^2\sin\theta d\theta = 0.\]
Using proposition \ref{becomeSuper} we conclude that
\[\frac{\partial\mu}{\partial\epsilon}(0) \geq 0 \Rightarrow \text{Re}\left(\frac{\partial\lambda}{\partial\epsilon}\right)(0) \geq 0.\]
\end{proof}
Finally, we examine $\frac{\partial \mu}{\partial \epsilon}(0)$.
\begin{prop}\label{loseMass}
\[\frac{\partial\mu}{\partial\epsilon}\left(0\right) < 0.\]
\end{prop}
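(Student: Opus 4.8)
The plan is to differentiate the radial ODE in $\epsilon$ and pair against $\overline{R}$, exactly in the spirit of Propositions \ref{becomeSuper} and \ref{eigProp}, and then to close by contradiction against those two results. Throughout I abbreviate $\dot{f} := \frac{\partial f}{\partial\epsilon}(0)$, so that $\dot{\omega} = \dot{\omega}_R + i$; recall that at $\epsilon = 0$ the mode $R$ is real, regular at $r_+$ (where $am - 2Mr_+\omega_R(0) = 0$), and exponentially decaying at infinity, and that we may assume $\omega_R(0) > 0$ by the standing convention of section \ref{modeCross}.

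First I would differentiate $\partial_r(\Delta\,\partial_r R) - \frac{V_{\mu}}{\Delta}R = 0$ in $\epsilon$, so that $\dot{R}$ solves the same equation with inhomogeneity $\frac{1}{\Delta}(\partial_\epsilon V_{\mu})R$. Multiplying by $\overline{R}$, subtracting the $\epsilon = 0$ equation multiplied by $\dot{R}$, and integrating yields the Lagrange identity
\[\int_{r_+}^{\infty}\frac{\partial_\epsilon V_{\mu}}{\Delta}\,|R|^2\,dr = \left[\Delta\left(\partial_r\dot{R}\,\overline{R} - \partial_r\overline{R}\,\dot{R}\right)\right]_{r_+}^{\infty}.\]
Since $R(\cdot,\epsilon)$ is a genuine mode for each $\epsilon$, it is a scalar multiple of the exponentially decaying solution $\rho_2$ of (\ref{connect}), so $\dot{R}$ also decays exponentially and the boundary term at infinity vanishes. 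The essential point, and the main obstacle, is the boundary term at $r_+$: writing $R = (r-r_+)^{\xi}g$ with $g$ analytic and $g(r_+) = 1$, differentiating the exponent forces a $\log(r-r_+)$ singularity into $\dot{R} = \dot{\xi}\log(r-r_+)\,R + (\text{analytic})$, and the $\dot{\xi}/(r-r_+)$ arising from $\partial_r\dot{R}$ is exactly unmasked by the vanishing factor $\Delta \sim (r_+-r_-)(r-r_+)$. Tracking which terms survive leaves $(r_+-r_-)\dot{\xi}$, and with $\dot{\xi} = -2iMr_+(\dot{\omega}_R+i)/(r_+-r_-)$ and the normalization $|R(r_+)|^2 = 1$, taking the real part of the identity gives
\[\dot{\omega}_R\int_{r_+}^{\infty}\frac{P}{\Delta}|R|^2\,dr + \text{Re}\,\dot{\lambda}\int_{r_+}^{\infty}|R|^2\,dr + 2\mu(0)\,\dot{\mu}\int_{r_+}^{\infty}r^2|R|^2\,dr = -2Mr_+,\]
where $P := \partial_\omega V_{\mu}\big|_{\epsilon=0} = -2\omega_R(0)(r^2+a^2)^2 + 4Mamr + 2a^2\omega_R(0)\Delta$.

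The remaining ingredient is the sign of $P$ on $(r_+,\infty)$. Using $am = 2Mr_+\omega_R(0)$ together with $r_+r_- = a^2$ and $r_+ + r_- = 2M$, a short computation gives
\[P = -2\omega_R(0)\,r\,(r^3 + a^2 r - 2Mr_+^2);\]
the cubic factor vanishes at $r = r_+$ (since $r_+^2 + a^2 - 2Mr_+ = \Delta(r_+) = 0$) and has positive derivative $3r^2 + a^2$, hence is strictly positive on $(r_+,\infty)$. As $\omega_R(0) > 0$, this shows $P < 0$, so $J := \int_{r_+}^{\infty}\frac{P}{\Delta}|R|^2\,dr < 0$. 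Now suppose for contradiction that $\dot{\mu} \geq 0$. Then the third term above is $\geq 0$; Proposition \ref{eigProp} gives $\text{Re}\,\dot{\lambda} \geq 0$, so the middle term is $\geq 0$; and Proposition \ref{becomeSuper} gives $\dot{\omega}_R \leq 0$, whence $\dot{\omega}_R J \geq 0$. The left-hand side is therefore non-negative, contradicting its equality with $-2Mr_+ < 0$. Hence $\dot{\mu} < 0$, which is the assertion of Proposition \ref{loseMass}.
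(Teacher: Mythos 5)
Your proposal is correct and follows essentially the same route as the paper: differentiate the radial ODE in $\epsilon$, pair with $\overline{R}$, extract the boundary contribution $-2Mr_+\left|R(r_+)\right|^2$ from the logarithmic singularity of $\partial R/\partial\epsilon$ at $r_+$, and close by contradiction using Propositions \ref{becomeSuper} and \ref{eigProp}. The only cosmetic difference is your sign verification of $\partial_\omega V_{\mu}$ via the factorization $-2\omega_R r\left(r^3+a^2r-2Mr_+^2\right)$, where the paper writes the same coefficient as $2\omega_R\left(-\Delta^2+(a^2-4Mr)\Delta-4M^2r(r-r_+)\right)$, which is negative term by term.
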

\begin{proof}Let's set
\[R_{\epsilon} := \frac{\partial R}{\partial\epsilon}.\]
We have
\begin{equation}\label{eqn3}
\frac{\partial}{\partial r}\left(\Delta\frac{\partial R_{\epsilon}}{\partial r}\right) - \frac{V_{\mu}}{\Delta}R_{\epsilon} = \frac{\partial}{\partial\epsilon}\left(\frac{V_{\mu}}{\Delta}\right)R.
\end{equation}
Now we want to multiply by $\overline{R}$ and integrate by parts. However, we have to be careful with regards to $R_{\epsilon}'s$ boundary conditions. At infinity $R_{\epsilon}$ may easily be seen to be exponentially decreasing, but at $r_+$ the proper condition is more subtle. By construction
\[(r-r_+)^{\frac{-i(am-2Mr_+\omega(\epsilon))}{r_+-r_-}}R(r,\epsilon) =: G(r,\epsilon)\]
is analytic in $r$ and $\epsilon$ near $(r_+,0)$. At $\epsilon = 0$ we have
\[(r-r_+)^{\frac{-i(am-2Mr_+\omega)}{r_+-r_-}}R_{\epsilon} - \]
\[\frac{2Mr_+}{r_+-r_-}\left(1 - i\frac{\partial\omega_R}{\partial\epsilon}\right)(r-r_+)^{\frac{-i(am-2Mr_+\omega)}{r_+-r_-}}\log(r-r_+)R = \frac{\partial G}{\partial\epsilon}\Rightarrow \]
\[R_{\epsilon}(r,0) = \frac{2Mr_+}{r_+-r_-}\left(1-i\frac{\partial\omega_R}{\partial\epsilon}\right)\log(r-r_+)R(r,0) + \frac{\partial G}{\partial\epsilon}(r,0).\]
Now we multiply (\ref{eqn3}) by $\overline{R}$, take the real part, and integrate by parts. We end up with
\[-2Mr_+\left|R(r_+)\right|^2 = \int_{r_+}^{\infty}\text{Re}\left(\frac{\partial}{\partial\epsilon}\left(\frac{V_{\mu}}{\Delta}\right)\right)\left|R\right|^2dr = \]
\[\int_{r_+}^{\infty}\Delta^{-1}\left(2\omega_R\frac{\partial\omega_R}{\partial\epsilon}\left(-\Delta^2+(a^2-4Mr)\Delta - 4M^2r(r-r_+)\right)\right)\left|R\right|^2dr + \]
\[\int_{r_+}^{\infty}\left(\text{Re}\left(\frac{\partial\lambda}{\partial\epsilon}\right) + 2r^2\mu\frac{\partial\mu}{\partial\epsilon}\right)\left|R\right|^2dr.\]
Now proposition \ref{eigProp} finishes the proof.
\end{proof}
\section{Following the Unstable Modes in the Upper Half Plane}\label{intoHalfPlane}
Following our construction of unstable modes near the real axis, it is natural to ask if one can continue to decrease $\mu$ and produce more unstable modes. We will not explore this in detail in this paper, but we will briefly describe the expected behavior. One believes that one can vary $\mu$ and produce a $1$-parameter (at least continuous) family of modes with frequency parameters $(\omega(\mu),\mu,m,l)$.\footnote{A potential approach is to more directly exploit the underlying analyticity, see \cite{n45} and \cite{n48} for ideas along these lines.} As long as these modes are in the upper half plane, proposition \ref{becomeSuper} shows that they will remain superradiant. Hence, if and when they cross the real axis, they will satisfy $|\omega| \leq \frac{|am|}{2Mr_+}$. Now, note that proposition \ref{loseMass} implies that in the bound state regime $(\mu^2 > \omega^2)$, an unstable mode can cross the real axis only by increasing the mass. Hence, as long as we decrease $\mu$ and maintain $\mu > \frac{|am|}{2Mr_+}$, the curve of modes cannot cross the real axis, and, by continuity, we conclude that these modes would have to remain unstable.
\section{Acknowledgements}
I would like to thank Igor Rodnianski, Mihalis Dafermos, Gustav Holzegel, and Jonathan Luk for useful conversations and for advice on the exposition.
\appendix
\section{Linear ODE's with Regular Singularities}\label{localAnalysis}
Let's recall some facts about linear ODEs in the complex plane.
\begin{lemm}\label{regularSingLocal}Consider the complex ODE
\begin{equation}\label{ODE}
\frac{d^2H}{dz^2} + f(z,\lambda)\frac{dH}{dz} + g(z,\lambda)H = 0.
\end{equation}
We will assume that there exists $\{f_j(\lambda)\}$, $\{g_j(\lambda)\}$, $r$, and open $U \subset\mathbb{C}$ such that for $z \in B_r(z_0)$ and every compact $K \subset U$ there exists $\left\{F_j^{(K)}\right\}$ and $\left\{G_j^{(K)}\right\}$ such that
\[\left|f_j(\lambda)\right| \leq F_j^{(K)}\text{ and }\left|g_j(\lambda)\right| \leq G_j^{(K)}\text{ when }\lambda \in K,\]
\[\sum_{j=0}^{\infty}G_j^{(K)}(z-z_0)^j\text{ and }\sum_{j=0}^{\infty}F_j^{(K)}(z-z_0)^j\text{ converge absolutely},\]
\[\{f_j(\lambda)\}\text{ and }\{g_j(\lambda)\}\text{ are holomorphic in }\lambda \in U,\]
\[(z-z_0)f(z,\lambda) = \sum_{j=0}^{\infty}f_j(\lambda)(z-z_0)^j\text{ and }(z-z_0)^2g(z) = \sum_{j=0}^{\infty}g_j(\lambda)(z-z_0)^j.\]
If these hypotheses hold we say that $z_0$ is a regular singularity.
Set
\[Q(\alpha,\lambda) := \alpha(\alpha - 1) + f_0(\lambda)\alpha + g_0(\lambda).\]
The indicial equation is
\[Q(\alpha,\lambda) = 0.\]
We suppose that a holomorphic $\alpha(\lambda)$ has been chosen such that
\[Q(\alpha(\lambda),\lambda) = 0\text{ and }\min_{j \in \mathbb{Z}^+}\left|Q(\alpha(\lambda) + j,\lambda)\right| = A(\lambda) > 0.\]
Then there exists a unique solution to (\ref{ODE}) of the form
\[h(z,\lambda) = (z-z_0)^{\alpha(\lambda)}\rho(z,\lambda)\]
such that $\rho(z_0,\lambda) = 1$. Furthermore, $\rho$ is holomorphic for $z \in B_r(z_0)$ and $\lambda \in U$.
\end{lemm}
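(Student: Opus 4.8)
The plan is to run the Frobenius method in the analytic category, tracking holomorphic dependence on $\lambda$. Writing $w := z - z_0$, I would look for $h(z,\lambda) = w^{\alpha(\lambda)}\rho(z,\lambda)$ with $\rho(z,\lambda) = \sum_{n\geq 0}c_n(\lambda)w^n$ and $c_0 \equiv 1$. Substituting into the form $w^2 H'' + w\,(wf)\,H' + (w^2 g)\,H = 0$ and inserting the expansions $wf = \sum_j f_j w^j$ and $w^2 g = \sum_j g_j w^j$, I would read off the coefficient of $w^{\alpha+N}$ to obtain the recurrence
\[ Q(\alpha+N,\lambda)\,c_N = -\sum_{n=0}^{N-1}\bigl(f_{N-n}(\alpha+n)+g_{N-n}\bigr)c_n . \]
For $N=0$ this reads $Q(\alpha,\lambda)c_0 = 0$, which holds because $\alpha$ solves the indicial equation; for $N \geq 1$ the hypothesis $\min_{j\geq 1}|Q(\alpha(\lambda)+j,\lambda)| = A(\lambda) > 0$ forces the leading factor $Q(\alpha+N)$ to be nonzero, so the $c_N$ are determined uniquely and recursively, each holomorphic in $\lambda$ on $U$ (a sum of products of holomorphic data divided by a nonvanishing holomorphic function).

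The heart of the argument is convergence of $\sum c_n w^n$ on the full disk $\{|w| < r\}$, locally uniformly in $\lambda$, which I would establish by the method of majorants. Fix a compact $K \subset U$ and a radius $\rho_0 < r$. Continuity bounds $|\alpha(\lambda)| \leq A_0$ on $K$, and since $|Q(\alpha(\lambda)+N,\lambda)|$ grows quadratically in $N$ uniformly on $K$ (the term $(\alpha+N)(\alpha+N-1)$ dominating), the pointwise positivity $A(\lambda)>0$ upgrades by compactness — each $Q(\alpha(\cdot)+N,\cdot)$ being continuous and nonvanishing on $K$, and only finitely many $N$ mattering before the quadratic growth takes over — to a uniform lower bound $\inf_{\lambda\in K,\,N\geq 1}|Q(\alpha+N)| \geq A_K>0$ together with $|Q(\alpha+N)|\geq c\,N^2$ for $N\geq N_0$. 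Bounding $|\alpha+n|\leq A_0+n$ and using $|f_j|\leq F_j^{(K)}$, $|g_j|\leq G_j^{(K)}$, the recurrence yields, for $N\geq N_0$,
\[ |c_N| \leq \frac{1}{c\,N^2}\sum_{n=0}^{N-1}\bigl(F_{N-n}^{(K)}(A_0+n)+G_{N-n}^{(K)}\bigr)|c_n| . \]
Because the majorant series $\sum_j F_j^{(K)}w^j$ and $\sum_j G_j^{(K)}w^j$ converge absolutely on $\{|w|<r\}$, I would exhibit an explicit dominating power series $\sum_n C_n w^n$ with $C_n \geq |c_n|$ whose radius of convergence is at least $r$, thereby proving $\sum|c_n|\rho_0^n < \infty$. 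This comparison is the technically delicate step and the main obstacle.

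With locally uniform convergence in hand the remaining points are routine. Each partial sum is jointly holomorphic in $(z,\lambda)$, and the $K$-uniform estimates make the convergence locally uniform on $\{|w|<r\}\times U$; hence $\rho$ is holomorphic there, in particular holomorphic in $\lambda\in U$ for each fixed $z$, with $\rho(z_0,\lambda)=c_0=1$. Since the convergent series solves the ODE term by term, $h=(z-z_0)^{\alpha}\rho$ is a genuine solution. For uniqueness, any solution of the prescribed shape $(z-z_0)^{\alpha}\tilde\rho$ with $\tilde\rho(z_0)=1$ has coefficients forced to satisfy the same recurrence with the same initial value, which we have seen pins them down uniquely, so $\tilde\rho=\rho$. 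The one genuine subtlety I would flag is the multivaluedness of $(z-z_0)^{\alpha}$: fixing a branch on a slit disk renders $h$ well defined, while $\rho$ itself is single-valued and holomorphic on the entire disk, so the stated conclusion is unaffected.
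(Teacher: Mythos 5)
Your skeleton coincides with the paper's own proof: the same Frobenius ansatz, the same recurrence $Q(\alpha(\lambda)+N,\lambda)\,c_N = -\sum_{n=0}^{N-1}\bigl((\alpha(\lambda)+n)f_{N-n}(\lambda)+g_{N-n}(\lambda)\bigr)c_n$, uniqueness read off from the recursion, and holomorphy of $\rho$ in $\lambda$ via locally uniform convergence of holomorphic partial sums; your compactness upgrade of the pointwise hypothesis $A(\lambda)>0$ to a uniform lower bound on $K$, using the quadratic growth of $Q(\alpha+N,\lambda)$ in $N$, is also sound. But the step you explicitly defer --- exhibiting a dominating series $\sum_n C_n w^n$ with radius at least $r$ --- is precisely the analytic content of the lemma, and your estimate $|c_N| \leq \frac{1}{cN^2}\sum_{n=0}^{N-1}\bigl(F^{(K)}_{N-n}(A_0+n)+G^{(K)}_{N-n}\bigr)|c_n|$ does not by itself produce it: with only abstract absolute convergence of $\sum_j F^{(K)}_j w^j$ on $B_r(z_0)$ there is no closed-form majorant recursion to compare against, and crude bounds on the $F^{(K)}_j$ lose the radius. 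As written, the proposal is a correct plan whose central step is acknowledged but missing; that is a genuine gap.

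The paper closes it with two small devices you should adopt. First, fix $r_0 < r$ and apply Cauchy's estimate to the holomorphic functions $\sum_j F^{(K)}_j z^j$ and $\sum_j G^{(K)}_j z^j$ to trade summability for geometric bounds $F^{(K)}_j,\, G^{(K)}_j \leq C_K r_0^{-j}$; since $r_0<r$ is arbitrary, it suffices to prove convergence on each $B_{r_0}$. Second, with geometric coefficient bounds one can write down an explicit majorant: letting $\beta(\lambda)$ be the second indicial root, so that $Q(\alpha+j,\lambda)=j(j+\alpha-\beta)$ and hence $|Q(\alpha+j,\lambda)|\geq j(j-|\alpha-\beta|)$ for $j>|\alpha-\beta|$, define $b_j=|c_j|$ for small $j$ and $j(j-|\alpha-\beta|)\,b_j = C_K\sum_{k=0}^{j-1}(|\alpha|+k+1)\,b_k\,r_0^{k-j}$ thereafter. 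Induction gives $|c_j|\leq b_j$, and subtracting consecutive instances of the recursion (after multiplying through by $r_0^{j}$) yields the first-order relation $r_0\, j(j-|\alpha-\beta|)\,b_j = \bigl[(j-1)(j-1-|\alpha-\beta|) + C_K(|\alpha|+j)\bigr]b_{j-1}$, so $b_j/b_{j-1}\to r_0^{-1}$ and the ratio test gives $\sum_j b_j w^j$ radius $r_0$; comparison then bounds $\sum_j |c_j|\,\rho_0^{\,j}$ for every $\rho_0<r_0$, and letting $r_0\uparrow r$ finishes. The rest of your outline --- uniqueness, the branch-cut remark for $(z-z_0)^{\alpha(\lambda)}$, and holomorphy in $\lambda$ on $U$ --- then goes through exactly as you state.
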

\begin{proof}One can extract a proof of this from the discussion of regular singularities in \cite{n1}. For the sake of completeness we will give the needed slight extension here. Without loss of generality we may set $z_0 = 0$. We begin by looking for a formal solution of the form
\[h(z,\lambda) = z^{\alpha(\lambda)}\sum_{j=0}^{\infty}\rho_j(\lambda)z^j\]
where we set $\rho_0(\lambda) = 1$. Formally plugging this into (\ref{ODE}) we find (see \cite{n1})
\[Q\left((\alpha(\lambda),\lambda)\right) = 0,\]
\[Q\left(\alpha(\lambda) + j,\lambda\right)\rho_j(\lambda) = -\sum_{k=0}^{j-1}\left(\left(\alpha(\lambda)+k\right)f_{j-k}(\lambda) + g_{j-k}(\lambda)\right)\rho_k(\lambda)\text{ for }j \geq 1.\]
Since $Q(\alpha(\lambda),\lambda) = 0$ by hypothesis, the first equation is satisfied. Furthermore, by assumption $Q(\alpha(\lambda) + j,\lambda) \neq 0$ for any $j$. Hence, the second equation determines $\rho_j(\lambda)$ recursively. This establishes the uniqueness of $\rho$. It remains to check that the series converges appropriately. We will do this by majorizing the series. Let us pick an arbitrary compact set $K \subset U$ and $r_0 < r$. After applying Cauchy's estimate to the holomorphic functions $\sum_j F_j^{(K)}z^j$ and $\sum_j G_j^{(K)}z^j$, we may find a constant $C_K$ so that
\[\left|f_j(\lambda)\right| \leq C_Kr_0^{-j}\text{ and }\left|g_j(\lambda)\right| \leq C_Kr_0^{-j}\text{ for }\lambda \in K.\]
Let $\beta(\lambda)$ be the other root of $Q(\cdot,\lambda)$, and set $n(\lambda) := \left|\alpha(\lambda) - \beta(\lambda)\right|$. Since $Q(\alpha(\lambda) + k,\lambda) = k(k+\alpha(\lambda)-\beta(\lambda))$, our hypotheses imply that $\alpha(\lambda)-\beta(\lambda) \not\in \mathbb{Z}_{\leq0}$. Next, define $b_j(\lambda)$ by
\[b_j(\lambda) = \left|\rho_j(\lambda)\right|\text{ for }j \leq n,\]
\[j(j-\left|\alpha(\lambda)-\beta(\lambda)\right|)b_j(\lambda) = C_K\sum_{k=0}^{j-1}\left(|\alpha(\lambda)| + k+1\right)b_k(\lambda)r_0^{k-j}\text{ for }j > n.\]
It is easy to check by induction that $\left|\rho_j(\lambda)\right| \leq b_j(\lambda)$ for all $j$. For sufficiently large $j$, one finds that
\[r_0j(j-|\alpha(\lambda)-\beta(\lambda)|)b_j(\lambda) - (j-1)(j-1-|\alpha(\lambda)-\beta(\lambda)|)b_{j-1}(\lambda) = \] \[C_K(|\alpha(\lambda)|+j)b_{j-1}(\lambda).\]
Now the ratio test implies that the series $\sum_{j\geq 0}^{\infty}b_j(\lambda)z^j$ converges in the ball of radius $r_0$. Hence, by the comparison test, $\sum_{j= 0}^{\infty}\rho_j(\lambda)z^j$ converges in the same ball. Since $r_0$ was arbitrary, we find that for every $\lambda \in K$, $\sum_{j=0}^{\infty}\rho_j(\lambda)z^j$ converges and is holomorphic in $z \in B_r(0)$. Next we may freeze $z \in B_r(0)$ and consider $\rho(z,\lambda) = \sum_{j=0}^{\infty}\rho_j(\lambda)z^j$ as a function of $\lambda$. For every compact $K \subset U$, our proof has shown that $\rho(z,\cdot)$ is a uniform limit of holomorphic functions. Hence, $\rho(z,\lambda)$ is holomorphic for $\lambda \in U$.
\end{proof}
\section{The Angular ODE}\label{angularEig}
In this section we will establish the needed facts about the eigenvalues of the angular ODE. We assume throughout this section that $m \neq 0$.

Recall that $\kappa := a^2(\omega^2-\mu^2)$. Then the angular ODE is
\[\frac{1}{\sin\theta}\frac{d}{d\theta}\left(\sin\theta\frac{dS}{d\theta}\right) - \left(\frac{m^2}{\sin^2\theta} - \kappa\cos^2\theta\right)S + \lambda S = 0.\]
We have suppressed the $\kappa$, $m$, and $l$ indices.

\begin{prop}\label{eigCurves}
Suppose that for some fixed $\kappa_0 \in \mathbb{R}$ we have an eigenvalue $\lambda_0$. Then, for $\kappa$ sufficiently close to $\kappa_0$, we can uniquely find a holomorphic curve $\lambda(\kappa)$ of eigenvalues for the angular ODE with parameter $\kappa$ such that $\lambda_0 = \lambda(\kappa_0)$.
\end{prop}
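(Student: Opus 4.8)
The plan is to identify the eigenvalues with the zeros of a holomorphic Wronskian and then invoke the holomorphic implicit function theorem. First I would complexify the variable $\theta$ and observe that, in the complex plane, both $\theta = 0$ and $\theta = \pi$ are regular singularities of the angular ODE in the sense of Lemma \ref{regularSingLocal}. Near $\theta = 0$ the equation reads $S'' + \theta^{-1}S' - m^2\theta^{-2}S + (\text{holomorphic})\,S = 0$, so the indicial data are $f_0 = 1$, $g_0 = -m^2$, and the indicial roots are $\pm|m|$, \emph{independent} of $\lambda$ and $\kappa$ (these parameters enter only through the analytic coefficient $\kappa\cos^2\theta$ and the constant $\lambda$, neither of which affects the singular part). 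Since $m \neq 0$, the two roots differ by the positive integer $2|m|$, which is exactly the resonant situation; however, choosing the \emph{larger} root $\alpha = |m|$ gives $Q(|m| + j) = j(j + 2|m|) > 0$ for all $j \geq 1$, so the nonresonance hypothesis of Lemma \ref{regularSingLocal} is met and we obtain a solution $S_0(\theta;\lambda,\kappa) = \theta^{|m|}\rho_0(\theta;\lambda,\kappa)$, regular at $\theta = 0$ and holomorphic in $(\lambda,\kappa)$. The two-parameter dependence is harmless: the recursion and majorization in the proof of Lemma \ref{regularSingLocal} are insensitive to the dimension of the parameter space, so one may take the holomorphic parameter to be the pair $(\lambda,\kappa) \in \mathbb{C}^2$. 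By the reflection $\theta \mapsto \pi - \theta$ I would likewise produce $S_\pi(\theta;\lambda,\kappa)$, regular at $\theta = \pi$ and holomorphic in $(\lambda,\kappa)$.

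Continuing $S_0$ and $S_\pi$ across the interval $(0,\pi)$ by solving the ODE, the standard holomorphic dependence of solutions on parameters shows that for any fixed interior $\theta$ the functions $S_0, S_\pi$ and their $\theta$-derivatives are holomorphic in $(\lambda,\kappa)$. I would then define the weighted Wronskian $W(\lambda,\kappa) := \sin\theta\,\bigl(S_0\,\partial_\theta S_\pi - S_\pi\,\partial_\theta S_0\bigr)$, which is independent of $\theta$ and holomorphic in $(\lambda,\kappa)$. A value $\lambda$ is an eigenvalue for the parameter $\kappa$ precisely when the solution regular at $0$ is also regular at $\pi$, that is, when $S_0$ and $S_\pi$ are linearly dependent, that is, when $W(\lambda,\kappa) = 0$. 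By hypothesis $W(\lambda_0,\kappa_0) = 0$.

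The main obstacle is to verify that this zero is \emph{simple} in $\lambda$, i.e.\ $\partial_\lambda W(\lambda_0,\kappa_0) \neq 0$; this is where the simplicity of the Sturm--Liouville eigenvalue $\lambda_0$ (recall that $\kappa_0$ is real, so the problem is genuinely self-adjoint) must be used quantitatively. I would obtain it from a Green's-type identity: writing the equation in the self-adjoint form $(\sin\theta\, S')' + (q + \lambda\sin\theta)S = 0$ and differentiating in $\lambda$, then pairing the differentiated equation against $S$ and integrating by parts over $(0,\pi)$, the bulk terms cancel and one is left with a boundary contribution that, after using $W(\lambda_0,\kappa_0)=0$ to write $S_0 = c\,S_\pi$ near $\pi$, reduces to $\partial_\lambda W(\lambda_0,\kappa_0) = c^{-1}\int_0^{\pi}|S_0|^2\sin\theta\,d\theta$. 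Since the eigenfunction is nontrivial, the integral is strictly positive, so $\partial_\lambda W(\lambda_0,\kappa_0) \neq 0$. The delicate bookkeeping here is that the weight $\sin\theta \to 0$ together with the regularity of $S_0$ and $\partial_\lambda S_0$ kills the boundary term at $\theta = 0$, while the $S_\pi$-contribution at $\theta = \pi$ vanishes by regularity of $S_\pi$ there even though $\partial_\lambda S_0$ itself need not be regular at $\pi$. With $\partial_\lambda W(\lambda_0,\kappa_0) \neq 0$ in hand, the holomorphic implicit function theorem applied to $W(\lambda,\kappa) = 0$ yields a unique holomorphic $\lambda(\kappa)$ near $\kappa_0$ with $\lambda(\kappa_0) = \lambda_0$ and $W(\lambda(\kappa),\kappa) \equiv 0$, which is precisely the desired curve of eigenvalues.
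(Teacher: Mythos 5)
Your proposal is correct and takes essentially the same route as the paper: the paper likewise builds holomorphic solution families at the two regular singular endpoints (working in $x=\cos\theta$ rather than complexified $\theta$), encodes eigenvalues as zeros of a holomorphic function of $(\kappa,\lambda)$ --- there the connection coefficient $F(\kappa,\lambda)$ of the singular branch at $x=1$, which agrees with your Wronskian $W$ up to a nonvanishing holomorphic factor --- and concludes with the implicit function theorem. The only cosmetic difference is in the nondegeneracy step: the paper shows $\frac{\partial F}{\partial\lambda}(\kappa_0,\lambda_0)\neq 0$ by contradiction (if it vanished, $S_{\lambda}:=\frac{\partial S}{\partial\lambda}$ would satisfy the eigenfunction boundary conditions and the inhomogeneous equation with source $-S$, and pairing with $\overline{S}$ forces $\int_0^{\pi}\left|S\right|^2\sin\theta\,d\theta = 0$), which is exactly your Green's identity $\partial_{\lambda}W(\lambda_0,\kappa_0) = c^{-1}\int_0^{\pi}\left|S_0\right|^2\sin\theta\,d\theta$ run as a reductio rather than as a direct computation.
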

\begin{proof}Let's change variables to $x := \cos\theta$. Then the angular ODE becomes
\[\frac{d}{dx}\left((1-x^2)\frac{dS}{dx}\right) - \left(\frac{m^2}{1-x^2} - \kappa x^2\right)S + \lambda S = 0\text{ with }x \in (-1,1).\]
An asymptotic analysis (appendix \ref{localAnalysis}) at $x = \pm 1$ shows that any solution must be asymptotic to a linear combination of $\left(1\mp x\right)^{|m|/2}$ and $\left(1\mp x\right)^{-|m|/2}$ as $x \to \pm 1$. If $S$ is an eigenfunction we clearly must have
\[S \sim (1\pm x)^{|m|/2}\text{ as }x\to \pm 1.\]
For any $\kappa$ and $\lambda$ we can uniquely define a solution $S(\theta,\kappa,\lambda)$ by requiring that
\begin{equation}\label{expansion}
S(\theta,\kappa,\lambda)(1+x)^{-|m|/2}\text{ is holomorphic at }x = -1,
\end{equation}
\[\left(S(\cdot,\kappa,\lambda)(1+\cdot)^{-|m|/2}\right)(x = -1) = 1.\]
We then have holomorphic functions $F(\kappa,\lambda)$ and $G(\kappa,\lambda)$ such that
\[S(\theta,\kappa,\lambda) \sim F(\kappa,\lambda)(1-x)^{-|m|/2} + G(\kappa,\lambda)(1-x)^{|m|/2}\text{ as }x\to 1.\]
Since $\lambda_0$ is an eigenvalue, we have $F(\kappa_0,\lambda_0) = 0$. We will be able to uniquely define our curve $\lambda(\kappa)$ for $\kappa$ near $\kappa_0$ via an application of the implicit function theorem if we can verify that
\[\frac{\partial F}{\partial \lambda}\left(\kappa_0,\lambda_0\right) \neq 0.\]
For the sake of contradiction, assume that
\[\frac{\partial F}{\partial \lambda}\left(\kappa_0,\lambda_0\right) = 0.\]
Set
\[S_{\lambda} := \frac{\partial S}{\partial\lambda}.\]
By differentiating (\ref{expansion}) and using that $F$ and $\frac{\partial F}{\partial\lambda}$ vanish at $(\kappa_0,\lambda_0)$, one may easily check that $S_{\lambda}$ still satisfies the boundary conditions of an eigenfunction. It will also satisfy
\[\frac{d}{dx}\left((1-x^2)\frac{dS_{\lambda}}{dx}\right) - \left(\frac{m^2}{1-x^2} - \kappa_0 x^2\right)S_{\lambda} + \lambda_0 S_{\lambda} = -S.\]
Multiplying both sides of this equation by $\overline{S}$, integrating over $(0,\pi)$, and then integrating by parts will imply that
\[\int_0^{\pi}|S|^2\sin\theta d\theta = 0.\]
This is clearly a contradiction.
\end{proof}
\begin{prop}\label{eigBound1}
\[\omega_I > 0 \Rightarrow \text{Im}\left(\left(\lambda + a^2\omega^2\right)\overline{\omega}\right) < 0.\]
\end{prop}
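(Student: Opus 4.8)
The plan is to extract a variational formula for the eigenvalue $\lambda$ by pairing the angular ODE against $\overline{S}$, and then to read off the sign of $\text{Im}\left((\lambda+a^2\omega^2)\overline{\omega}\right)$ directly. Writing $\kappa := a^2(\omega^2-\mu^2)$ and changing variables to $x = \cos\theta$, the eigenfunction satisfies
\[\frac{d}{dx}\left((1-x^2)\frac{dS}{dx}\right) - \frac{m^2}{1-x^2}S + \kappa x^2 S + \lambda S = 0,\qquad S \sim (1\mp x)^{|m|}\text{ as }x\to\pm1.\]
I would multiply by $\overline{S}$, integrate over $(-1,1)$, and integrate by parts once. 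Since $m \neq 0$ gives $|m|\geq 1$, the regular behavior $S \sim (1\mp x)^{|m|}$ forces the boundary terms to vanish, and after normalizing $\int_{-1}^1|S|^2\,dx = 1$ one obtains
\[\lambda = E - \kappa J,\qquad E := \int_{-1}^1\left((1-x^2)\left|\frac{dS}{dx}\right|^2 + \frac{m^2}{1-x^2}|S|^2\right)dx,\qquad J := \int_{-1}^1 x^2|S|^2\,dx.\]
The key structural point is that $E$ and $J$ are manifestly real, with $E \geq m^2 \geq 1$ (using $\tfrac{1}{1-x^2}\geq 1$) and $J \in (0,1)$, even though $\kappa$, $\lambda$, and $S$ are all complex.

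Substituting $a^2\omega^2 = \kappa + a^2\mu^2$ yields $\lambda + a^2\omega^2 = E + a^2\mu^2 J + a^2\omega^2(1-J)$, in which $E + a^2\mu^2 J$ is real. I would then take the imaginary part against $\overline{\omega}$ using $\text{Im}(\overline{\omega}) = -\omega_I$ and $\omega^2\overline{\omega} = |\omega|^2\omega$, arriving at the clean identity
\[\text{Im}\left((\lambda + a^2\omega^2)\overline{\omega}\right) = \omega_I\left(a^2(1-J)|\omega|^2 - E - a^2\mu^2 J\right).\]
Since $\omega_I > 0$ by hypothesis, the proposition reduces to the single inequality $a^2(1-J)|\omega|^2 < E + a^2\mu^2 J$.

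Finally I would establish this inequality. Dropping the nonnegative term $a^2\mu^2 J$ and using $1-J < 1$, it suffices to show $a^2|\omega|^2 < E$, and since $E \geq m^2$ it suffices to show $a^2|\omega|^2 < m^2$. This is where sub-extremality enters: in the domain on which the curve $\lambda(\kappa)$ is defined, $\omega$ stays near the real threshold frequency $\tfrac{am}{2Mr_+}$ of the modes under construction, so $a^2|\omega|^2$ is close to $\tfrac{a^4 m^2}{4M^2 r_+^2}$. Because $|a| < M < r_+$ we have $a^2 < 2Mr_+$, hence $\tfrac{a^4}{4M^2 r_+^2} < 1$, and therefore $a^2|\omega|^2 < m^2 \leq E$ (an open condition, so it persists for the relevant range of parameters). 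The main obstacle is exactly this last step: the variational identity itself is elementary once the boundary terms are disposed of, but making the sign definite requires controlling $a^2|\omega|^2$ against the spectral gap $E \geq m^2$, which is precisely the interplay between the near-threshold bound on $|\omega|$ and the sub-extremal relation among $a$, $M$, and $r_+$.
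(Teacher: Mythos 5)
Your variational identity is correct, and the mechanism is essentially the paper's: pairing the angular ODE against a conjugate, integrating by parts, and extracting imaginary parts. Your $\overline{S}$-pairing, which yields $\lambda = E - \kappa J$ with $E,J$ real, is algebraically equivalent to the paper's single step of multiplying by $\overline{\omega S}$ and taking imaginary parts. The interesting point is that your final identity and the paper's printed one \emph{disagree}: the paper's displayed formula has the term $+a^2|\omega|^2\sin^2\theta$ inside a manifestly positive bracket, which would give the proposition unconditionally with no further work, whereas your identity, translated back to $\theta$, carries that term with a \emph{minus} sign (your $a^2(1-J)|\omega|^2$ sits on the dangerous side). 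Your sign is the correct one: writing $a^2\omega^2\cos^2\theta = a^2\omega^2 - a^2\omega^2\sin^2\theta$ and using $\text{Im}\left(\omega^2\overline{\omega}\right) = |\omega|^2\omega_I$, the term $-a^2\omega^2\sin^2\theta\, S$ paired against $\overline{\omega S}$ contributes $-a^2|\omega|^2\omega_I\sin^2\theta|S|^2$. The manifestly positive version of the identity,
\[\omega_I\int_0^{\pi}\left(\left|\frac{dS}{d\theta}\right|^2 + \left(\frac{m^2}{\sin^2\theta} + a^2\left(|\omega|^2+\mu^2\right)\cos^2\theta\right)|S|^2\right)\sin\theta\, d\theta = -\int_0^{\pi}\text{Im}\left(\lambda\overline{\omega}\right)|S|^2\sin\theta\, d\theta,\]
controls $\text{Im}\left(\lambda\overline{\omega}\right)$, not $\text{Im}\left(\left(\lambda+a^2\omega^2\right)\overline{\omega}\right)$, and passing between the two costs exactly $a^2|\omega|^2\omega_I$, producing your $-\sin^2\theta$ weight. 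So the paper's one-line proof as printed contains a sign slip, and your extra positivity step is genuinely needed: the energy identity alone does not yield the unconditional statement.

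That said, your closing step does not prove the proposition \emph{as stated}. The statement is an unconditional implication for any $\omega$ with $\omega_I>0$ for which the eigenvalue branch exists, and the curves of Proposition \ref{eigCurves} are defined near every real $\kappa_0$, not only near the superradiant threshold; so ``$\omega$ stays near $\frac{am}{2Mr_+}$'' is a hypothesis you are importing from the application, not a property of the domain of $\lambda(\kappa)$. What you have actually established is the conditional version: $\omega_I > 0$ together with $a^2|\omega|^2 \leq m^2$ implies the desired sign (slightly more sharply, $a^2(1-J)^2|\omega|^2 < m^2$ suffices, since Cauchy--Schwarz gives $\int_{-1}^1\frac{|S|^2}{1-x^2}dx \geq (1-J)^{-1}$ and hence $E \geq m^2(1-J)^{-1}$). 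Fortunately this suffices, non-circularly, everywhere the proposition is invoked: in Proposition \ref{becomeSuper} it is applied to $\omega(\epsilon) = \omega_R(\epsilon) + i\epsilon$ with $\omega_R(0) = \frac{am}{2Mr_+}$ and $\epsilon$ small, where by continuity $a^2|\omega|^2$ is close to $\left(\frac{a^2}{2Mr_+}\right)^2 m^2 < m^2$, using $a^2 \leq M^2 < 2Mr_+$ (and $r_+ > M$ does require sub-extremality, as you note). You should, however, state explicitly that you are proving this near-threshold (equivalently $a|\omega| \leq |m|$) variant rather than the unconditional claim, and flag that the unconditional statement would need a separate argument for large $a|\omega|/|m|$.
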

\begin{proof}Multiplying the angular ODE by $\overline{\omega S}$, integrating by parts, and taking imaginary parts gives
\[\omega_I\int_0^{\pi}\left(\left|\frac{dS}{d\theta}\right|^2 + \left(\frac{m^2}{\sin^2\theta} + a^2\left|\omega\right|^2\sin^2\theta + a^2\mu^2\cos^2\theta\right)\left|S\right|^2\right)\sin\theta d\theta  = \] \[-\int_0^{\pi}\text{Im}\left(\left(\lambda+a^2\omega^2\right)\overline{\omega}\right)\left|S\right|^2\sin\theta d\theta.\]
\end{proof}
\begin{prop}\label{eigBound3}When $\omega$ is real, we have
\[\frac{\partial\lambda}{\partial\mu} > 0.\]
\end{prop}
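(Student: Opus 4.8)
The plan is to run a first-order eigenvalue perturbation (Feynman--Hellmann) argument, exactly parallel to the computation in Proposition \ref{eigProp}. Since $\omega$ is real, the parameter $\kappa = a^2(\omega^2-\mu^2)$ is real, and the angular problem is a regular Sturm--Liouville problem: the operator $L[S] := \frac{1}{\sin\theta}\frac{d}{d\theta}\left(\sin\theta\frac{dS}{d\theta}\right) - \frac{m^2}{\sin^2\theta}S + a^2(\omega^2-\mu^2)\cos^2\theta\, S$ is self-adjoint with respect to the inner product $\langle f,g\rangle := \int_0^{\pi} f\overline{g}\sin\theta\,d\theta$, the eigenvalue $\lambda$ is real, and $S$ may be taken real. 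The differentiability of the eigenpair $(\lambda,S)$ in $\mu$ is supplied by Proposition \ref{eigCurves} together with the smooth (indeed real-analytic) dependence of $\kappa$ on $\mu$.

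First I would set $S_{\mu} := \frac{\partial S}{\partial \mu}$ and differentiate the eigenvalue equation $L[S] + \lambda S = 0$ with respect to $\mu$. Because $\frac{\partial}{\partial\mu}\!\left(a^2(\omega^2-\mu^2)\right) = -2a^2\mu$, this produces the inhomogeneous equation
\[
L[S_{\mu}] + \lambda S_{\mu} = \left(2a^2\mu\cos^2\theta - \frac{\partial\lambda}{\partial\mu}\right)S.
\]
Next I would pair this with $\overline{S}$ in $\langle\cdot,\cdot\rangle$ and integrate by parts. By self-adjointness of $L$ and reality of $\lambda$, the left-hand side contributes $\langle S_{\mu},\, L[S]+\lambda S\rangle = 0$, leaving
\[
0 = \int_0^{\pi}\left(2a^2\mu\cos^2\theta - \frac{\partial\lambda}{\partial\mu}\right)|S|^2\sin\theta\,d\theta,
\]
and hence the explicit formula
\[
\frac{\partial\lambda}{\partial\mu} = \frac{2a^2\mu\displaystyle\int_0^{\pi}\cos^2\theta\,|S|^2\sin\theta\,d\theta}{\displaystyle\int_0^{\pi}|S|^2\sin\theta\,d\theta}.
\]
Since $a\neq 0$, $\mu > 0$, and the numerator integrand is non-negative and not identically zero (the eigenfunction $S$ cannot vanish on the whole interval, so $\cos^2\theta\,|S|^2$ has positive integral), we conclude $\frac{\partial\lambda}{\partial\mu} > 0$.

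The one point requiring care, and the main obstacle, is justifying that the double integration by parts produces no boundary terms at $\theta = 0,\pi$, i.e.\ that $S_{\mu}$ inherits the regularity (eigenfunction) boundary conditions. This is handled exactly as in Proposition \ref{eigProp}, where the analogous regularity of $S_{\epsilon}$ at $\theta = 0,\pi$ was verified: differentiating the normalized asymptotic expansion $S \sim (1\pm x)^{|m|}$ (in the variable $x=\cos\theta$, cf.\ Proposition \ref{eigCurves}) in $\mu$ shows that $S_{\mu}$ again decays like $(1\pm x)^{|m|}$, which with $m\neq 0$ kills the boundary contributions. Everything else is a routine self-adjointness computation.
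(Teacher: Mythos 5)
Your proposal is correct and follows essentially the same route as the paper: differentiate the angular ODE in $\mu$ to get the inhomogeneous equation for $S_{\mu}$, pair with $\overline{S}$, integrate by parts without boundary terms (using regularity of $S_{\mu}$ at $\theta=0,\pi$, just as for $S_{\epsilon}$ in Proposition \ref{eigProp}), and read off $\frac{\partial\lambda}{\partial\mu}$ from the resulting integral identity. You merely make explicit some details the paper leaves implicit (self-adjointness, the boundary-term check, and the positivity of $\int_0^{\pi}\cos^2\theta\,|S|^2\sin\theta\,d\theta$ together with $a\neq 0$, $\mu>0$), which is fine.
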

\begin{proof}Let
\[S_{\mu} := \frac{\partial S}{\partial\mu}.\]
We have
\[\frac{1}{\sin\theta}\frac{d}{d\theta}\left(\sin\theta\frac{dS_{\mu}}{d\theta}\right) - \left(\frac{m^2}{\sin^2\theta} - a^2(\omega^2-\mu^2)\cos^2\theta\right)S_{\mu} + \lambda S_{\mu} = \]
\[\left(2a^2\mu\cos^2\theta - \frac{\partial\lambda}{\partial\mu}\right)S.\]
Multiplying the equation by $\overline{S}$, integrating by parts, and taking the real part gives
\[\int_{0}^{\pi}\left(2a^2\mu\cos^2\theta - \frac{\partial\lambda}{\partial\mu}\right)\left|S\right|^2\sin\theta d\theta = 0.\]
\end{proof}
\section{Local Theory for the Radial ODE}\label{loc}
\subsection{The Horizon}\label{localAnalysisHorizon}
Let's apply the theory from appendix \ref{localAnalysis} to the radial ODE. Recall that we earlier set
\[\xi = \frac{i(am-2Mr_+\omega)}{r_+-r_-}.\]
First we consider the case where $am-2Mr_+\omega \neq 0$. In this case the indicial equation has two distinct roots which do not differ by an integer. Hence a local basis of solutions to the radial ODE around $r_+$ will be given by
\[\left\{(r-r_+)^{\xi}\rho_1(r),(r-r_+)^{-\xi}\rho_2(r)\right\}\]
where each $\rho_i(r)$ is holomorphic near $r_+$ and is normalized to have $\rho_i(r_+) = 1$. Our mode analysis from section \ref{boundaryConditions} showed that a mode solution must be of the form $A(r-r_+)^{\xi}\rho_1(r)$ for some $A \in \mathbb{C}$. Hence, for every $\omega$ and $\mu$ so that $\lambda$ is defined, we have a unique solution to the radial ODE of the form
\begin{equation}\label{localSolutions}
(r-r_+)^{\xi}\rho(r,\omega,\mu)
\end{equation}
where $\rho(r,\omega,\mu)$ is analytic in $r$, holomorphic in $\omega$, analytic in $\mu$, and $\rho(r_+,\omega,\mu) = 1$. Let us remark that if a mode solution with real $\omega$ and $am-2Mr_+\omega \neq 0$ vanishes at $r_+$, it must vanish identically.

Now let's consider what happens if $am-2Mr_+\omega = 0$. In this case the indicial equation has a double root at $\alpha = 0$ and lemma \ref{regularSingLocal} only produces one solution near $r_+$. One must then consider solutions which have a logarithmic singularity at $r_+$. The standard theory (see \cite{n1}) then implies that a local basis of solutions is given by
\[\left\{\varphi_1(r),\log(r-r_+)\varphi_2(r) + \varphi_3(r)\right\}\]
where the $\varphi_i$ are all holomorphic near $r_+$, $\varphi_1(r_+) = 1$, $\varphi_2(r_+) = 1$, and $\varphi_3(r_+) = 0$. It will be important to note that lemma \ref{regularSingLocal} implies that $\varphi_1$ is embedded in the family of local solutions \ref{localSolutions}.

Lastly, it will be useful for the bound state analysis to note that everything said in this section so far applies verbatim to the equation
\[\Delta\frac{d}{dr}\left(\Delta\frac{dR}{dr}\right) - V_{\mu}R - \nu\Delta R = 0\text{ for }\nu \in \mathbb{R}.\]
\subsection{Infinity}
The local existence theorems quoted in this section can be found in chapter 7 of \cite{n1}. Let us note that the radial ODE can be written as
\[\frac{dR^2}{dr^2} + \frac{\partial_r\Delta}{\Delta}\frac{dR}{dr} - \frac{V_{\mu}}{\Delta^2}R = 0\Rightarrow \]
\[\frac{dR^2}{dr^2} + \left(\frac{2}{r} + O(r^{-2})\right)\frac{dR}{dr} + \left((\omega^2-\mu^2) + \frac{2M(2\omega^2-\mu^2)}{r} + O(r^{-2})\right)R = 0.\]
Let's write $\omega = \omega_R + i\omega_I$. We will need to construct a local basis at infinity that depends holomorphically on $\omega$ and analytically on $\mu$.
\begin{lemm}For all $\omega$ and $\mu$ with $\mu^2 - \omega^2 \not\in (-\infty,0]$ there is a unique $\rho_1(r,\omega,\mu)$ which solves the radial ODE and satisfies
\[\rho_2(r,\omega,\mu) = e^{-\sqrt{\mu^2-\omega^2}r}r^{-1 - \frac{M(2\omega^2-\mu^2)}{\sqrt{\mu^2-\omega^2}}} + O\left(e^{-\sqrt{\mu^2-\omega^2}r}r^{-2 - \frac{M(2\omega^2-\mu^2)}{\sqrt{\mu^2-\omega^2}}}\right).\]
Furthermore, $\rho_2$ depends holomorphically on $\omega$ and $\mu$. The square root is defined by making a branch cut along the negative real numbers.
\end{lemm}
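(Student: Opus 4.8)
The plan is to treat the point at infinity as a rank-one irregular singular point and to construct $\rho_1$ as the unique \emph{recessive} (subdominant) solution there, following the existence theory for irregular singularities in chapter 7 of \cite{n1}. First I would put the equation in Liouville--Green normal form: since $p=\Delta'/\Delta$ has $\int p\,dr = \log\Delta$, the substitution $R = \Delta^{-1/2}u$ (equivalently $u=\Delta^{1/2}R$) removes the first-order term and yields $u'' + \tilde q\,u = 0$ with $\tilde q = q - \tfrac14 p^2 - \tfrac12 p'$. A short computation shows $\tfrac14 p^2 + \tfrac12 p' = O(r^{-3})$, so $\tilde q(r) = -(\mu^2-\omega^2) + \frac{2M(2\omega^2-\mu^2)}{r} + O(r^{-2})$, where the $O(r^{-2})$ coefficient is analytic in $1/r$ and holomorphic in $(\omega,\mu)$. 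Writing $k := \sqrt{\mu^2-\omega^2}$ with the branch cut along $(-\infty,0]$, the hypothesis $\mu^2-\omega^2\notin(-\infty,0]$ guarantees $\mathrm{Re}(k) > 0$, so that $e^{-kr}$ is genuinely recessive, and makes both $k$ and $1/k$ holomorphic on the parameter domain.

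Second, I would peel off the full formal leading behaviour before setting up any integral equation. The two formal exponentials solve $\lambda^2 = k^2$, and matching the $r^{-1}$ coefficient of $\tilde q$ fixes the accompanying powers. Concretely I take the reference function $u_0(r) := e^{-kr}r^{\sigma}$ with $\sigma := M(2\omega^2-\mu^2)/k$, chosen precisely so that substituting $u = u_0\zeta$ cancels the non-integrable $1/r$ term: one checks that $u_0'' + \tilde q u_0 = u_0\cdot O(r^{-2})$, and hence $\zeta$ satisfies $\zeta'' + 2(u_0'/u_0)\zeta' + O(r^{-2})\zeta = 0$, whose perturbation is now integrable at infinity. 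Transforming back through $R = \Delta^{-1/2}u$, together with $\Delta^{-1/2}\sim r^{-1}$, reproduces the stated power of $r$. This reduction is the essential preparatory step: without absorbing the $1/r$ term into $u_0$, the naive Volterra kernel would have a non-integrable tail.

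Third, I would realise $\zeta$ as the solution of a Volterra integral equation obtained by variation of parameters against the homogeneous solutions $1$ and $\int^r e^{2kt}t^{-2\sigma}\,dt$ of $\zeta'' + 2(u_0'/u_0)\zeta' = 0$, with the integration running inward from $+\infty$ so that the recessive branch is selected. Because the remaining perturbation is $O(t^{-2})$ and $\mathrm{Re}(k)>0$ controls the kernel, the successive approximations converge uniformly for large $r$ and produce $\zeta = 1 + O(r^{-1})$, giving the claimed asymptotics with its error term. Uniqueness is then automatic: a second solution with the same leading behaviour differs from $\rho_1$ by $o(e^{-kr}r^{\rho})$, and since the solution space is spanned by the recessive $e^{-kr}r^{\rho}$ and the dominant $e^{+kr}$ branches, this forces both coefficients, and hence the difference, to vanish.

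Finally, for the holomorphic dependence I would argue exactly as at the end of the proof of Lemma \ref{regularSingLocal}: the inhomogeneity and kernel of the Volterra equation depend holomorphically on $(\omega,\mu)$ throughout the cut domain, and the successive approximations converge uniformly on compact subsets of that domain, so $\rho_1$ is a locally uniform limit of holomorphic functions and is therefore itself holomorphic in $\omega$ and $\mu$. The main obstacle I anticipate is precisely the irregular (rather than regular) nature of the singularity at infinity, which places it outside the scope of Lemma \ref{regularSingLocal}: one must correctly identify $u_0$ so as to neutralise the non-integrable $1/r$ term, and must orient the Volterra contour toward $+\infty$ both to single out the recessive solution and to obtain convergent iterates with control uniform in the parameters.
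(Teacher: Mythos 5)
Your proposal is correct and is, at heart, the same argument as the paper's: both are the classical construction of the recessive solution at a rank-one irregular singular point from chapter 7 of \cite{n1} --- peel off (an approximation of) the formal solution, convert the remainder to a Volterra integral equation with the integration running inward from $r=\infty$ so that the recessive branch is selected, solve by successive approximations, and obtain holomorphy in $(\omega,\mu)$ as a locally uniform limit of holomorphic iterates. Where you genuinely differ is in how the non-integrable $1/r$ tail is neutralized. The paper does not transform the equation at all: it keeps the first-order term, subtracts the $n$-term truncation $L_n$ of the \emph{full} formal series, and inverts only the constant-coefficient operator $\frac{d^2}{dr^2}+(\omega^2-\mu^2)$ via the explicit kernel $K(r,t)$; the leftover $O(1/t)$ terms (from $\partial_r\Delta/\Delta$ and from $\frac{V_\mu}{\Delta^2}+\omega^2-\mu^2$) are then handled inside the iteration, each integration gaining only a factor comparable to $\bigl(n+\frac{M(2\omega^2-\mu^2)}{\sqrt{\mu^2-\omega^2}}\bigr)^{-1}$, so convergence requires taking $n$ sufficiently large. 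Your version --- Liouville transformation $u=\Delta^{1/2}R$ followed by the exact one-term peeling $u=u_0\zeta$ with $u_0=e^{-kr}r^{\sigma}$ --- reduces the perturbation to an integrable $O(t^{-2})$ term, so the iteration converges with no auxiliary large parameter; the price is a variable-coefficient comparison equation $\zeta''+2(u_0'/u_0)\zeta'=0$, whose kernel bound you correctly tie to $\mathrm{Re}\sqrt{\mu^2-\omega^2}>0$. Both routes are sound, and your uniqueness and holomorphy arguments match the paper's.

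One caveat you should be aware of: your computation yields the exponent $-1+\sigma$ for the decaying branch, with $\sigma=\frac{M(2\omega^2-\mu^2)}{\sqrt{\mu^2-\omega^2}}$, and this is in fact the correct sign --- setting $u=rR$ reduces the far-field equation to $u''+\left(-k^2+\frac{c}{r}\right)u=0$ with $c=2k\sigma$, whose recessive (Whittaker-type) solution is $e^{-kr}r^{+c/(2k)}$, so $R\sim e^{-kr}r^{-1+\sigma}$. Your claim that this ``reproduces the stated power'' is therefore not literally accurate: the lemma as printed carries $-1-\sigma$, and the two power corrections in section \ref{growModes} are likewise interchanged between the growing and decaying solutions. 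This is a harmless sign slip in the paper --- none of the later arguments use the powers, and in the computation of $Q_T(\infty)$ the polynomial factors cancel in the cross terms either way --- but you should present your exponent as correcting, rather than matching, the displayed one.
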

\begin{proof}One can more or less extract a proof of this from the discussion of irregular singularities in Chapter 7 section 2 of \cite{n1}. For the sake of completeness we will give the needed slight extension. We let $C$ denote a sufficiently large constant which can be taken holomorphic in $\mu$ and $\omega$. One may find a formal solution to the radial ODE of the form
\[L(r,\omega,\mu) := e^{-\sqrt{\mu^2-\omega^2}r}r^{-1-\frac{M(2\omega^2-\mu^2)}{\sqrt{\mu^2-\omega^2}}}\sum_{j=0}^{\infty}\frac{a_j(\omega,\mu)}{z^j}\]
where $a_0 = 1$ and the $a_j$ are holomorphic in $\omega$ and $\mu$. See Chapter 7 section 1 of \cite{n1} for the computations behind this. Let's set
\[L_n(r,\omega,\mu) := e^{-\sqrt{\mu^2-\omega^2}r}r^{-1-\frac{M(2\omega^2-\mu^2)}{\sqrt{\mu^2-\omega^2}}}\sum_{j=0}^{n-1}\frac{a_j(\omega,\mu)}{z^j}.\]
Then
\[\frac{d^2L_n}{dr^2} + \frac{\partial_r\Delta}{\Delta}\frac{dL_n}{dr} - \frac{V_{\mu}}{\Delta^2}L_n =  e^{-\sqrt{\mu^2-\omega^2}r}r^{-1-\frac{M(2\omega^2-\mu^2)}{\sqrt{\mu^2-\omega^2}}}B_n(r,\omega,\mu) \]
where $B_n(r,\omega,\mu) \leq Cr^{-n-1}$.
Let's look for a solution $\rho_2$ of the form
\[\rho_2(r,\omega,\mu) = L_n(r,\omega,\mu) + \epsilon(r,\omega,\mu).\]
We must have
\[\frac{d^2\epsilon}{dr^2} + \frac{\partial_r\Delta}{\Delta}\frac{d\epsilon}{dr} - \frac{V_{\mu}}{\Delta^2}\epsilon = - e^{-\sqrt{\mu^2-\omega^2}r}r^{-1-\frac{M(2\omega^2-\mu^2)}{\sqrt{\mu^2-\omega^2}}}B_n \Leftrightarrow \]
\[\frac{d^2\epsilon}{dr^2} + \left(\omega^2-\mu^2\right)\epsilon =\]
\[- e^{-\sqrt{\mu^2-\omega^2}r}r^{-1-\frac{M(2\omega^2-\mu^2)}{\sqrt{\mu^2-\omega^2}}}B_n - \frac{\partial_r\Delta}{\Delta}\frac{d\epsilon}{dr} + \left(\frac{V_{\mu}}{\Delta^2} + \left(\omega^2-\mu^2\right)\right)\epsilon.\]
Let's set
\[K(r,t) := \frac{e^{\sqrt{\mu^2-\omega^2}(r-t)}-e^{-\sqrt{\mu^2-\omega^2}(r-t)}}{2\sqrt{\mu^2-\omega^2}}.\]
Variation of parameters gives
\[\epsilon(r,\omega,\mu) = \]
\[\int_r^{\infty}K(r,t)\left( e^{-\sqrt{\mu^2-\omega^2}t}t^{-1-\frac{M(2\omega^2-\mu^2)}{\sqrt{\mu^2-\omega^2}}}B_n(t) - \left(\frac{V_{\mu}(t)}{\Delta^2(t)}+\omega^2-\mu^2\right)\epsilon(t) + \frac{\partial_t\Delta(t)}{\Delta(t)}\frac{d\epsilon}{dr}(t)\right)dt.\]
We may solve this by iterating in the usual fashion. Set $h_0(r,\omega,\mu) = 0$ and
\[h_{j+1}(r,\omega,\mu) = \]
\[\int_r^{\infty}K(r,t)\left( e^{-\sqrt{\mu^2-\omega^2}t}t^{-1-\frac{M(2\omega^2-\mu^2)}{\sqrt{\mu^2-\omega^2}}}B_n - \left(\frac{V_{\mu}(t)}{\Delta^2(t)}+\omega^2-\mu^2\right)h_j(t) + \frac{\partial_t\Delta(t)}{\Delta(t)}\frac{dh_j}{dr}(t)\right)dt.\]
It is easy to see that
\[\left|h_1(r,\omega,\mu)\right| + \left|\frac{dh_1}{dr}(r,\omega,\mu)\right| \leq \]
\[\frac{C e^{-\sqrt{\mu^2-\omega^2}r}r^{-1-\frac{M(2\omega^2-\mu^2)}{\sqrt{\mu^2-\omega^2}}}}{r^n}\left(n+\frac{M(2\omega^2-\mu^2)}{\sqrt{\mu^2-\omega^2}}\right)^{-1}.\]
Then, with induction one can show that
\[\left|h_{j+1}-h_j\right|(r,\omega,\mu) + \left|\frac{dh_{j+1}}{dr} - \frac{dh_j}{dr}\right|(r,\omega,\mu) \leq \]
\[\frac{C^je^{-\sqrt{\mu^2-\omega^2}r}r^{-1-\frac{M(2\omega^2-\mu^2)}{\sqrt{\mu^2-\omega^2}}}}{r^n}\left(n + \frac{M(2\omega^2-\mu^2)}{\sqrt{\mu^2-\omega^2}}\right)^{-j}.\]
For $\omega$ and $\mu$ in a sufficiently small compact set and sufficiently large $n$, the $h_j(r,\omega,\mu)$ will converge uniformly in $r$, $\omega$, and $\mu$.
\end{proof}
It is of course easy to pick a second holomorphic family of solutions $\rho_1(r,\omega,\mu)$ that is linearly independent of $\rho_2$. One can show (Chapter 7 of \cite{n1}) that we must then have
\[\rho_2(r,\omega,\mu) \sim e^{-\sqrt{\mu^2-\omega^2}r}r^{-1 - \frac{M(2\omega^2-\mu^2)}{\sqrt{\mu^2-\omega^2}}},\]
\[\rho_1(r,\omega,\mu) \sim e^{\sqrt{\mu^2-\omega^2}r}r^{-1 + \frac{M(2\omega^2-\mu^2)}{\sqrt{\mu^2-\omega^2}}}.\]
Lastly, we note that a similar discussion can be carried out for the equation
\[\Delta\frac{d}{dr}\left(\Delta\frac{dR}{dr}\right) - V_{\mu}R + \nu\Delta R = 0.\]
\subsection{Reflection and Transmission Coefficients}
Let's fix some set of frequency parameters with $\mu^2 - \omega_R^2 \not\in (-\infty,0]$. Above we constructed $\rho(r,\omega,\mu)$ holomorphic in $\omega$ and $\mu$ so that $(r-r_+)^{\xi}\rho(r,\omega,\mu)$ gives a solution to the radial ODE with the correct boundary condition at $r_+$. We can then introduce reflection and transmission coefficients $A(\omega,\mu)$ and $B(\omega,\mu)$:
\[R(r,\omega,\mu) := (r-r_+)^{\xi}\rho(r,\omega,\mu) = A(\omega,\mu)\rho_1(r,\omega,\mu) + B(\omega,\mu)\rho_2(r,\omega,\mu).\]
Let $W(\cdot,\cdot)$ denote the Wronskian. Then
\[A = \frac{W(R,\rho_2)}{W(\rho_1,\rho_2)}.\]
Thus $A$ is holomorphic in $\omega$ and analytic $\mu$. Similarly, $B$ is holomorphic in $\omega$ and analytic in $\mu$.

\end{document}